\documentclass[11pt]{article}
\usepackage[ruled,vlined,linesnumbered]{algorithm2e}
\usepackage{amsfonts}
\usepackage{amssymb}
\usepackage{amstext}
\usepackage{amsmath}
\usepackage{amsthm}
\usepackage{authblk}
\usepackage{bbm}
\usepackage{enumitem}
\usepackage{geometry}
\usepackage{mathrsfs}
\usepackage{tikz}
\usepackage{float}
\usepackage{accents}
\usepackage[backref=page]{hyperref}
\usepackage{cleveref}
\usepackage{caption}
\usepackage{microtype}
\usepackage[colorinlistoftodos,prependcaption]{todonotes}
\usetikzlibrary{arrows.meta}

\newtheorem{theorem}{Theorem}[section]
\newtheorem{lemma}[theorem]{Lemma}
\newtheorem{corollary}[theorem]{Corollary}

\newtheorem{claim}[theorem]{Claim}
\theoremstyle{definition}
\newtheorem{definition}[theorem]{Definition}
\newtheorem{example}[theorem]{Example}

\newcommand{\R}{\mathbb R}
\newcommand{\Z}{\mathbb Z}
\newcommand{\N}{\mathbb N}

\newcommand{\1}{\mathbbm 1}
\newcommand{\abs}[1]{\ensuremath{\left|#1\right|}}
\newcommand{\size}[1]{\ensuremath{\left|#1\right|}}
\newcommand{\norm}[1]{\ensuremath{\left\|#1\right\|}}

\newcommand{\ole}[1]{\accentset{\leftarrow}{#1}}
\newcommand{\set}[1]{\ensuremath{\left\{#1\right\}}}
\newcommand{\pr}[1]{\ensuremath{\left(#1\right)}}

\DeclareMathOperator*{\argmax}{arg\,max}
\DeclareMathOperator*{\argmin}{arg\,min}
\DeclareMathOperator{\subto}{s.t.}

\DeclareMathOperator{\sgn}{sgn}

\DeclareMathOperator{\conv}{conv}
\DeclareMathOperator{\dom}{dom}

\DeclareMathOperator{\pred}{pred}

\tikzstyle{vertex}=[circle,draw,align=center]
\tikzstyle{node}=[circle,draw,fill=white!100,inner sep=0pt,minimum size=0.15cm,align=center]
\tikzstyle{weight} = [font=\small, black]
\tikzstyle{edge} = [draw,-]
\tikzstyle{selected edge} = [draw,line width=3pt,-,gray!70]
\tikzstyle{matched edge} = [draw,line width=3pt,-]
\tikzstyle{dashed edge} = [draw,dashed]
\tikzstyle{selected vertex} = [node, fill=gray!70, text=white]

\tikzstyle{remove selected vertex} = [vertex, fill=white!100]
\tikzstyle{blue colored edge} = [draw,line width=3pt,-,blue!80]
\tikzstyle{blue colored curved edge left} = [draw,line width=3pt,-,blue!80,bend left]
\tikzstyle{blue colored curved edge right} = [draw,line width=3pt,-,blue!80,bend right]
\tikzstyle{green colored edge} = [draw,line width=3pt,-,green!80]
\tikzstyle{remove selected edge} = [draw,line width=4pt,-,white!100]
\tikzstyle{remove matched edge} = [draw,line width=4pt,-,white!80,bend right]

\SetArgSty{textup}
\DontPrintSemicolon
\SetKw{Continue}{continue}
\SetKwInOut{Input}{input}
\SetKwInOut{Output}{output}

\usepackage[bibliography=common]{apxproof}
\newtheoremrep{apxtheorem}[theorem]{Theorem}
\newtheoremrep{apxlemma}[theorem]{Lemma}

\hypersetup{
    colorlinks,
    linkcolor={red!80!black},
    citecolor={blue!50!black},
    urlcolor={blue!80!black}
}

\geometry{letterpaper, portrait, margin=1in}

\usepackage{setspace}
\newcounter{mycomment}

\title{An Accelerated Newton--Dinkelbach Method and its Application to Two Variables Per Inequality Systems\thanks{This project has received funding from the European Research Council (ERC) under the European Union's Horizon 2020 research and innovation programme (grant agreement nos.~757481--ScaleOpt and 805241--QIP).}}

\author[1]{Daniel Dadush}
\author[2]{Zhuan Khye Koh} 
\author[2]{Bento Natura}
\author[2]{L\'{a}szl\'{o} A. V\'{e}gh}

\affil[1]{Centrum Wiskunde \& Informatica, Netherlands.}
\affil[2]{Department of Mathematics, London School of Economics, UK.}

\date{\tt{dadush@cwi.nl,$\{$z.koh3,b.natura,l.vegh$\}$@lse.ac.uk}}

\bibliographystyle{abbrv}

\begin{document}

\maketitle

\begin{abstract}
We present an accelerated, or `look-ahead' version of the Newton--Dinkelbach method, a well-known technique for solving fractional and parametric optimization problems. 
This acceleration halves the Bregman divergence between the current iterate and the optimal solution within every two iterations. 
Using the Bregman divergence as a potential in conjunction with combinatorial arguments, we obtain strongly polynomial algorithms in three applications domains:
{\em (i)} For linear fractional combinatorial optimization, we show a convergence bound of $O(m\log m)$ iterations; the previous best bound was $O(m^2\log m)$ by Wang et al.~(2006).
{\em (ii)} We obtain a strongly polynomial label-correcting algorithm for solving  linear feasibility systems with two variables per inequality (2VPI).
For a 2VPI system with $n$ variables and $m$ constraints, our algorithm runs in $O(mn)$ iterations.
Every iteration takes $O(mn)$ time for general 2VPI systems, and 
$O(m + n\log n)$ time for the special case of deterministic Markov Decision Processes (DMDPs).
This extends and strengthens a previous result by Madani~(2002) that showed a weakly polynomial bound for a variant of the Newton–Dinkelbach
method for solving DMDPs.
{\em (iii)} We give a simplified variant of the parametric submodular function minimization result by Goemans et al.~(2017).
\end{abstract}

\section{Introduction}
Linear fractional optimization problems are well-studied in combinatorial optimization. Given a closed domain $\mathcal{D}\subseteq \R^m$ and $c,d\in \R^m$ such that $d^\top x> 0$ for all $x\in\mathcal{D}$, the problem is
\begin{equation}\label{prob:minratio-D}
\inf c^\top x/d^\top x\quad \mbox{s.t. }x\in\mathcal{D}\, .
\end{equation}
The domain $\mathcal{D}$ could be either a convex set or a discrete set $\mathcal{D}\subseteq \{0,1\}^m$. Classical examples include finding minimum cost-to-time ratio cycles and minimum ratio spanning trees. 
One can equivalently formulate \eqref{prob:minratio-D} as a parametric search problem. Let
\begin{equation}\label{prob:f-D}
f(\delta)=\inf\{(c - \delta d)^\top x:\, x\in\mathcal{D}\}\, ,
\end{equation}
be a concave and decreasing function. Assuming \eqref{prob:minratio-D} has a finite optimum $\delta$, it corresponds to the unique root $f(\delta)=0$.

A natural question is to investigate how the computational complexity of solving the minimum ratio problem \eqref{prob:minratio-D} may depend on the complexity of the corresponding linear optimization problem $\min c^\top x$ s.t.~$x\in\mathcal{D}$. Using the reformulation \eqref{prob:f-D}, one can reduce the fractional problem to the linear problem via binary search; however, the number of iterations needed to find an exact solution may depend on the bit complexity of the  input. A particularly interesting question is: assuming there exists a strongly polynomial algorithm for linear optimization over a domain $\mathcal{D}$, can we find a strongly polynomial algorithm for linear fractional optimization over the same domain?

A seminal paper by Megiddo \cite{journals/mor/Megiddo79} introduced the \emph{parametric search} technique to solve linear fractional combinatorial optimization problems. 
He showed that if the linear optimization algorithm only uses $p(m)$ additions and $q(m)$ comparisons, then there exists an $O(p(m)(p(m)+q(m))$ algorithm for the linear fractional optimization problem. This in particular yielded the first strongly polynomial algorithm for the minimum cost-to-time ratio cycle problem.
On a very high level, parametric search works by simulating the linear optimization algorithm for the parametric problem \eqref{prob:f-D}, with the parameter $\delta\in \R$ being indeterminate. 

A natural alternative approach is to solve \eqref{prob:f-D} using a standard root finding algorithm. Radzik \cite{conf/focs/Radzik92} showed that for a discrete domain $\mathcal{D}\subseteq \{0,1\}^m$, the \emph{discrete Newton} method---in this context,
also known as \emph{Dinkelbach's method}~\cite{journals/mansci/Dinkelbach67}---terminates in a strongly polynomial number of iterations. In contrast to parametric search, there are no restrictions on the possible operations in the linear optimization algorithm. In certain settings, such as the maximum ratio cut problem, the discrete Newton method outperforms parametric search; we refer to the comprehensive survey by Radzik \cite{book/hco/Radzik98} for details and comparison of the two methods.

\subsection{Our contributions}
We introduce a new, \emph{accelerated variant of Newton's method for univariate functions}. Let $f:\R\to\R\cup\set{-\infty}$ be a concave function. Under some mild assumptions on $f$, our goal is to either find the largest root, or show that no root exists. Let $\delta^*$ denote the largest root, or in case $f<0$, let $\delta^*$ denote the largest maximizer of $f$.
For simplicity, we now describe the method for differentiable functions.
This will not hold in general:  functions of the form \eqref{prob:f-D} will be piecewise linear if $\mathcal{D}$ is finite or polyhedral. The algorithm description in
Section~\ref{sec:newton} uses a form with supergradients (that can be choosen arbitrarily between the left and right derivatives).

 The standard Newton method, also used by Radzik, proceeds through iterates $\delta^{(1)}>\delta^{(2)}>\ldots>\delta^{(t)}$ such that $f(\delta^{(i)})\le 0$, and updates $\delta^{(i+1)}=\delta^{(i)}-f(\delta^{(i)})/f'(\delta^{(i)})$.

Our new variant uses a more aggressive \emph{`look-ahead'} technique. At each iteration, we compute $\delta= \delta^{(i)}-f(\delta^{(i)})/ f'(\delta^{(i)})$, and jump ahead to $\delta'=2\delta-\delta^{(i)}$. In case $f(\delta')\le 0$ and $f'(\delta')<0$, we update $\delta^{(i+1)}=\delta'$; otherwise, we continue with the standard iterate $\delta$.

This modification leads to an improved and at the same time simplified analysis based on the \emph{Bregman divergence} $D_f(\delta^*,\delta^{(i)})=f(\delta^{(i)})-f(\delta^*)+ f'(\delta^{(i)}) (\delta^*-\delta^{(i)})$. We show that this decreases by a factor of two between any two iterations. 

A salient feature of the algorithm is that it handles both feasible and infeasible outcomes in a unified framework. 
In the context of linear fractional optimization, this means that the assumption $d^{\top}x>0$ for all $x\in \mathcal{D}$ in \eqref{prob:minratio-D} can be waived.
Instead, $d^{\top}x>0$ is now added as a feasibility constraint to \eqref{prob:minratio-D}.
This generalization  is important when we use the algorithm to solve two variables per inequality systems.

\medskip

This general result leads to improvements and simplifications of a number of algorithms using the discrete Newton method.
\begin{itemize}
	\item For \emph{linear fractional combinatorial optimization}, namely the setting \eqref{prob:minratio-D} with $\mathcal{D}\subseteq\{0,1\}^m$, we obtain an
	$O(m\log m)$ bound on the number of iterations, a factor $m$ improvement over the previous best bound $O(m^2\log m)$ by Wang et al.~\cite{journals/jgo/WangYZ06} from 2006. We remark that Radzik's first analysis \cite{conf/focs/Radzik92} yielded a bound of $O(m^4\log^2 m)$ iterations, improved to $O(m^2 \log^2 m)$ in \cite{book/hco/Radzik98}.
	\item Goemans et al.~\cite{conf/ipco/GoemansGJ17} used the discrete Newton method 
to obtain a strongly polynomial algorithm for parametric submodular function minimization. We give a simple new variant of this result with the same asymptotic running time, using the accelerated algorithm.
	\item For \emph{two variable per inequality (2VPI)} systems, we obtain a \emph{strongly polynomial label-correcting algorithm}. This will be discussed in more detail next.
\end{itemize}

\subsection{Two variables per inequality systems}\label{sec:tvpi-intro}
A major open question in the theory of linear programming (LP) is whether there exists a strongly polynomial algorithm for LP.
This problem is one of Smale's eighteen mathematical challenges for the twenty-first century \cite{journals/mi/Smale98}.
An LP algorithm is \emph{strongly polynomial} if it only uses  elementary arithmetic operations ($+,-,\times,/$) and comparisons, and the number of such operations  is polynomially bounded in the number of variables and constraints.
Furthermore, the algorithm needs to be in PSPACE, i.e.~the numbers occurring in the computations must remain polynomially bounded in the input size.

The notion of a strongly polynomial algorithm was formally introduced by Megiddo \cite{journals/siamcomp/Megiddo83} in 1983 (using the term \emph{`genuinely polynomial'}), where he gave the first such algorithm for \emph{two variables per inequality (2VPI)} systems. These are feasibility LPs where every inequality contains at most two variables. More formally, let $\mathcal{M}_2(n,m)$ be the set of $n\times m$ matrices with at most two nonzero entries per column. A 2VPI system is of the form  $A^{\top}y\leq c$ for $A\in \mathcal{M}_2(n,m)$ and $c\in \R^m$.

If we further require that every inequality has at most one positive and at most one negative coefficient, it is called a \emph{monotone two variables per inequality} (M2VPI) system. 
A simple and efficient reduction is known from 2VPI systems with $n$ variables and $m$ inequalities to M2VPI systems with $2n$ variables and $\le 2m$ inequalities \cite{journals/tcs/EdelsbrunnerRW89,journals/mp/HochbaumMNT93} (sketch in Appendix \ref{subsec: 2VPI_reduction}).

\paragraph{Connection between 2VPI and parametric optimization}
An M2VPI system has a natural graphical interpretation: after normalization, we can assume every constraint is of the form $y_u-\gamma_e y_v\le c_e$.
Such a constraint naturally maps to an arc $e=(u,v)$ with \emph{gain factor} $\gamma_e$ and cost $c_e$. 
Based on Shostak's work \cite{journals/jacm/Shostak81} that characterized feasibility in terms of this graph, Aspvall and Shiloach \cite{journals/siamcomp/AspvallS80} gave the 
first weakly polynomial algorithm for M2VPI systems.

We say that a directed cycle $C$ is \emph{flow absorbing} if $\prod_{e\in C}\gamma_e<1$ and \emph{flow generating} if $\prod_{e\in C}\gamma_e>1$. Every flow absorbing cycle $C$ implies an upper bound for every variable $y_u$ incident to $C$; similarly, flow generating cycles imply lower bounds. The crux of Aspvall and Shiloach's algorithm is to find the tightest upper and lower bounds for each variable $y_u$. 

Finding these bounds corresponds to solving fractional optimization problems of the form \eqref{prob:minratio-D}, where $\mathcal{D}\subseteq \R^m$ describes `generalized flows' around cycles. The paper \cite{journals/siamcomp/AspvallS80} introduced the \emph{Grapevine} algorithm---a natural modification the Bellman-Ford algorithm---to decide whether the optimum ratio is smaller or larger than a fixed value $\delta$. The optimum value can found using binary search on the parameter.

Megiddo's strongly polynomial algorithm \cite{journals/siamcomp/Megiddo83}  replaced the binary search framework in Aspvall and Shiloach's algorithm by extending the parametric search technique in \cite{journals/mor/Megiddo79}.
Subsequently, Cohen and Megiddo \cite{journals/siamcomp/CohenM94} devised faster strongly polynomial algorithms for the problem.
The current fastest strongly polynomial algorithm is given by Hochbaum and Naor \cite{journals/siamcomp/HochbaumN94}, an efficient Fourier--Motzkin elimination with running time of $O(mn^2\log m)$. 

\paragraph{2VPI via Newton's method} Since Newton's method proved to be an efficient and viable alternative to parametric search, a natural question is to see whether it can solve the parametric problems occuring in 2VPI systems.  Radzik's fractional combinatorial optimization results \cite{conf/focs/Radzik92,book/hco/Radzik98} are not directly applicable, since the domain $\mathcal{D}$ in this setting is a polyhedron and not a discrete set.\footnote{The problem could be alternatively formulated with $\mathcal{D}\subseteq \{0,1\}^m$ but with nonlinear functions instead of $c^\top x$ and $d^\top x$.} Madani \cite{conf/aaai/Madani02} used a variant of the Newton--Dinkelbach method as a tool to analyze the convergence of policy iteration on \emph{deterministic Markov Decision Processes (DMDPs)}, a special class of M2VPI systems (discussed later in more detail). He obtained a weakly polynomial convergence bound; it remained open whether such an algorithm could be strongly polynomial.

\paragraph{Our 2VPI algorithm} 
We introduce a new type of strongly polynomial 2VPI algorithm by 
combining the accelerated Newton--Dinkelbach method with a \emph{`variable fixing'} analysis.
Variable fixing was first introduced in the seminal work of Tardos \cite{journals/combinatorica/Tardos85} on minimum-cost flows, and has been a central idea of 
 strongly polynomial algorithms, see in particular \cite{journals/jacm/GoldbergT89,journals/algorithmica/RadzikG94} for cycle cancelling minimum-cost flow algorithms, and \cite{OV20,journals/mor/Vegh17} for maximum generalized flows, a dual to the 2VPI problem.
 
  We show that for every iterate $\delta^{(i)}$,  there is a constraint that has been `actively used' at $\delta^{(i)}$ but will not be used ever again after a strongly polynomial number of iterations. The analysis combines the decay in \emph{Bregman-divergence} shown in the general accelerated Newton--Dinkelbach analysis with a combinatorial \emph{`subpath monotonicity'} property. 

 Our overall algorithm can be seen as an extension of Madani's DMDP algorithm. In particular, we adapt his `unfreezing' idea: the variables $y_u$ are admitted to the system one-by-one, and the accelerated Newton--Dinkelbach method is used to find the best `cycle bound' attainable at the newly admitted $y_u$ in the graph induced by the current variable set.
This returns a feasible solution or reports infeasibility within $O(m)$ iterations.
As every iteration takes $O(mn)$ time, our overall algorithm terminates in $O(m^2n^2)$ time. 
For the special setting of deterministic MDPs, the runtime per iteration improves to $O(m + n\log n)$, giving a total runtime of $O(mn(m+n\log n))$.

Even though our running time bound is worse than the state-of-the-art 2VPI algorithm \cite{journals/siamcomp/HochbaumN94}, it is of a very different nature from all previous 2VPI algorithms.
 In fact, our algorithm is a \emph{label correcting algorithm}, naturally fitting to the family of algorithms used in other 
combinatorial optimization problems  
with constraint matrices from $\mathcal{M}_2(n,m)$ such as 
maximum flow, shortest paths, minimum-cost flow, and generalized flow problems. 
We next elaborate  on this connection.

\paragraph{Label-correcting algorithms} An important special case of M2VPI systems corresponds to the  shortest paths problem: given a directed graph $G=(V,E)$ with target node $t\in V$ and arc costs $c\in\R^E$, we associate constraints $y_u-y_v\le c_{e}$ for every arc $e=(u,v)\in E$ and $y_t=0$. If the system is feasible and bounded, the pointwise maximal solution 
 corresponds to the shortest path labels to $t$; an infeasible system contains a negative cost cycle. A generic label-correcting algorithm maintains distance labels $y$ that are upper bounds on the shortest path distances to $t$. The labels are decreased according to violated constraints. Namely, if  $y_u-y_v>c_{e}$, then decreasing $y_u$ to $c_{e}+y_v$ gives a smaller valid distance label at $u$. We terminate with the shortest path labels once all constraints are satisfied.
The Bellman--Ford algorithm for the  shortest paths problem is a particular implementation of the generic label-correcting algorithm; we refer the reader to \cite[Chapter 5]{books/amo} for more details.

It is a natural question if label-correcting algorithms can be extended to general M2VPI systems, where constraints are of the form $y_u-\gamma_{e} y_v\le c_{e}$ for  `gain/loss factors' $\gamma_{e} \in \R_{>0}$ associated with each arc. A fundamental property of M2VPI systems is that, whenever bounded, a unique pointwise maximal solution exists, i.e.~a feasible solution $y^*$ such that $y\le y^*$ for every feasible solution $y$.
A label-correcting algorithm for such a setting can be naturally defined as follows. Let us assume that the problem is bounded. %
The algorithm should proceed via a decreasing sequence $y^{(0)}\ge y^{(1)}\ge\ldots\ge y^{(k)}$ of labels that are all valid upper bounds on any feasible solution $y$ to the system. The algorithm either terminates with the unique pointwise maximal solution $y^{(k)}=y^*$, or finds an infeasibility certificate.

The basic label-correcting operation is the `arc update', decreasing $y_u$ to $\min\{y_u,c_{e}+\gamma_{e}y_v\}$ for some arc $e=(u,v)\in E$. Such updates suffice in the shortest path setting. However, in the general setting arc operations only  may not lead to  finite termination. Consider a system with only two variables, $y_u$ and $y_v$, and two constraints, $y_u-y_v\le 0$, and $y_v-\frac{1}{2}y_u\le -1$. The alternating sequence of arc updates  converges to  $(y^*_u,y^*_v)=(-2,-2)$, but does not finitely terminate. In this example, we can `detect' the cycle formed by the two arcs, that implies the bound $y_u-\frac{1}{2}y_u\le -1$.

Shostak's~\cite{journals/jacm/Shostak81} result demonstrates that arc updates, together with such `cycle updates' should be sufficient for finite termination. 
Our M2VPI algorithm amounts to the first strongly polynomial label-correcting algorithm for general M2VPI systems, using arc updates and cycle updates.

\paragraph{Deterministic Markov decision processes}
A well-studied special case of M2VPI systems in which $\gamma\leq \1$ is known as \emph{deterministic Markov decision process} (DMDP). A \emph{policy} corresponds to selecting an outgoing arc from every node, and the objective is to find a policy that minimizes the total discounted cost over an infinite time horizon.
 The pointwise maximal solution of this system corresponds to the optimal values of a policy.

The standard policy iteration, value iteration, and simplex algorithms can be all interpreted as variants of the label-correcting framework.\footnote{The value sequence may violate monotonicity in certain cases of value iteration. %
} 
Value iteration can be seen as a generalization of the Bellman--Ford algorithm to the DMDP setting.
As our previous example shows, value iteration may not be finite. 
One could still consider as the termination criterion the point where value iteration `reveals' the optimal policy, i.e.~updates are only performed using constraints that are tight in the optimal solution. 
If each discount factor $\gamma_{uv}$ is at most $\gamma'$ for some $\gamma'>0$, then it is well-known that value iteration converges at the rate $1/(1-\gamma')$. 
This is in fact true more generally, for nondeterministic MDPs. 
However, if the discount factors can be arbitrarily close to 1, then Feinberg and Huang \cite{journals/orl/FeinbergH14} showed that value iteration cannot reveal the optimal policy in strongly polynomial time even for DMDPs. 
Post and Ye \cite{journals/mor/PostY15} proved that simplex with the highest gain pivoting rule is strongly polynomial for DMDPs; this was later improved by Hansen et al.~\cite{conf/soda/HansenKZ14}.
These papers heavily relies on the assumption $\gamma\le \1$, and does not seem to extend to general M2VPI systems.

Madani's previously mentioned work \cite{conf/aaai/Madani02} used a variant of the Newton--Dinkelbach method as a tool to analyze the convergence of policy iteration on deterministic MDPs, and derived 
a weakly polynomial runtime bound.

\paragraph{Paper organization} 
We start by giving preliminaries and introducing notation in Section \ref{sec:preliminaries}.
In Section \ref{sec:newton}, we present an accelerated Newton's method for univariate concave functions, and apply it to linear fractional combinatorial optimization and linear fractional programming.
Section \ref{sec:2vpi} contains our main application of the method to the 2VPI problem.
Our results on parametric submodular function minimization are in Section \ref{sec:sfm}.
Missing proofs can be found in the appendix.

\section{Preliminaries}
\label{sec:preliminaries}

Let $\R_+$ and $\R_{++}$ be the nonnegative and positive reals respectively, and denote $\bar{\R} := \R\cup\set{\pm \infty}$. 
Given a proper concave function $f:\R\rightarrow \bar{\R}$, let $\dom(f):=\set{x:-\infty<f(x)<\infty}$ be the effective domain of $f$. 
For a point $x_0\in \dom(f)$, denote the set of supergradients of $f$ at $x_0$ as $\partial f(x_0):=\set{g:f(x) \leq f(x_0) + g(x-x_0) \;\forall x\in \R}$.
If $x_0$ is in the interior of $\dom(f)$, then $\partial f(x_0)=[f'_-(x_0),f'_+(x_0)]$, where $f'_-(x_0)$ and $f'_+(x_0)$ are the left and right derivatives.
Throughout, we use $\log(x)=\log_2(x)$ to indicate base 2 logarithm. 
For $x,y\in\R^m$, we let $x\circ y\in \R^m$ denote 
the element-wise product of the two vectors.

\section{An Accelerated Newton--Dinkelbach Method}
\label{sec:newton}

Let $f:\R\rightarrow \bar{\R}$ be a proper concave function such that $f(\delta)\leq 0$ and $\partial f(\delta)\cap \R_{<0}\neq \emptyset$ for some $\delta\in \dom(f)$.
Given a suitable starting point, as well as value and supergradient oracles of $f$, the Newton--Dinkelbach method either computes the largest root of $f$ or declares that it does not have a root.
In this paper, we make the mild assumption that $f$ has a root or attains its maximum.
Consequently, the point
\[\delta^* := \max (\set{\delta:f(\delta) = 0}\cup \argmax f(\delta))\]
is well-defined.
It is the largest root of $f$ if $f$ has a root.
Otherwise, it is the largest maximizer of $f$.
Therefore, the Newton--Dinkelbach method returns $\delta^*$ if $f$ has a root, and certifies that $f(\delta^*)<0$ otherwise.

The algorithm takes as input an initial point $\delta^{(1)}\in \dom(f)$ and a supergradient $g^{(1)}\in \partial f(\delta^{(1)})$ such that $f(\delta^{(1)})\leq 0$ and $g^{(1)}<0$.
At the start of every iteration $i\geq 1$, it maintains a point $\delta^{(i)}\in \dom(f)$ and a supergradient $g^{(i)}\in \partial f(\delta^{(i)})$ where $f(\delta^{(i)})\leq 0$.
If $f(\delta^{(i)}) = 0$, then it returns $\delta^{(i)}$ as the largest root of $f$.
Otherwise, a new point $\delta := \delta^{(i)} - f(\delta^{(i)})/g^{(i)}$ is generated.
Now, there are two scenarios in which the algorithm terminates and reports that $f$ does not have a root: (1) $f(\delta) = -\infty$; (2) $f(\delta)<0$ and $g\geq 0$ where $g\in \partial f(\delta)$ is the supergradient given by the oracle.
If both scenarios do not apply, the next point and supergradient is set to $\delta^{(i+1)} := \delta$ and $g^{(i+1)} := g$ respectively. 
Then, a new iteration begins. 

According to this update rule, observe that $g^{(i)}<0$ except possibly in the final iteration when $f(\delta^{(i)}) = 0$.
This proves the correctness of the algorithm.
Indeed, $\delta^{(i)} = \delta^*$ if $f(\delta^{(i)}) = 0$.
On the other hand, if either of the aforementioned scenarios apply, then combining it with $f(\delta^{(i)})<0$ and $g^{(i)}<0$ certifies that $f(\delta^*) < 0$.

The following lemma shows that $\delta^{(i)}$ is monotonically decreasing while $f(\delta^{(i)})$ is monotonically increasing.
Furthermore, $g^{(i)}$ is monotonically increasing except in the final iteration where it may remain unchanged.
The lemma also illustrates the useful property that $|f(\delta^{(i)})|$ or $|g^{(i)}|$ decreases geometrically.
These are well-known facts and similar statements can be found in e.g.~Radzik \cite[Lemmas 3.1 \& 3.2]{book/hco/Radzik98}. 

\begin{apxlemmarep}\label{lem:monotone}
For every iteration $i \geq 2$, we have $\delta^* \leq \delta^{(i)} < \delta^{(i-1)}$, $f(\delta^*) \geq f(\delta^{(i)}) > f(\delta^{(i-1)})$ and $g^{(i)} \geq g^{(i-1)}$, where the last inequality holds at equality if and only if $g^{(i)} = \inf_{g \in \partial f(\delta^{(i)})} g$, $g^{(i-1)} = \sup_{g \in \partial f(\delta^{(i-1)})} g$ and $f(\delta^{(i)}) = 0$. Moreover,
\[
\frac{f(\delta^{(i)})}{f(\delta^{(i-1)})} + \frac{g^{(i)}}{g^{(i-1)}} \leq 1\, .
\]
\end{apxlemmarep}

\begin{proof}
Since $f(\delta^{(i)}) \leq 0$ and $g^{(i)} < 0$, by concavity of $f$ we have
that $f(\delta) \leq f(\delta^{(i)}) + g^{(i)}(\delta-\delta^{(i)}) <
f(\delta^{(i)}) \leq 0$, for all $\delta > \delta^{(i)}$. Given this, we must have
$\delta^* \leq \delta^{(i)}$ since either $f(\delta^*) = 0 \geq
f(\delta^{(i)})$ or $0 > f(\delta^*) = \max_{z \in \R} f(z) \geq
f(\delta^{(i)})$. As $\delta^{(i)} = \delta^{(i-1)} -
\frac{f(\delta^{(i-1)})}{g^{(i-1)}} < \delta^{(i-1)}$, since
$f(\delta^{(i-1)}), g^{(i-1)} < 0$, we have $f(\delta^{(i-1)}) <
f(\delta^{(i)})$. Furthermore, $g^{(i)} \geq g^{(i-1)}$ is immediate from the
concavity of $f$.

To understand when $g^{(i)} = g^{(i-1)}$, we see by concavity that
\[
g^{(i)} \geq \inf_{g \in \partial f(\delta^{(i)})} g \geq 
\frac{f(\delta^{(i-1)})-f(\delta^{(i)})}{\delta^{(i-1)}-\delta^{(i)}}\geq
\sup_{g \in \partial f(\delta^{(i-1)})} g \geq g^{(i-1)}. 
\]
To have equality throughout, we must therefore have that $g^{(i)}$ and $g^{(i-1)}$
are equal to the respective infimum and supremum.
We must also have $f(\delta^{(i)}) = 0$ since 
\[
\frac{f(\delta^{(i-1)})-f(\delta^{(i)})}{\delta^{(i-1)}-\delta^{(i)}} 
= \frac{f(\delta^{(i-1)})-f(\delta^{(i)})}{\frac{f(\delta^{(i-1)})}{g^{(i-1)}}} = g^{(i-1)}\left(1-\frac{f(\delta^{(i)})}{f(\delta^{(i-1)})}\right)  
\]
To have equality throughout, we must therefore have that $g^{(i)}$ and $g^{(i-1)}$
are equal to the respective infimum and supremum and that $f(\delta^{(i)}) =
0$. 

Lastly, since $f$ is concave
\[
f(\delta^{(i-1)}) \leq f(\delta^{(i)}) + g^{(i)}(\delta^{(i-1)}-\delta^{(i)}) = f(\delta^{(i)}) + g^{(i)}\frac{f(\delta^{(i-1)})}{g^{(i-1)}}.\]
The moreover now follows by dividing both sides by $f(\delta^{(i-1)})< 0$.
\end{proof}

Our analysis of the Newton--Dinkelbach method utilizes the Bregman divergence associated with $f$ as a potential. 
Even though the original definition requires $f$ to be differentiable and strictly concave, it can be naturally extended to our setting in the following way.

\begin{definition}
Given a proper concave function $f:\R\rightarrow \bar{\R}$, the \emph{Bregman divergence associated with $f$} is defined as
\[D_f(\delta', \delta) := \begin{cases}
	f(\delta) + \displaystyle \sup_ {g\in \partial f(\delta)}g (\delta'-\delta) - f(\delta') &\text{ if }\delta \neq \delta',\\
	0 &\text{ otherwise.}
\end{cases}\] 
for all $\delta,\delta'\in \dom(f)$ such that $\partial f(\delta) \neq \emptyset$.
\end{definition}

Since $f$ is concave, the Bregman divergence is nonnegative.
The next lemma shows that $D_f(\delta^*,\delta^{(i)})$ is monotonically decreasing except in the final iteration where it may remain unchanged.

\begin{apxlemmarep}\label{lem:bregman_monotone}
For every iteration $i\geq 2$, we have $D_f(\delta^*,\delta^{(i)}) \leq D_f(\delta^*,\delta^{(i-1)})$ which holds at equality if and only if $g^{(i-1)} = \inf_{g \in \partial f(\delta^{(i-1)})} g$ and $f(\delta^{(i)}) = 0$.
\end{apxlemmarep}

\begin{proof}
By Lemma \ref{lem:monotone}, we know that $\delta^*\leq \delta^{(i)} <
\delta^{(i-1)}$ and $0 \geq f(\delta^{(i)}) > f(\delta^{(i-1)})$. 
Hence,
\begin{align*}
	D_f(\delta^*,\delta^{(i-1)}) &= f(\delta^{(i-1)}) + \sup_{g\in \partial f(\delta^{(i-1)})} g(\delta^* - \delta^{(i-1)}) - f(\delta^*) \\
                                   &\geq f(\delta^{(i-1)}) + g^{(i-1)}(\delta^{(i)}-\delta^{(i-1)}) + g^{(i-1)}(\delta^{*}-\delta^{(i)}) - f(\delta^*) \\ 
                                   &= 0 + g^{(i-1)}(\delta^{*}-\delta^{(i)}) - f(\delta^*) \\
                                   &\geq f(\delta^{(i)}) + g^{(i-1)}(\delta^{*}-\delta^{(i)}) - f(\delta^*) \quad \left(\text{ by concavity of $f$ }\right)\\ 
&\geq f(\delta^{(i)}) + \sup_{g \in \partial f(\delta^{(i)})} g(\delta^{*}-\delta^{(i)}) - f(\delta^*) \\
&= D_f(\delta^*,\delta^{(i)}).
\end{align*}
For the equality condition, note that the first two inequalities hold at
equality precisely when $g^{(i-1)} = \inf_{g \in \partial f(\delta^{(i-1)})}
g$ and $f(\delta^{(i)}) = 0$. If $f(\delta^{(i)})=0$, then $\delta^{(i)} =
\delta^{*}$, and hence the third inequality holds at equality as well.
\end{proof}

To accelerate this classical method, we perform an aggressive guess $\delta'=2\delta-\delta^{(i)}<\delta$ on the next point at the end of every iteration $i$.
We call this procedure \emph{look-ahead}, which is implemented on Lines \ref{line:begin_look_ahead}--\ref{line:end_look_ahead} of Algorithm \ref{alg:newton_fast}.
Let $g'\in \partial f(\delta')$ be the supergradient returned by the oracle.
If $-\infty < f(\delta')< 0$ and $g'<0$, then the next point and supergradient are set to $\delta^{(i+1)} := \delta'$ and $g^{(i+1)} := g'$ respectively as $\delta'\geq\delta^*$.
In this case, we say that look-ahead is \emph{successful} in iteration $i$.
Otherwise, we proceed as usual by taking $\delta^{(i+1)} := \delta$ and $g^{(i+1)} := g$.
It is easy to verify that Lemmas \ref{lem:monotone} and \ref{lem:bregman_monotone} also hold for Algorithm \ref{alg:newton_fast}.

\begin{algorithm}[htb!]
	\caption{\sc Look-aheadNewton}
	\label{alg:newton_fast}

	\Input{Value and supergradient oracles for a proper concave function $f$, an initial point $\delta^{(1)}\in \dom(f)$ and supergradient $g^{(1)} \in \partial f(\delta^{(1)})$ where $f(\delta^{(1)}) \leq 0$ and $g^{(1)}<0$.}
	\Output{The largest root of $f$ if it exists; report {\tt NO ROOT} otherwise.}
	\BlankLine

	$i\leftarrow 1$\;
	\While{$f(\delta^{(i)}) < 0$}{
		$\delta \leftarrow \delta^{(i)} - f(\delta^{(i)})/g^{(i)}$\;
		$g\in \partial f(\delta)$ \tcc*[r]{Empty if $f(\delta) = -\infty$}
		\If{$f(\delta) = -\infty$ \textbf{or} ($f(\delta)<0$ \textbf{and} $g\geq 0$)}{
			\Return {\tt NO ROOT}\;
		}
		$\delta' \leftarrow 2\delta - \delta^{(i)}$ \tcc*[r]{Look-ahead guess} \label{line:begin_look_ahead}
		$g'\in \partial f(\delta')$ \tcc*[r]{Empty if $f(\delta') = -\infty$}
		\If(\tcc*[f]{Is the guess successful?}){$-\infty<f(\delta') < 0$ \textbf{and} $g'<0$}{
			$\delta\leftarrow \delta'$, $g\leftarrow g'$ \label{line:end_look_ahead} \;
		}
		$\delta^{(i+1)}\leftarrow \delta$, $g^{(i+1)}\leftarrow g$\;	
		$i\leftarrow i + 1$\;		
	}
	\Return $\delta^{(i)}$\;
\end{algorithm}

If look-ahead is successful, then we have made significant progress.
Otherwise, by our choice of $\delta'$, we learn that we are not too far away from $\delta^*$.
The next lemma demonstrates the advantage of using the look-ahead Newton--Dinkelbach method. 
It exploits the proximity to $\delta^*$ to produce a geometric decay in the Bregman divergence of $\delta^{(i)}$ and $\delta^*$.

\begin{lemma}\label{lem:bregman_decay}
For every iteration $i > 2$ in Algorithm \ref{alg:newton_fast}, we have $D_f(\delta^*,\delta^{(i)}) < \frac12 D_f(\delta^*,\delta^{(i-2)})$.
\end{lemma}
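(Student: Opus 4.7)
The plan is to use the three-point identity for Bregman divergence,
\[
D_f(\delta^*, \delta^{(i-2)}) = D_f(\delta^{(i)}, \delta^{(i-2)}) + D_f(\delta^*, \delta^{(i)}) + (\tilde g^{(i-2)} - \tilde g^{(i)})(\delta^* - \delta^{(i)}),
\]
where $\tilde g^{(j)} := \inf \partial f(\delta^{(j)})$ is the supergradient realizing the $\sup$ in the definition of $D_f$ (valid because $\delta^* \leq \delta^{(j)}$ forces the inner product $g\cdot(\delta^* - \delta^{(j)})$ to be maximized by the smallest supergradient). Since $\tilde g^{(i-2)} \leq \tilde g^{(i)}$ by concavity and $\delta^* \leq \delta^{(i)}$ by Lemma~\ref{lem:monotone}, the last summand is non-negative, so it suffices to establish the strict inequality $D_f(\delta^{(i)}, \delta^{(i-2)}) > D_f(\delta^*, \delta^{(i)})$, which I would handle by splitting on whether look-ahead succeeded at iteration $i-1$.

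\textbf{Failure case.} If look-ahead fails at iteration $i-1$, then $\delta^{(i)}$ is the standard Newton iterate from $\delta^{(i-1)}$, and the failure certificate gives $\delta^* \geq 2\delta^{(i)} - \delta^{(i-1)}$, yielding the bracket $\delta^* \in [\delta^{(i)} - s_{i-1}, \delta^{(i)}]$ where $s_{i-1} := f(\delta^{(i-1)})/g^{(i-1)}$. Combining this bracket with $f(\delta^*) \geq f(\delta^{(i)})$ (since $\delta^*$ is either the largest root or the largest maximizer) and the concavity estimate $|\tilde g^{(i)}| \leq |g^{(i-1)}|$ yields $D_f(\delta^*, \delta^{(i)}) \leq |f(\delta^{(i-1)})|$. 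On the other hand, direct expansion using $\delta^{(i)} - \delta^{(i-2)} = -s_{i-1} - \alpha_{i-2} s_{i-2}$, where $\alpha_{i-2} \in \{1,2\}$ records whether look-ahead succeeded at iteration $i-2$, gives $D_f(\delta^{(i)}, \delta^{(i-2)}) \geq |f(\delta^{(i-1)})| + |f(\delta^{(i)})|$, strictly exceeding the previous bound.

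\textbf{Success case.} If look-ahead succeeds at iteration $i-1$, so $\delta^{(i)} = \delta^{(i-1)} - 2f(\delta^{(i-1)})/g^{(i-1)}$, the supergradient inequality at $\delta^{(i)}$ evaluated at $\delta^{(i-1)}$, after substituting the doubled step, rearranges to $(1 - 2\tilde\gamma_i) f(\delta^{(i-1)}) \leq f(\delta^{(i)})$, where $\tilde\gamma_i := \tilde g^{(i)}/g^{(i-1)}$. Since both $f$-values are strictly negative, this forces the slope contraction $\tilde\gamma_i < 1/2$, which is the key property provided by look-ahead success. Parametrizing $\delta^{(i)} - \delta^* = (\delta^{(i-1)} - \delta^*) - 2s_{i-1}$, expressing the quantity $|g^{(i-1)}|(\delta^{(i-1)} - \delta^*)$ in terms of $D_f(\delta^*, \delta^{(i-1)})$, and applying $f(\delta^*) \geq f(\delta^{(i)})$ gives, after algebra,
\[
D_f(\delta^*, \delta^{(i)}) \leq \tilde\gamma_i \bigl( D_f(\delta^*, \delta^{(i-1)}) - |f(\delta^{(i-1)})| - |f(\delta^{(i)})| \bigr) \leq \tilde\gamma_i \, D_f(\delta^*, \delta^{(i-1)}).
\]
Combining with the monotonicity $D_f(\delta^*, \delta^{(i-1)}) \leq D_f(\delta^*, \delta^{(i-2)})$ from Lemma~\ref{lem:bregman_monotone} and $\tilde\gamma_i < 1/2$ completes this case.

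The main obstacle is the success case, where $\delta^*$ is not directly localized by a look-ahead failure; one must extract the factor-of-two decay purely from the slope contraction $\tilde\gamma_i < 1/2$ by algebraically converting the doubled Newton step into a geometric contraction that relates $D_f(\delta^*, \delta^{(i)})$ to $D_f(\delta^*, \delta^{(i-1)})$.
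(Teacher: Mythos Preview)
Your proposal is essentially correct and follows the same two-case split as the paper (look-ahead success versus failure at iteration $i-1$), deriving the same key facts: the slope contraction $\tilde g^{(i)}/g^{(i-1)} \le 1/2$ in the success case, and the localization $\delta^{(i-1)}-\delta^* \ge 2(\delta^{(i)}-\delta^*)$ in the failure case. Your success-case argument is almost identical to the paper's; your failure-case argument reorganizes the same ingredients through the three-point identity, which is a cosmetic difference.

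One small inaccuracy: your displayed intermediate bound in the success case,
\[
D_f(\delta^*, \delta^{(i)}) \leq \tilde\gamma_i \bigl( D_f(\delta^*, \delta^{(i-1)}) - |f(\delta^{(i-1)})| - |f(\delta^{(i)})| \bigr),
\]
does not come out of the algebra as written; carrying out the substitution you describe (and using $g^{(i-1)}(\delta^*-\delta^{(i-1)}) \le D_f(\delta^*,\delta^{(i-1)}) + |f(\delta^{(i-1)})| + f(\delta^*)$ together with $f(\delta^*)\ge f(\delta^{(i)})$) actually yields
\[
D_f(\delta^*, \delta^{(i)}) \leq \tilde\gamma_i \bigl( D_f(\delta^*, \delta^{(i-1)}) - |f(\delta^{(i-1)})| + f(\delta^*)\bigr) \leq \tilde\gamma_i\, D_f(\delta^*, \delta^{(i-1)}),
\]
which still gives the conclusion you want since $\tilde\gamma_i < 1/2$. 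The extraneous $-|f(\delta^{(i)})|$ term is harmless for the final bound but is not what the computation produces.
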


\begin{proof}
Fix an iteration $i>2$ of Algorithm \ref{alg:newton_fast}. Let $g^{(i)}_+ =
\min_{g \in \partial f(\delta^{(i)})} g$ denote the right derivative of $f$
at $\delta^{(i)}$. From Lemma~\ref{lem:monotone}, we know that $\delta^* \leq
\delta^{(i)} < \delta^{(i-1)} < \delta^{(i-2)}$, $0 \geq f(\delta^*) \geq
f(\delta^{(i)}) > f(\delta^{(i-1)})
> f(\delta^{(i-2)})$ and $0 > g^{(i)}_+ \geq g^{(i-1)} > g^{(i-2)}$. Since
$\delta^* \leq \delta^{(i)}$, we see that
$D_f(\delta^*,\delta^{(i)}) = f(\delta^{(i)}) +
g^{(i)}_+(\delta^*-\delta^{(i)}) - f(\delta^*)$.    

Assume first that the look-ahead step in iteration $i-1$ was successful.  We
now claim that $0 < -2g_+^{(i)} \leq -g^{(i-1)}$. To see this, we have that
\begin{align*}
f(\delta^{(i-1)}) &\leq f(\delta^{(i)}) + g^{(i)}_+(\delta^{(i-1)}-\delta^{(i)}) \quad \left(\text{ by concavity of $f$ }\right) \\
&\leq g^{(i)}_+(\delta^{(i-1)}-\delta^{(i)}) \quad \left(\text{ since } f(\delta^{(i)}) \le 0 \right) \\
&= 2g^{(i)}_+ \frac{f(\delta^{(i-1)})}{g^{(i-1)}} \quad \left(\text{ by definition of the accelerated step }\right).
\end{align*}
The desired inequality follows by multiplying through by
$-\frac{g^{(i-1)}}{f(\delta^{(i-1)})} < 0$.

Using the above inequality, we compare Bregman divergences as follows:
\begin{align*}
D_f(\delta^{*},\delta^{(i-1)}) &\geq f(\delta^{(i-1)}) + g^{(i-1)}(\delta^*-\delta^{(i-1)}) - f(\delta^*) \quad \left(\text{ since $D_f$ is a maximum over supergradients }\right) \\
   &>  g^{(i-1)}(\delta^*-\delta^{(i)}) - f(\delta^{*}) \quad \left(~f(\delta^{(i-1)}) + g^{(i-1)}(\delta^{(i)}-\delta^{(i-1)}) = -f(\delta^{(i-1)}) > 0~\right) \\
    &\geq g^{(i-1)}(\delta^*-\delta^{(i)}) \quad \left(~-f(\delta^*) \geq 0~\right) \\
    &\geq 2g^{(i)}_+(\delta^*-\delta^{(i)}) \quad \left(~-g^{(i-1)} \geq -2g^{(i)}_+ \text{ and } \delta^{(i)} > \delta^{*}~\right) \\
    &\geq 2(f(\delta^{(i)}) + g^{(i)}_+(\delta^*-\delta^{(i)}) - f(\delta^*)) \quad \left(\text{ since } f(\delta^*) \geq f(\delta^{(i)})\right) \\
    &= 2D_f(\delta^{*}, \delta^{(i)}) \quad \left(\text{ by our choice of } g^{(i)}_+~\right).
\end{align*}
The desired inequality nows follows from $D_f(\delta^{*},\delta^{(i-2)}) > 
D_f(\delta^{*},\delta^{(i-1)})$ by Lemma~\ref{lem:bregman_monotone}.

Now assume that the look-ahead step at iteration $i-1$ was unsuccessful.
This implies that $2\delta^{(i)}-\delta^{(i-1)} \leq \delta^{*} \Leftrightarrow
2(\delta^{(i)}-\delta^*) \leq \delta^{(i-1)}-\delta^*$, i.e. that the look-ahead
step ``went past or exactly to'' $\delta^{*}$. We compare Bregman-divergences as follows:
\begin{align*}
D_f(\delta^{*},\delta^{(i-2)}) &\geq f(\delta^{(i-2)}) + g^{(i-2)}(\delta^*-\delta^{(i-2)}) - f(\delta^*) \quad \left(\text{ since $D_f$ is a maximum over supergradients }\right) \\
   &\geq  g^{(i-2)}(\delta^*-\delta^{(i-1)}) - f(\delta^{*}) \quad \left(~f(\delta^{(i-2)}) + g^{(i-2)}(\delta^{(i-1)}-\delta^{(i-2)}) \geq 0~\right) \\
    &\geq g^{(i-2)}(\delta^*-\delta^{(i-1)}) \quad \left(~-f(\delta^*) \geq 0~\right) \\
    &> g^{(i)}_+(\delta^*-\delta^{(i-1)}) \quad \left(~0 > g^{(i)}_+ > g^{(i-2)} \text{ and }  \delta^{(i-1)} > \delta^*~\right) \\
    &\geq 2g^{(i)}_+(\delta^*-\delta^{(i)}) \quad \left(~0 > g^{(i)}_+ \text{ and } \delta^{(i-1)}-\delta^* \geq 2(\delta^{(i)}-\delta^*)~\right) \\
    &\geq 2(f(\delta^{(i)}) + g^{(i)}_+(\delta^*-\delta^{(i)}) - f(\delta^*)) \quad \left(\text{ since } f(\delta^*) \geq f(\delta^{(i)})\right) \\
    &= 2D_f(\delta^{*}, \delta^{(i)}) \quad \left(\text{ by our choice of } g^{(i)}_+~\right).
\end{align*}
This concludes the proof.
\end{proof}

In the remaining of this section, we apply the accelerated Newton--Dinkelbach method to linear fractional combinatorial optimization and linear fractional programming.
The application to parametric submodular function minimization is in Section \ref{sec:sfm}.

\subsection{Linear Fractional Combinatorial Optimization}
The problem \eqref{prob:minratio-D} with $\mathcal{D}\subseteq \{0,1\}^m$ is known as
\emph{linear fractional combinatorial optimization}.  
Radzik \cite{conf/focs/Radzik92} showed that the Newton--Dinkelbach method applied to the function $f(\delta)$ as in \eqref{prob:f-D} terminates in a strongly polynomial number of iterations. Recall that $f(\delta) = \min_{x\in\mathcal{D}} (c-\delta d)^{\top}x$.
By the assumption $d^\top x>0$ for all $x\in \mathcal{D}$, this function is 
 concave, strictly decreasing, finite and piecewise-linear.
Hence, it has a unique root.
Moreover, $f(\delta)<0$ and $\partial f(\delta)\cap \R_{<0} \neq \emptyset$ for sufficiently large $\delta$.
To implement the value and supergradient oracles, we assume that a linear optimization oracle over $\mathcal{D}$ is available, i.e.~it returns an element in $\argmin_{x\in \mathcal{D}} (c-\delta d)^{\top}x$ for any $\delta\in \R$.

Our result for the accelerated variant improves the state-of-the-art bound $O(m^2\log m)$ by Wang et al.~\cite{journals/jgo/WangYZ06} on the standard Newton--Dinkelbach method.
We will need the following lemma, given by Radzik and credited to Goemans in \cite{book/hco/Radzik98}.
It gives a strongly polynomial bound on the length of a geometrically decreasing sequence of sums.

\begin{lemma}[\cite{book/hco/Radzik98}]\label{lem:goemans_m}
Let $c\in \R^m_+$ and $x^{(1)},x^{(2)},\dots,x^{(k)}\in \set{-1,0,1}^m$. If $0 < c^{\top}x^{(i+1)} \leq \frac{1}{2}c^{\top}x^{(i)}$ for all $i<k$, then $k = O(m\log m)$.
\end{lemma}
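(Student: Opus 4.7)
My plan is to partition the $k$ iterations into $O(m)$ consecutive ``phases'' of length $O(\log m)$ each, which immediately yields $k = O(m\log m)$.

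First I reindex the coordinates so that $c_1 \geq c_2 \geq \cdots \geq c_m \geq 0$ and set $\alpha_i := c^\top x^{(i)} > 0$, so the sequence $(\alpha_i)$ decreases geometrically with ratio at least $2$. To each $i$ I attach a scale index $\ell(i) \in \{0,1,\ldots,m\}$ recording which prefix of the sorted coordinates carries essentially all of $\alpha_i$. A natural choice is
\[
\ell(i) := \min\Bigl\{\ell \in \{0,1,\ldots,m\} : \sum_{j > \ell} c_j < \alpha_i/(2m)\Bigr\}.
\]
Because the $\alpha_i$ are strictly decreasing while the $c_j$ are fixed, $\ell(i)$ is non-decreasing in $i$; hence the iterations split into at most $m+1$ contiguous phases indexed by the value of $\ell$.

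On a single phase in which $\ell(i) \equiv \ell$ for $i \in [a,b]$, I decompose $\alpha_i = A_i + B_i$ with $A_i := \sum_{j \leq \ell} c_j x^{(i)}_j$ and $B_i := \alpha_i - A_i$, so that $A_i$ lies in the finite set $\Sigma_\ell := \{\sum_{j \leq \ell} c_j \xi_j : \xi \in \{-1,0,1\}^\ell\}$. By construction $|B_i| \leq \alpha_i/(2m)$, so $A_i$ approximates $\alpha_i$ within a multiplicative factor $1 \pm 1/(2m)$. In particular $A_i > 0$ throughout the phase, and the sequence $A_a, A_{a+1}, \ldots, A_b$ also nearly halves at each step.

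The last and hardest step is to bound the length of each phase by $O(\log m)$. Here I would exploit that each $A_i$ is determined by the restricted sign vector $x^{(i)}|_{\{1,\ldots,\ell\}} \in \{-1,0,1\}^\ell$ and lives in the discrete set $\Sigma_\ell$; as $A_i$ halves (up to $1 \pm 1/(2m)$), this sign vector must transition to yield a distinct, strictly smaller positive value in $\Sigma_\ell$, and a dedicated combinatorial counting argument should cap the number of such transitions per phase at $O(\log m)$. I expect the main obstacle to lie precisely in this combinatorial step: one must rule out that an adversarial sequence of sign patterns makes $A_i$ descend slowly through $\Sigma_\ell$, without picking up an extra factor of $m$ or $\log m$ from the approximation $A_i \approx \alpha_i$. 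Combining the per-phase $O(\log m)$ bound with the $O(m)$-phase partition then concludes the proof.
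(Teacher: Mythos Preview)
Your proposal has a genuine gap at precisely the step you flag as the ``main obstacle'': the per-phase bound of $O(\log m)$. As written, this step is not a technicality to be filled in but essentially a restatement of the whole lemma. Within a phase with $\ell(i)\equiv\ell$, you reduce to the sequence $A_i\in\Sigma_\ell=\{\sum_{j\le\ell}c_j\xi_j:\xi\in\{-1,0,1\}^\ell\}$ that near-halves at each step. Bounding the length of such a chain is exactly Lemma~\ref{lem:goemans_m} restricted to the first $\ell$ coordinates, which a priori gives only $O(\ell\log\ell)$, not $O(\log m)$. The additional phase constraint you derive, $A_i\lesssim 2m\,c_\ell$, might in principle be exploitable (it says the chain lives below a bounded multiple of the \emph{smallest} active weight), but you give no argument for why it forces the chain down to $O(\log m)$ steps; the structure of $\Sigma_\ell$ still allows tiny positive values arising from cancellations among $c_1,\dots,c_\ell$, and nothing you have said rules out a slow descent through them. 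So the decomposition into $\le m+1$ phases is fine, but the work has simply been pushed into the per-phase bound without being done.

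For comparison, the paper itself cites this lemma from Radzik and does not prove it in the main text, but it does prove a slight strengthening (Lemma~\ref{lem:goemans_n}) in the appendix, and that argument specializes to give Lemma~\ref{lem:goemans_m}. The approach is entirely different from yours and avoids any phase decomposition: one writes down the polyhedron
\[
P=\bigl\{z\ge 0:\ (x^{(i)}-2x^{(i+1)})^\top z\ge 0\ \text{for all }i<k,\ (x^{(k)})^\top z=1\bigr\},
\]
observes that a scaled copy of $c$ lies in $P$, and takes an extreme point $c'$ of $P$. By Cramer's rule and Hadamard's inequality the entries of $c'$ are at most $(3m)^m$, whence $1=(x^{(k)})^\top c'\le (x^{(1)})^\top c'/2^{\,k-1}\le m(3m)^m/2^{\,k-1}$, giving $k=O(m\log m)$. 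This argument is short and self-contained; if you want to salvage your phase approach, you would need a genuinely new idea to cap each phase at $O(\log m)$.
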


\begin{theorem}
Algorithm \ref{alg:newton_fast} converges in $O(m\log m)$ iterations for linear fractional combinatorial optimization problems. 
\end{theorem}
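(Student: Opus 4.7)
The plan is to combine the geometric Bregman-divergence decay from Lemma~\ref{lem:bregman_decay} with the Goemans-type amortization in Lemma~\ref{lem:goemans_m}. First, I would fix the following convenient choice of primal vertices: for each iteration $i$, let $x^{(i)}\in\mathcal{D}$ be a minimizer of $(c-\delta^{(i)}d)^\top x$ realizing the right derivative $g^{(i)}_+$ of $f$ at $\delta^{(i)}$ (such an $x^{(i)}$ exists by piecewise linearity of $f$). Then $f(\delta^{(i)})=(c-\delta^{(i)}d)^\top x^{(i)}$ and $g^{(i)}_+=-d^\top x^{(i)}$. Since $\delta^*\leq\delta^{(i)}$ and $f(\delta^*)=0$, a direct calculation identifies the Bregman potential with a linear functional on $\{0,1\}^m$:
\[
D_i := D_f(\delta^*, \delta^{(i)}) = f(\delta^{(i)}) + g^{(i)}_+(\delta^*-\delta^{(i)}) - f(\delta^*) = (c - \delta^* d)^\top x^{(i)} = w^\top x^{(i)},
\]
where $w := c - \delta^* d$. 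This is exactly the shape needed to feed Lemma~\ref{lem:goemans_m}.

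Next I would extract a subsequence along which Goemans' hypothesis of a factor-$\tfrac{1}{2}$ decay of the weighted difference vectors holds. Applying Lemma~\ref{lem:bregman_decay} twice gives $D_{i+4} < D_i/4$. Setting $y^{(k)} := x^{(4k-3)} - x^{(4k+1)} \in \{-1,0,1\}^m$, a short computation yields
\[
w^\top y^{(k)} = D_{4k-3} - D_{4k+1} > \tfrac{3}{4} D_{4k-3} > 0, \qquad w^\top y^{(k+1)} \leq D_{4k+1} < \tfrac{1}{4} D_{4k-3},
\]
whence $0 < w^\top y^{(k+1)} \leq \tfrac{1}{2} w^\top y^{(k)}$. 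To apply Lemma~\ref{lem:goemans_m} with the required nonnegative weights, I would sign-flip coordinates: let $\tilde{y}^{(k)}_j := \sgn(w_j)\, y^{(k)}_j$, still in $\{-1,0,1\}^m$, so that $|w|^\top \tilde{y}^{(k)} = w^\top y^{(k)}$ with $|w|\in\R^m_+$. Lemma~\ref{lem:goemans_m} then bounds the length of the sequence $\{\tilde{y}^{(k)}\}$ by $O(m\log m)$, and multiplying by the spacing of $4$ yields the claimed $O(m\log m)$ total iteration count.

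The main obstacle is choosing the right spacing for the subsequence: sampling only every two iterations would give $w^\top y^{(k+1)} < w^\top y^{(k)}$ but not the factor of $\tfrac{1}{2}$ Goemans requires, because the gaps $D_i - D_{i+2}$ can be of the same order as $D_{i+2}$ itself. Stepping ahead by four iterations, so that the $\tfrac{1}{4}$ decay in $D_i$ dominates the loss from taking differences, resolves this cleanly. The sign mismatch between the signed weight $w$ and the nonneg-weight assumption of Lemma~\ref{lem:goemans_m} is handled routinely by the coordinate-wise sign flip above, which preserves the $\{-1,0,1\}$ structure.
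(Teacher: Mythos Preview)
Your proof is correct, and the core identification $D_f(\delta^*,\delta^{(i)}) = (c-\delta^* d)^\top x^{(i)}$ together with Lemma~\ref{lem:bregman_decay} and Lemma~\ref{lem:goemans_m} is exactly the paper's approach. However, you take an unnecessary detour through the difference vectors $y^{(k)} = x^{(4k-3)} - x^{(4k+1)}$. The paper applies Lemma~\ref{lem:goemans_m} \emph{directly} to the vectors $x^{(i)}$ themselves: since $w^\top x^{(i)} = D_i$ already satisfies $0 < D_i < \tfrac{1}{2} D_{i-2}$ for $3\le i\le k$ (positivity from Lemma~\ref{lem:bregman_monotone}, halving from Lemma~\ref{lem:bregman_decay}), the odd-indexed and even-indexed subsequences of $(x^{(i)})$ each satisfy the hypotheses of Lemma~\ref{lem:goemans_m} after the same sign-flip you describe, yielding the $O(m\log m)$ bound with a spacing of $2$ rather than $4$.

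Your stated ``main obstacle''---that spacing by two would only give $w^\top y^{(k+1)} < w^\top y^{(k)}$ without the factor $\tfrac{1}{2}$---is an artifact of working with differences. It is true for the $y^{(k)}$'s, but it simply does not arise if you feed the $x^{(i)}$'s into Lemma~\ref{lem:goemans_m} directly, since $x^{(i)}\in\{0,1\}^m\subseteq\{-1,0,1\}^m$ and the halving is already on $w^\top x^{(i)}$. So the argument can be shortened considerably; your version is valid but works harder than it needs to.
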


\begin{proof}
Observe that Algorithm \ref{alg:newton_fast} terminates in a finite number of iterations because $f$ is piecewise linear.
Let $\delta^{(1)}> \delta^{(2)}> \dots>\delta^{(k)} = \delta^{*}$ denote the sequence of 
iterates at the start of Algorithm \ref{alg:newton_fast}.
Since $f$ is concave, we have $D_f(\delta^*,\delta^{(i)}) \geq 0$ for all $i \in [k]$. 
For each $i \in [k]$, pick $x^{(i)} \in \arg\min_{x\in \mathcal{D}} (c-\delta^{(i)}d)^{\top}x$ which maximizes $d^{\top}x$.
This is well-defined because $f$ is finite.
Note that $-d^{\top}x^{(i)} = \min \partial f(\delta^{(i)})$.
As $f(\delta^*)=0$, the Bregman divergence of $\delta^{(i)}$ and $\delta^*$ can be written as
\[
D_f(\delta^*,\delta^{(i)})=f(\delta^{(i)})+\max_{g\in \partial f(\delta^{(i)})}g(\delta^*-\delta^{(i)})=(c-\delta^{(i)}d)^\top x^{(i)}-d^\top x^{(i)} (\delta^*-\delta^{(i)})=(c-\delta^{*}d)^\top x^{(i)}\, .
\]
According to Lemma \ref{lem:bregman_decay},
$(c-\delta^* d)^{\top}x^{(i)} = D_f(\delta^*,\delta^{(i)}) < \frac12 D_f(\delta^*,\delta^{(i-2)}) = \frac12 (c-\delta^* d)^{\top}x^{(i-2)}$ for all $3 \leq i \leq k$. 
By Lemma \ref{lem:bregman_monotone}, we also know that $D_f(\delta^*,\delta^{(i)})> 0$ for all $1\leq i\leq k-2$.
Thus, applying Lemma \ref{lem:goemans_m} yields $k = O(m\log m)$.
\end{proof}

\subsection{Linear Fractional Programming}
We next consider \emph{linear fractional programming}, an extension of \eqref{prob:minratio-D} with the assumption that the domain $\mathcal{D}\subseteq \R^m$ is a polyhedron, but removing the condition $d^\top x>0$ for $x\in\mathcal{D}$. For $c,d\in \R^m$, the problem is
\begin{equation}
	\label{sys_F}
	\tag{F}
	\inf c^{\top}x/d^{\top}x\quad \mbox{s.t. }d^{\top}x >0,\; x\in \mathcal{D}\, .
\end{equation}
For the problem to be meaningful, we assume that $\mathcal{D}\cap \set{x:d^{\top}x>0} \neq \emptyset$.
The common form in the literature assumes $d^{\top}x >0$ for all $x\in \mathcal{D}$ as in \eqref{prob:minratio-D};  
we consider the more general setup for the purpose of solving M2VPI systems in Section \ref{sec:2vpi}.
It is easy to see that any linear fractional combinatorial optimization problem on a domain $\mathcal{X}\subseteq \{0,1\}^m$ can be cast as a linear fractional program with the polytope $\mathcal{D} = \conv(\mathcal{X})$ because $c^{\top}\bar{x}/d^{\top}\bar{x}\geq \min_{x\in \mathcal{X}}c^{\top}x/d^{\top}x$ for all $\bar{x}\in \mathcal{D}$.
The next theorem characterizes when \eqref{sys_F} is  unbounded.

\begin{theorem}\label{thm:linfrac}
If $\mathcal{D}\cap \set{x:d^{\top}x>0} \neq \emptyset$, then the optimal value of $\eqref{sys_F}$ is $-\infty$ if and only if at least one of the following two conditions hold:
\begin{enumerate}%
	\item There exists $x\in \mathcal{D}$ such that $c^{\top}x<0$ and $d^{\top}x = 0$;
	\item There exists $r\in \R^m$ such that $c^{\top}r <0$, $d^{\top}r = 0$ and $x+\lambda r\in \mathcal{D}$ for all $x\in \mathcal{D}, \lambda\geq 0$.
\end{enumerate}
\end{theorem}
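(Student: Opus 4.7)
The plan is to prove the two implications separately, with the reverse direction by direct construction and the forward direction by a Charnes--Cooper linearization.

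For the \textbf{sufficiency} direction (either condition forces the optimum to be $-\infty$), let $\bar{x}\in \mathcal{D}$ satisfy $d^\top \bar{x} > 0$, which exists by hypothesis. If (1) holds with witness $x\in \mathcal{D}$, consider $x_\lambda := (1-\lambda)x + \lambda \bar{x} \in \mathcal{D}$. Then $d^\top x_\lambda = \lambda d^\top \bar{x} > 0$ for $\lambda \in (0,1]$ while $c^\top x_\lambda \to c^\top x < 0$ as $\lambda \to 0^+$, so the ratio $c^\top x_\lambda/d^\top x_\lambda$ tends to $-\infty$. If (2) holds with direction $r$, set $x_\lambda := \bar{x} + \lambda r \in \mathcal{D}$; then $d^\top x_\lambda = d^\top \bar{x}$ is a fixed positive constant while $c^\top x_\lambda \to -\infty$ linearly, again driving the ratio to $-\infty$.

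For the \textbf{necessity} direction, I would linearize via the Charnes--Cooper transformation. Writing $\mathcal{D} = \{x : Ax \leq b\}$ and substituting $t = 1/(d^\top x)$, $y = tx$, the ratio becomes $c^\top y$ on the polyhedron
\[
P \;:=\; \{(y,t) \in \R^m \times \R \,:\, Ay - tb \leq 0,\; d^\top y = 1,\; t \geq 0\}.
\]
Each feasible point $x$ of \eqref{sys_F} corresponds via $(y,t) = (x,1)/d^\top x$ to a point of $P$ with $t>0$ and equal objective value, and conversely; by hypothesis there exists $(y_0,t_0)\in P$ with $t_0 > 0$. I claim the original optimum is $-\infty$ iff the LP $\inf\{c^\top y : (y,t)\in P\}$ is unbounded. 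One direction is immediate from the bijection. For the converse, if the LP is unbounded then standard polyhedral LP theory yields a recession direction $(y',t')$ of $P$ with $c^\top y' < 0$; the ray $(y_0+\lambda y',\,t_0+\lambda t')$, $\lambda \geq 0$, stays in $P$ and has $t$-coordinate at least $t_0 > 0$ throughout (since $t'\geq 0$), so it corresponds to feasible points of \eqref{sys_F} with objective diverging to $-\infty$.

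It then remains to classify the recession direction $(y',t')$, which satisfies $t' \geq 0$, $d^\top y' = 0$, $Ay' \leq t'b$, and $c^\top y' < 0$. If $t' = 0$ then $Ay' \leq 0$, so $y'$ is a recession direction of $\mathcal{D}$; together with $d^\top y' = 0$ and $c^\top y' < 0$ this yields condition (2) via $r := y'$. If $t' > 0$, then $x^* := y'/t' \in \mathcal{D}$ satisfies $d^\top x^* = 0$ and $c^\top x^* < 0$, yielding condition (1). The only genuine subtlety is reconciling the open inequality $d^\top x > 0$ in the original problem with the closed constraint $t \geq 0$ in $P$; this is the main (modest) obstacle, and it is handled cleanly by the translation-by-$(y_0,t_0)$ argument above, which shows that unboundedness detected in the closure is already witnessed in the open region.
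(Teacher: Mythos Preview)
Your proof is correct, but the route differs from the paper's. Both handle sufficiency the same way (convex combination with a point $\bar x$ having $d^\top\bar x>0$, or translation along $r$). For necessity, you linearize via the Charnes--Cooper substitution $t=1/(d^\top x)$, $y=tx$, observe that the fractional objective on (F) equals the linear objective $c^\top y$ on the polyhedron $P$, and then classify a recession direction $(y',t')$ of $P$ according to whether $t'=0$ (recession direction of $\mathcal{D}$, giving condition~2) or $t'>0$ (point $y'/t'\in\mathcal{D}$, giving condition~1). The paper instead applies Minkowski--Weyl directly to $\bar{\mathcal{D}}=\mathcal{D}\cap\{x:d^\top x\ge 0\}$, writing it as $\conv\{g_1,\dots,g_k\}+\cone\{h_1,\dots,h_\ell\}$, and shows that if no generator $g_i$ or $h_j$ with $d^\top(\cdot)=0$ has $c^\top(\cdot)<0$, then a mediant-type inequality bounds the ratio below by $\min\bigl\{\min_i c^\top g_i/d^\top g_i,\ \min_j c^\top h_j/d^\top h_j\bigr\}$ over the generators with positive $d$-value.

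Your approach is more systematic in that it reduces everything to standard LP duality/unboundedness facts, and it makes the dichotomy between conditions~1 and~2 emerge cleanly from the sign of the auxiliary coordinate $t'$. The paper's approach is slightly more elementary (no auxiliary variable) and yields, in the bounded case, an explicit finite lower bound on the optimal value in terms of the generators of $\bar{\mathcal{D}}$, which your argument does not directly give. One small presentational point: the ``converse'' step you include (LP unbounded $\Rightarrow$ (F) unbounded) is not actually needed for the necessity direction, since the chain (F) unbounded $\Rightarrow$ LP unbounded $\Rightarrow$ recession direction $\Rightarrow$ condition~1 or~2 already suffices; but it is correct as stated, because the (F)-objective at the point $y_\lambda/t_\lambda$ is exactly $c^\top y_\lambda$ (using $d^\top y_\lambda=1$), which does diverge.
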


\begin{proof}
	By the Minkowski-Weyl Theorem, the polyhedron $\bar{\mathcal{D}} := \mathcal{D} \cap \{x : d^\top x \ge 0\}$ can be written as 
	\begin{equation*}
		\bar {\mathcal{D}} = \left\{\sum_{i = 1}^{k} \lambda_i g_i + \sum_{j = 1}^{\ell} \nu_j h_j : \lambda \ge 0, \nu \ge 0, \|\lambda\|_1 = 1\right\}
	\end{equation*}
	for some vectors $g_1, \ldots, g_k$ and $h_1, \ldots, h_\ell$.
	Note that $d^\top g_i\ge 0$ for all $i \in [k]$ and $d^\top h_j\ge 0$ for all $j \in [\ell]$. 
	Let $x^\circ \in \mathcal{D} \cap \{x : d^\top x > 0$\}. 
	If there exists $i \in [k]$ such that $c^\top g_i < 0$ and $d^\top g_i = 0$ or $j \in [\ell]$ such that $c^\top h_j < 0$ and $d^\top h_j = 0$, then, 
	\begin{equation*}
	 \lim_{\lambda \nearrow 1} \frac{c^\top(\lambda g_i + (1-\lambda)x^\circ)}{d^\top(\lambda g_i + (1-\lambda)x^\circ)} = -\infty \quad \text{or}  \quad 
	 \lim_{\lambda \rightarrow \infty} \frac{c^\top(x^\circ + \lambda h_j)}{d^\top(x^\circ + \lambda h_j)} = -\infty
	\end{equation*}
	as in Condition 1 or Condition 2.

	Otherwise, the fractional value of any element in $\mathcal{D} \cap \{x : d^\top x > 0\}$ can be lower bounded by 
	\begin{equation*}
		\begin{aligned}
		\frac{c^\top(\sum_{i = 1}^{k} \lambda_i g_i + \sum_{j = 1}^{\ell} \nu_j h_j)}{d^\top(\sum_{i = 1}^{k} \lambda_i g_i + \sum_{j = 1}^{\ell} \nu_j h_j)}
		& \ge \frac{\sum_{i \in [k], d^\top g_i > 0} \lambda_i c^\top g_i + \sum_{j \in [\ell], d^\top h_j > 0} \nu_j c^\top h_j}
		{\sum_{i \in [k], d^\top g_i > 0} \lambda_i d^\top g_i + \sum_{j \in [\ell], d^\top h_j > 0} \nu_j d^\top h_j} \\
		& \ge \min\left\{\min_{i \in [k], d^\top g_i > 0} \frac{c^\top g_i}{d^\top g_i}, \min_{j \in [\ell], d^\top h_j > 0} \frac{c^\top h_j}{d^\top h_j}\right\},
		\end{aligned}
	\end{equation*}
	where the last expression is finite by the assumption that $\mathcal{D} \cap \{x : d^\top x > 0\}$ is non-empty.
\end{proof}

\begin{example}
Unlike in linear programming, the optimal value may not be attained even if it is finite.
Consider the instance given by $\inf (-x_1 + x_2)/(x_1 + x_2)$ subject to $x_1 + x_2 >0$ and $-x_1 + x_2 = 1$.
The numerator is equal to 1 for any feasible solution, while the denominator can be made arbitrarily large.
Hence, the optimal value of this program is 0, which is not attained in the feasible region.
\end{example}

We use the Newton--Dinkelbach method for $f$ as in \eqref{prob:f-D}, that is, $f(\delta)= \inf_{x\in \mathcal{D}} (c-\delta d)^{\top}x  $.
Since $\mathcal{D}\neq \emptyset$, $f(\delta)<\infty$ for all $\delta\in \R$.
By the Minkowski--Weyl theorem, there exist finitely many points $P\subseteq \mathcal{D}$ such that $f(\delta) = \min_{x\in P}(c-\delta d)^{\top}x$ for all $\delta\in \dom(f)$.
Hence, $f$ is concave and piecewise linear.
Observe that $f(\delta)>-\infty$ if and only if every ray $r$ in the recession cone of $\mathcal{D}$ satisfies $(c-\delta d)^{\top}r\geq 0$.
For $f$ to be proper, we need to assume that Condition 2 in Theorem \ref{thm:linfrac} does not hold.
Moreover, we require the existence of a point $\delta'\in \dom(f)$ such that $f(\delta') = (c - \delta' d)^{\top}x'\leq0$ for some $x'\in \mathcal{D}$ with $d^{\top}x'>0$.
It follows that $f$ has a root or attains its maximum because $\dom(f)$ is closed. 
We are ready to characterize the optimal value of \eqref{sys_F} using $f$.

\begin{lemma}\label{lem:linfrac}
Assume that there exists $\delta'\in \dom(f)$ such that $f(\delta') = (c-\delta'd)^{\top}x'\leq 0$ for some $x'\in \mathcal{D}$ with $d^{\top}x'>0$. 
If $f$ has a root, then the optimal value of \eqref{sys_F} is equal to the largest root and is attained.
Otherwise, the optimal value is $-\infty$. 
\end{lemma}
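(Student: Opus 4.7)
The plan is to split into two cases based on whether $f$ has a root, and in each case link the root/maximizer structure of $f$ to the objective values of \eqref{sys_F} via the elementary inequality $f(\delta) \le (c - \delta d)^\top x$, valid for all $x \in \mathcal{D}$ and all $\delta \in \R$. For any $x$ feasible to \eqref{sys_F} (so $d^\top x > 0$), this rearranges to $c^\top x/d^\top x \ge \delta + f(\delta)/d^\top x$, turning information about roots and maximizers of $f$ into bounds on the fractional objective.

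For Case 1, where $f$ has a root, I would let $\delta^*$ denote the largest root and first apply the inequality at $\delta = \delta^*$ to obtain $c^\top x/d^\top x \ge \delta^*$ on every feasible $x$. For attainment I would first exploit the assumption: since $x'$ is a minimizer at $\delta'$, we have $-d^\top x' \in \partial f(\delta')$, and $-d^\top x' < 0$ forces $f$ to be strictly decreasing past $\delta'$, pinning $\delta^* \le \delta'$. If $\delta^* = \delta'$ the point $x'$ itself attains the objective value $\delta^*$. If $\delta^* < \delta'$, then $f(\delta) < 0$ strictly on $(\delta^*, \delta']$ (any zero there would contradict maximality of $\delta^*$), so piecewise linearity of $f$ forces $f'_+(\delta^*) < 0$; combined with the Minkowski--Weyl description $\partial f(\delta^*) = \conv\{-d^\top x : x \in \argmin_{y \in P}(c - \delta^* d)^\top y\}$, this yields a minimizer $x^* \in \mathcal{D}$ with $d^\top x^* > 0$, and $c^\top x^*/d^\top x^* = \delta^*$ attains the infimum.

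For Case 2, where $f$ has no root, $\delta^*$ is the largest maximizer with $f(\delta^*) < 0$ (else the maximum value would itself be a root). Since $\delta^*$ maximizes $f$ we have $0 \in \partial f(\delta^*)$, and writing $0$ as a convex combination of $\{-d^\top x : x \in \argmin_{y \in P}(c - \delta^* d)^\top y\}$ produces $\bar x \in \mathcal{D}$ with $d^\top \bar x = 0$ and $c^\top \bar x = f(\delta^*) < 0$, which is exactly Condition~1 of Theorem~\ref{thm:linfrac}. To exhibit the unboundedness directly, I would use the family $x(\lambda) := \lambda \bar x + (1-\lambda) x'$ for $\lambda \in [0,1)$, which lies in $\mathcal{D}$, has $d^\top x(\lambda) = (1-\lambda) d^\top x' > 0$, and satisfies $c^\top x(\lambda)/d^\top x(\lambda) \to -\infty$ as $\lambda \to 1$ since the numerator tends to $c^\top \bar x < 0$ while the denominator tends to $0^+$.

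The main obstacle I anticipate is the attainment step in Case 1: simply knowing that $\delta^*$ is a root is not enough to produce a minimizer with $d^\top x^* > 0$, so I must exploit that $\delta^*$ is the \emph{largest} root, together with piecewise linearity, to conclude $f'_+(\delta^*) < 0$, and then use the Minkowski--Weyl representation of the subdifferential to extract the required minimizer.
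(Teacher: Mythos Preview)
Your proposal is correct and follows essentially the same route as the paper: the same case split on whether $f$ has a root, the same lower bound $c^\top x/d^\top x \ge \delta^*$ from $f(\delta^*)\le (c-\delta^* d)^\top x$, and the same extraction of $\bar x\in\mathcal{D}$ with $d^\top\bar x=0$, $c^\top\bar x<0$ from $0\in\partial f(\delta^*)$ in the no-root case (which the paper then feeds into Theorem~\ref{thm:linfrac}, while you unroll that step with the explicit family $x(\lambda)$). The one substantive difference is that the paper simply asserts ``by our assumption on $f$, there exists $x^*\in\mathcal{D}$ with $f(\delta^*)=(c-\delta^*d)^\top x^*$ and $d^\top x^*>0$'' without further argument, whereas you supply the justification: $-d^\top x'\in\partial f(\delta')$ forces $\delta^*\le\delta'$, and when $\delta^*<\delta'$ the strict negativity of $f$ on $(\delta^*,\delta']$ gives $f'_+(\delta^*)<0$, whence a minimizer $x^*\in P$ with $d^\top x^*>0$ via the subdifferential description of the pointwise minimum. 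This is exactly the detail the paper elides, so your attainment argument is more complete but not a different idea.
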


\begin{proof}
Recall the definition of $\delta^* = \max(\set{\delta:f(\delta)=0}\cup\argmax f(\delta))$.
By our assumption on $f$, there exists $x^*\in \mathcal{D}$ such that $f(\delta^*) = (c-\delta^*d)^{\top}x^*$ and $d^{\top}x^*>0$.
If $f$ has a root, then $f(\delta^*) = 0$.
This implies that $c^{\top}x/d^{\top}x\geq \delta^* = c^{\top}x^*/d^{\top}x^*$ for all $x\in \mathcal{D}$ with $d^{\top}x>0$ as desired.
Next, assume that $f$ does not have a root. Then $f(\delta^*) < 0$ and $0 \in \partial f(\delta^*)$. By convexity, there exists $\bar x \in \mathcal D$ such that $(c - \delta ^* d)^\top \bar x = f(\delta^*) < 0$ and $d^\top \bar x = 0$. Then $c^\top \bar x < 0$, so $\bar x$ is a point as in Condition 1 of \Cref{thm:linfrac}.

\end{proof}

\section{Monotone Two Variable per Inequality Systems}
\label{sec:2vpi}
Recall that an M2VPI system can be represented as a directed multigraph $G=(V,E)$ with arc costs $c\in \R^m$ and gain factors $\gamma\in \R^m_{++}$.
For a $u$-$v$ walk $P$ in $G$ with $E(P)=(e_1,e_2,\dots,e_k)$, its \emph{cost} and \emph{gain factor} are defined as
$c(P) := \sum_{i=1}^k\pr{\prod_{j=1}^{i-1}\gamma_{e_j}}c_{e_i}$ and $\gamma(P) := \prod_{i=1}^k\gamma_{e_i}$ respectively. 
If $P$ is a single vertex, 
then $c(P) := 0$ and $\gamma(P) := 1$.  
The walk $P$ induces the valid inequality $y_u\leq c(P) + \gamma(P)y_v$, implied by the sequence of arcs/inequalities in $E(P)$.
It is also worth considering the dual interpretation. Dual variables on arcs correspond to generalized flows: if 1 unit of flow enter the arc $e=(u,v)$ at $u$, then $\gamma_e$ units reach $v$, at a shipping cost of $c_e$. Thus, if 1 unit of flow enter a path $P$, then $\gamma(P)$ units reach the end of the path, incurring a cost of $c(P)$.

Given node labels $y\in \bar{\R}^n$, the \emph{$y$-cost} of a $u$-$v$ walk $P$ is defined as $c(P) + \gamma(P)y_v$.
Note that the $y$-cost of a walk only depends on the label at the sink.
A $u$-$v$ path is called a \emph{shortest $u$-$v$ path with respect to $y$} if it has the smallest $y$-cost among all $u$-$v$ walks. 
A \emph{shortest path from $u$ with respect to $y$} is a shortest $u$-$v$ path with respect to $y$ for some node $v$.
Such a path does not always exist, as demonstrated in Appendix
\ref{subsec:existence_shortest_cycles_or_paths}.

If $P$ is a $u$-$u$ walk such that its intermediate nodes are distinct, then it is called a \emph{cycle at $u$}. 
Given a $u$-$v$ walk $P$ and a $v$-$w$ walk $Q$, we denote $PQ$ as the $u$-$w$ walk obtained by concatenating $P$ and $Q$.

\begin{definition}
A cycle $C$ is called \emph{flow-generating} if $\gamma(C)>1$, \emph{unit-gain} if $\gamma(C) = 1$, and \emph{flow-absorbing} if $\gamma(C)<1$. We say that a unit-gain cycle $C$ is \emph{negative} if $c(C)<0$.
\end{definition}

Note that $c(C)$ depends on the starting point $u$ of a cycle $C$. 
This ambiguity is resolved by using the term \emph{cycle at $u$}.
For a unit-gain cycle $C$, it is not hard to see that the starting point does not affect the sign of $c(C)$.
Hence, the definition of a negative unit-gain cycle is sound.
Observe that a flow-absorbing cycle $C$ induces an upper bound $y_u\leq c(C)/(1-\gamma(C))$, while a flow-generating cycle $C$ induces a lower bound $y_u\geq -c(C)(\gamma(C)-1)$.
Let $\mathcal{C}^{abs}_u(G)$ and $\mathcal{C}^{gen}_u(G)$ denote the set of flow-absorbing cycles and flow-generating cycles at $u$ in $G$ respectively.

\begin{definition}
Given a flow-generating cycle $C$ at $u$, a flow-absorbing cycle $D$ at $v$, and a $u$-$v$ path $P$, the walk $CPD$ is called a \emph{bicycle}. 
We say that the bicycle is \emph{negative} if
\[c(P) + \gamma(P)\frac{c(D)}{1-\gamma(D)} < \frac{-c(C)}{\gamma(C)-1}\,.\]
\end{definition}

Using these two structures, Shostak characterized the feasibility of M2VPI systems.

\begin{theorem}[\cite{journals/jacm/Shostak81}]\label{thm:feas_char}
An M2VPI system $(G,c,\gamma)$ is infeasible if and only if $G$ contains a negative unit-gain cycle or a negative bicycle.
\end{theorem}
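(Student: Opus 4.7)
The forward direction is a direct aggregation of the implied inequalities. For a unit-gain cycle $C$ at $u$, summing the constraints along $C$ with geometric weights $\prod_{j<i}\gamma_{e_j}$ telescopes on the left to $y_u - \gamma(C)y_u = 0$, yielding $0 \le c(C)$ and contradicting $c(C) < 0$. For a negative bicycle $CPD$, the same aggregation on the flow-generating cycle $C$ at $u$ gives the lower bound $y_u \ge -c(C)/(\gamma(C)-1)$ (since $1-\gamma(C)<0$), while chaining the inequalities of $P$ with the aggregation on the flow-absorbing cycle $D$ at $v$ gives the upper bound $y_u \le c(P)+\gamma(P)c(D)/(1-\gamma(D))$; the negative-bicycle definition is precisely the contradiction between these two bounds.

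For the converse I would invoke Farkas' lemma on the system $A^\top y \le c$, whose column indexed by arc $e=(u,v)$ has $+1$ in row $u$ and $-\gamma_e$ in row $v$. Infeasibility yields $x \in \R^E_+$ with $Ax = 0$ and $c^\top x < 0$. Reading the row indexed by $v$, the outflow $\sum_{e \text{ leaving } v} x_e$ equals the gain-weighted inflow $\sum_{e \text{ entering } v} \gamma_e x_e$ at every node, so $x$ is a non-negative \emph{generalized flow circulation} on $G$ of negative total cost.

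The heart of the proof is a decomposition lemma: every such $x$ can be written as $x = \sum_i \lambda_i x^{(i)}$ with $\lambda_i > 0$, where each $x^{(i)}$ is the canonical unit-injection circulation supported on either a single unit-gain cycle or a single bicycle of $G$. The cost decomposes linearly as $c^\top x = \sum_i \lambda_i \mathrm{cost}(F_i)$ with $\mathrm{cost}(C) = c(C)$ on a unit-gain cycle and $\mathrm{cost}(CPD) = (\gamma(C)-1)\bigl[c(C)/(\gamma(C)-1)+c(P)+\gamma(P)c(D)/(1-\gamma(D))\bigr]$ on a bicycle. In both cases $\mathrm{cost}(F_i)<0$ is equivalent to negativity of $F_i$ in the theorem's sense, so $c^\top x < 0$ produces the required certificate.

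The decomposition is built inductively on $|\supp(x)|$. If $\supp(x)$ contains a unit-gain cycle, subtract the largest scalar multiple of its canonical circulation, zeroing at least one arc. Otherwise, balance forces every node of $\supp(x)$ to lie on a simple cycle, so one can pick a flow-generating $C$ at some $u$ (swap roles if only a flow-absorbing cycle exists); injecting one unit of $C$-circulation leaves strict surplus $\gamma(C)-1>0$ at $u$, which I would trace forward through $\supp(x)$ until the trace first revisits a node $v$, producing a second simple cycle $D$ at $v$, and then subtract the largest admissible canonical bicycle circulation on $CPD$. The main obstacle is arguing that $D$ must be flow-absorbing, so that $CPD$ is a genuine bicycle: this follows by tracking the running surplus along the trace and observing that a flow-generating or unit-gain $D$ would leave a residual positive imbalance at $v$ that cannot be cancelled inside $\supp(x)$ once the assumption that no unit-gain cycle is available is exploited. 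With $D$ flow-absorbing, subtracting the bicycle circulation preserves non-negativity and balance and strictly shrinks $\supp(x)$, completing the induction.
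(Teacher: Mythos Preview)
The paper does not supply a proof of this theorem; it is quoted from Shostak and used as a black box (the paper separately invokes the generalized flow decomposition theorem in the proof of Lemma~\ref{lem:2vpi_infty}). Your forward direction, your Farkas setup for the converse, and your cost formula for the canonical bicycle circulation are all correct, and decomposing the Farkas certificate $x\ge 0$, $Ax=0$, $c^\top x<0$ into unit-gain cycle flows and bicycle flows is indeed the standard route.

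The gap is in your inductive decomposition step. After peeling off unit-gain cycles, you fix a flow-generating cycle $C$ at $u$, trace forward in $\supp(x)$ until the first revisited node $v$, and assert that the resulting cycle $D$ is flow-absorbing. This can fail. Take nodes $u,v,w$ with self-loops of gains $2,2,\tfrac18$ and unit-gain arcs $u\to v\to w$; a positive circulation on this support exists (loop flows $1,1,\tfrac{16}{7}$ and path flows $1,2$ satisfy conservation at every node), yet the forward trace from $u$ may close first on the self-loop at $v$, which is flow-\emph{generating}, so $CPD$ is not a bicycle. Your ``residual imbalance at $v$ cannot be cancelled'' justification does not apply, because in $x$ the surplus at $v$ \emph{is} cancelled --- downstream at $w$, which your trace never reaches. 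The robust fix is to argue via the extreme rays of the cone $\{x\ge 0:Ax=0\}$: a ray with inclusion-minimal support is forced to be a single unit-gain cycle or a single bicycle (the paper's ``fundamental flows''), and conic decomposition of $x$ into extreme rays then immediately yields a negative unit-gain cycle or a negative bicycle.
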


\subsection{A linear fractional programming formulation}

Our goal is to compute the pointwise maximal solution $y^{\max}\in \bar{\R}^n$ to an M2VPI system if it is feasible, where $y^{\max}_u := \infty$ if and only if the variable $y_u$ is unbounded from above.
It is well known how to convert $y^{\max}$ into a finite feasible solution --- we refer to Appendix \ref{sec:2vpi_finite} for details.
In order to apply Algorithm \ref{alg:newton_fast}, we first need to reformulate the problem as a linear fractional program.
Now, every coordinate $y^{\max}_u$ can be expressed as the following primal-dual pair of linear programs, where $\nabla x_v := \sum_{e\in \delta^+(u)}x_e - \sum_{e\in \delta^-(u)}\gamma_e x_e$ denotes the net flow at a node $v$.

\begin{minipage}[t]{.48\textwidth} \vspace{-3mm}
\begin{align*}\tag{P$_u$}\label{sys:2vpi_primal}
   &\min\; c^{\top}x \\
   &\subto\;\, \nabla x_u = 1 \\
   &\qquad\; \nabla x_v = 0 \qquad \forall v\in V\setminus u \\
   &\qquad\quad\;\; x \geq 0 
\end{align*}
\end{minipage}
\begin{minipage}[t]{.48\textwidth} \vspace{-3mm}
\begin{align*}\tag{D$_u$}\label{sys:2vpi_dual}
	 &\max\; y_u \\
	 &\subto\;\; y_v - \gamma_e y_w \leq c_e \qquad \forall e=(v,w)\in E \\
\end{align*}
\end{minipage} \vspace{3mm} 

The primal LP \eqref{sys:2vpi_primal} is a minimum-cost generalized flow problem with a supply of 1 at node $u$.
It asks for the cheapest way to destroy one unit of flow at $u$.
Observe that it is feasible if and only if $u$ can reach a flow-absorbing cycle in $G$.
If it is feasible, then it is unbounded if and only if there exists a negative unit-gain cycle or a negative bicycle in $G$. 
It can be reformulated as the following linear fractional program
\begin{equation}\tag{F$_u$}\label{sys:2vpi_frac}
   \inf\; \frac{c^{\top}x}{1-\sum_{e\in \delta^-(u)}\gamma_ex_e}\quad \mbox{s.t. } 1-\sum_{e\in \delta^-(u)}\gamma_e x_e > 0,\; x\in \mathcal{D}\, .
\end{equation}
with the polyhedron 
\[\mathcal{D} := \set{x\in \R^m_+: x(\delta^+(u)) = 1, \nabla x_v = 0 \;\forall v\in V\setminus u}.\]
Indeed, if $x$ is a feasible solution to $\eqref{sys:2vpi_primal}$, then $x/x(\delta^+(u))$ is a feasible solution to \eqref{sys:2vpi_frac} with the same objective value.
This is because $1-\sum_{e\in \delta^-(u)}\gamma_e x_e/x(\delta^+(u)) = 1/x(\delta^+(u))$.
Conversely, if $x$ is a feasible solution to \eqref{sys:2vpi_frac}, then $x/(1-\sum_{e\in \delta^-(u)}\gamma_e x_e)$ is a feasible solution to \eqref{sys:2vpi_primal} with the same objective value.
Even though the denominator is an affine function of $x$, it can be made linear to conform with \eqref{sys_F} by working with the polyhedron $\set{(x,1):x\in \mathcal{D}}$. 

Our goal is to solve \eqref{sys:2vpi_frac} using Algorithm \ref{alg:newton_fast}.
Due to the specific structure of this linear fractional program, a suitable initial point for the Newton--Dinkelbach method can be obtained from any feasible solution to \eqref{sys:2vpi_frac}.
This is a consequence of the unboundedness test given by the following lemma.

\begin{lemma}\label{lem:2vpi_init}
Let $x$ be a feasible solution to \eqref{sys:2vpi_frac} and $\bar{\delta} := c^{\top}x/(1-\sum_{e\in \delta^-(u)}\gamma_e x_e)$. 
If either $f(\bar{\delta}) = -\infty$ or $f(\bar{\delta}) = c^{\top}\bar{x} - \bar{\delta}(1-\sum_{e\in \delta^-(u)}\gamma_e\bar{x}_e)<0$ for some $\bar{x}\in \mathcal{D}$ with $1-\sum_{e\in \delta^-(u)}\gamma_e \bar{x}_e \leq 0$, then the optimal value of \eqref{sys:2vpi_frac} is $-\infty$. 
\end{lemma}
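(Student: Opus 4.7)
The plan is to show directly that the fractional objective can be driven to $-\infty$ by constructing a suitable sequence of feasible points in \eqref{sys:2vpi_frac}. Throughout I will use the linear functional $d$ with $d^\top x = 1 - \sum_{e \in \delta^-(u)} \gamma_e x_e$, which is affine on $\R^m$ but linear on $\mathcal{D}$ after embedding as $(x,1)$. Let me also note the starting observation that, by definition of $\bar{\delta}$, we have $d^\top x > 0$ and $c^\top x = \bar{\delta}\, d^\top x$, hence $(c - \bar{\delta} d)^\top x = 0$, so $f(\bar{\delta}) \leq 0$ in both cases.

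In Case 1 ($f(\bar{\delta}) = -\infty$), I would invoke the Minkowski--Weyl theorem: since $\mathcal{D}$ is polyhedral and $(c-\bar{\delta}d)^\top \cdot$ is unbounded below on $\mathcal{D}$, there exists a recession direction $r$ of $\mathcal{D}$ with $(c - \bar{\delta} d)^\top r < 0$, i.e. $c^\top r < \bar{\delta}\, d^\top r$. Any recession direction $r$ of $\mathcal{D}$ satisfies $r \geq 0$ and $r(\delta^+(u)) = 0$, hence $d^\top r = -\sum_{e \in \delta^-(u)} \gamma_e r_e \leq 0$. Consider $x_\lambda := x + \lambda r \in \mathcal{D}$ for $\lambda \geq 0$. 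If $d^\top r = 0$, then $c^\top r < 0$ and $d^\top x_\lambda = d^\top x > 0$, so $c^\top x_\lambda / d^\top x_\lambda \to -\infty$ as $\lambda \to \infty$. If $d^\top r < 0$, let $\lambda \nearrow d^\top x / (-d^\top r)$; then $d^\top x_\lambda \searrow 0^+$, and the numerator tends to $c^\top x - (d^\top x)(c^\top r / d^\top r)$. Dividing $c^\top r < \bar{\delta} d^\top r$ by the negative number $d^\top r$ flips the inequality to $c^\top r / d^\top r > \bar{\delta}$, so the limiting numerator is $< c^\top x - \bar{\delta} d^\top x = 0$, giving a negative limit in the numerator and $0^+$ in the denominator.

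In Case 2, the witness $\bar{x} \in \mathcal{D}$ satisfies $d^\top \bar{x} \leq 0$ and $c^\top \bar{x} < \bar{\delta}\, d^\top \bar{x}$. I will use the convex combination $x_\lambda := (1-\lambda)x + \lambda \bar{x} \in \mathcal{D}$ for $\lambda \in [0,1]$. If $d^\top \bar{x} = 0$, the hypothesis reduces to $c^\top \bar{x} < 0$; then $d^\top x_\lambda = (1-\lambda) d^\top x \searrow 0^+$ as $\lambda \nearrow 1$, while $c^\top x_\lambda \to c^\top \bar{x} < 0$. If $d^\top \bar{x} < 0$, set $\lambda^* := d^\top x / (d^\top x - d^\top \bar{x}) \in (0,1)$, so that $d^\top x_{\lambda^*} = 0$. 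As $\lambda \nearrow \lambda^*$, $d^\top x_\lambda \searrow 0^+$. The limiting numerator $c^\top x_{\lambda^*} = (1-\lambda^*)c^\top x + \lambda^* c^\top \bar{x}$ can be estimated by substituting $c^\top x = \bar{\delta}\, d^\top x$ and the strict inequality $c^\top \bar{x} < \bar{\delta}\, d^\top \bar{x}$ to yield $c^\top x_{\lambda^*} < \bar{\delta}\, d^\top x_{\lambda^*} = 0$. Hence the ratio again diverges to $-\infty$.

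The main obstacle is bookkeeping the two sign regimes for $d^\top r$ (respectively $d^\top \bar{x}$): the ``horizontal'' case where the denominator is constant while the numerator escapes, and the ``diagonal'' case where the denominator is driven to $0^+$ and one must verify that the numerator limit is strictly negative. In each instance, strict negativity is extracted from the strict inequality $(c - \bar{\delta} d)^\top \cdot < 0$ combined with $c^\top x - \bar{\delta}\, d^\top x = 0$. Alternatively, the argument can be phrased as verifying Condition 1 or Condition 2 of Theorem \ref{thm:linfrac} at the appropriate endpoint, which may be the cleanest write-up.
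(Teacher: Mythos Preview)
Your proof is correct and follows essentially the same route as the paper: both cases use the feasible point $x$ together with either a recession direction (when $f(\bar\delta)=-\infty$) or the witness $\bar x$ (when $f(\bar\delta)>-\infty$) to produce, via a convex combination or ray, a point where the denominator vanishes and the numerator is strictly negative. The only cosmetic difference is that the paper packages the conclusion by invoking Conditions~1 and~2 of Theorem~\ref{thm:linfrac}, whereas you argue the divergence of the ratio directly---which, as you note at the end, is equivalent.
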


\begin{proof}
First, assume that $f(\bar{\delta})>-\infty$.
Let $\lambda:= (1-\sum_{e\in \delta^-(u)}\gamma_e x_e) / \sum_{e\in \delta^-(u)}\gamma_e(\bar{x}_e-x_e)$.
Note that $\lambda\in (0,1]$.
Consider the convex combination $\hat{x}:= \lambda\bar{x} + (1-\lambda)x\in \mathcal{D}$.
Then, $c^{\top}\hat{x}<0$ and $1-\sum_{e\in \delta^-(u)}\gamma_e\hat{x}_e = 0$.
Hence, the optimal value of \eqref{sys:2vpi_frac} is unbounded by Condition 1 of Theorem \ref{thm:linfrac}.
Next, assume that $f(\bar{\delta}) = -\infty$.
There exists a ray $r$ in the recession cone of $\mathcal{D}$ such that $c^{\top}r - \bar{\delta}\sum_{e\in \delta^-(u)}\gamma_er_e<0$.
Note that $r\geq 0$.
If $r(\delta^-(u)) = 0$, then $r$ satisfies Condition 2 of Theorem \ref{thm:linfrac}. 
So, the optimal value is unbounded.
Otherwise, for a sufficiently large $\alpha>0$, we have $c^{\top}(x+\alpha r) + \bar{\delta}(1-\sum_{e\in \delta^-(u)}\gamma_e(x_e+\alpha r_e)) < 0$ and $1-\sum_{e\in \delta^-(u)}\gamma_e(x_e+\alpha r_e)<0$.
Then, taking an appropriate convex combination of $x+\alpha r$ and $x$ like before produces a point in $\mathcal{D}$ which satisfies Condition 1 of Theorem \ref{thm:linfrac}.
\end{proof}

For a fixed $\delta\in \R$, the value of the parametric function $f(\delta)$ can be written as the following pair of primal and dual LPs respectively

\begin{minipage}[t]{.43\textwidth} \vspace{-3mm}
\begin{align*}
   &\min\; c^{\top}x+\delta\sum_{e\in \delta^-(u)}\gamma_e x_e - \delta \\
   &\subto\;\, x\in \mathcal{D}
\end{align*}
\end{minipage}
\begin{minipage}[t]{.53\textwidth} \vspace{-3mm}
\begin{align*}
	 &\max\; y_u - \delta \\
	 &\subto\quad\, y_v - \gamma_e\delta \leq c_e \qquad \forall e = (v,u)\in \delta^-(u) \\
	 &\qquad\, y_v - \gamma_e y_w \leq c_e \qquad \forall e=(v,w)\notin \delta^-(u).
\end{align*}
\end{minipage} \vspace{3mm} 

We refer to them as the \emph{primal (resp.~dual) LP for $f(\delta)$}, and their corresponding feasible solution as a \emph{feasible primal (resp.~dual) solution to $f(\delta)$}.
In order to characterize the finiteness of $f(\delta)$, we introduce the following notion of a negative flow-generating cycle.

\begin{definition}
For a fixed $\delta\in \R$ and $u\in V$, a flow-generating cycle $C$ is said to be \emph{$(\delta,u)$-negative} if there exists a path $P$ from a node $v\in V(C)$ to node $u$ such that 
\[c(C) + (\gamma(C)-1)(c(P) + \gamma(P)\delta) < 0\] %
where $C$ is treated as a $v$-$v$ walk in $c(C)$. 
\end{definition}

\begin{lemma}\label{lem:2vpi_infty}
For any $\delta\in \R$, $f(\delta) = -\infty$ if and only if $\mathcal{D} \neq \emptyset$ and there exists a negative unit-gain cycle, a negative bicycle, or a $(\delta,u)$-negative flow-generating cycle in $G\setminus \delta^+(u)$.
\end{lemma}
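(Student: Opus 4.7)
The plan is to prove both directions. For the ``if'' direction (certificate implies $f(\delta)=-\infty$), I would exhibit, for each of the three certificate types, a recession direction $r$ of $\mathcal{D}$ along which the $\delta$-adjusted primal objective $c^\top x + \delta\sum_{e\in\delta^-(u)}\gamma_e x_e$ strictly decreases; combined with $\mathcal{D}\neq\emptyset$ this forces $f(\delta)=-\infty$. A negative unit-gain cycle $C$ yields the unit circulation around $C$, with cost $c(C)<0$ and no flow on $\delta^-(u)$-edges; a negative bicycle $CPD$ yields the standard bicycle flow (1 unit around $C$, excess $\gamma(C)-1$ routed along $P$, then absorbed through a $(\gamma(C)-1)\gamma(P)/(1-\gamma(D))$-scaled looping in $D$), whose cost equals the bicycle value; and a $(\delta,u)$-negative flow-generating cycle $C$ at $v$ with path $P$ to $u$ yields the flow that sends 1 unit around $C$ and routes the excess $\gamma(C)-1$ down $P$ into $u$. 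In this last case flow-conservation holds at every $v\neq u$, the $\delta^-(u)$-edges carry total weighted flow $(\gamma(C)-1)\gamma(P)$, and the $\delta$-adjusted cost is exactly $c(C)+(\gamma(C)-1)(c(P)+\gamma(P)\delta)<0$.

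For the ``only if'' direction, I would use LP duality. Since $f(\delta)=-\infty$ requires both primal feasibility ($\mathcal{D}\neq\emptyset$) and primal unboundedness, it is equivalent to dual infeasibility. In the dual LP the variable $y_u$ appears only in upper-bound constraints arising from edges in $\delta^+(u)$ and in the maximization objective; driving $y_u\to-\infty$ satisfies those upper bounds trivially, so dual feasibility reduces to feasibility of the remaining constraints on $\{y_v:v\neq u\}$: an M2VPI system on edges not incident to $u$, together with upper bounds $y_v\leq c_e+\gamma_e\delta$ coming from each $e=(v,u)\in\delta^-(u)$. This is precisely the M2VPI system on $G\setminus\delta^+(u)$ with $y_u$ held fixed at $\delta$.

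To characterize infeasibility of this restricted system, I would split into two cases. If the unrestricted M2VPI on $G\setminus\delta^+(u)$ (with $y_u$ free) is already infeasible, Shostak's Theorem~\ref{thm:feas_char} supplies a negative unit-gain cycle or a negative bicycle in $G\setminus\delta^+(u)$; since $u$ has no outgoing arcs in this subgraph, no cycle can traverse $u$ and the certificate lives in $G\setminus\delta^+(u)$ as required. Otherwise the unrestricted system is feasible, and since $y_u$ carries only lower-bound constraints in $G\setminus\delta^+(u)$ (each $e\in\delta^-(u)$ giving $y_u\geq (y_v-c_e)/\gamma_e$), infeasibility at the fixed value $y_u=\delta$ must come from a derived lower bound on $y_u$ strictly exceeding $\delta$. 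Every such derived bound originates from a flow-generating cycle $C$ at some node $v$ (contributing $y_v\geq -c(C)/(\gamma(C)-1)$) propagated to $u$ along a $v$-$u$ path $P$ in $G\setminus\delta^+(u)$ (via $y_v\leq c(P)+\gamma(P)y_u$); the condition ``derived bound exceeds $\delta$'' rearranges to $c(C)+(\gamma(C)-1)(c(P)+\gamma(P)\delta)<0$, i.e., the $(\delta,u)$-negativity of $C$.

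The main obstacle I anticipate is rigorously justifying this last step: that every implied lower bound on $y_u$ in the unrestricted system decomposes as a single flow-generating cycle together with a single $v$-$u$ path, with no contribution from flow-absorbing cycles or more elaborate combinations. I would handle this either by invoking a standard generalized-flow decomposition theorem whose elementary recession directions of $\mathcal{D}$ in $G\setminus\delta^+(u)$ are exactly unit-gain cycle circulations, bicycle circulations, and flow-generating-cycle-plus-$v$-$u$-path flows, or by applying Shostak's theorem to an M2VPI augmentation of $G\setminus\delta^+(u)$ that explicitly encodes $y_u=\delta$ via auxiliary edges, so that each resulting Shostak certificate maps directly to one of the three types listed in the lemma.
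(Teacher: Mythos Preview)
Your proposal is correct, and your ``if'' direction coincides with the paper's implicit argument: each of the three certificate types yields a recession direction of $\mathcal{D}$ with negative $\delta$-adjusted cost. For the ``only if'' direction, however, the paper takes a much shorter primal route. It observes that the recession cone of $\mathcal{D}$ is $\{x\in\R^m_+: x(\delta^+(u))=0,\ \nabla x_v=0\ \forall v\neq u\}$, and then invokes the generalized flow decomposition theorem once to classify the extreme rays of this cone in $G\setminus\delta^+(u)$ as precisely the three types (unit-gain cycle circulation, bicycle flow, flow-generating cycle with path to $u$); the cost computation is then immediate.

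Your route through LP duality and Shostak's theorem is a legitimate alternative. If you close the gap in your Case~2 via option~(a), you are effectively invoking the same generalized flow decomposition the paper uses, just after an unnecessary detour through the dual. Option~(b) is the genuinely different path: encoding the upper bound $y_u\leq\delta$ by a flow-absorbing self-loop at $u$ with gain $\gamma<1$ and cost $\delta(1-\gamma)$, then applying Theorem~\ref{thm:feas_char} to the augmented graph, cleanly maps a negative bicycle whose absorbing cycle is the self-loop to the $(\delta,u)$-negativity condition (and the remaining Shostak certificates avoid $u$ entirely since $u$ has no outgoing arcs). This buys you a proof that rests only on Shostak's theorem, already stated in the paper, at the price of a longer case analysis; the paper's single appeal to flow decomposition is terser but relies on an external structural result.
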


\begin{proof}
The primal LP for $f(\delta)$ is unbounded if and only if $\mathcal{D}\neq \emptyset$ and there exists an extreme ray $r$ in the recession cone of $\mathcal{D}$ such that $c^{\top}r + y_u\sum_{e\in \delta^-(u)}\gamma_e r_e < 0$.
Note that the recession cone of $\mathcal{D}$ is $\set{x\in \R^m_+: x(\delta^+(u))=0, \nabla x_v = 0)\; \forall v\neq u}$.
By the generalized flow decomposition theorem, $r$ belongs to one of the following three fundamental flows in $G\setminus \delta^+(u)$: (1) a unit-gain cycle, (2) a bicycle, (3) a flow-generating cycle $C$ and a path $P$ from $C$ to $u$.
In the first two cases, $r_e = 0$ for all $e\in \delta^-(u)$.
Thus, the unit-gain cycle or bicycle is negative.
In the last case, we have $c(C) + (\gamma(C)-1)(c(P) + \gamma(P)\delta) = c^{\top}r + \delta\sum_{e\in \delta^-(u)}\gamma_e r_e$.
\end{proof}

It turns out that if we have an optimal dual solution $y$ to $f(\delta)$ for some $\delta\in \R$, then we can compute an optimal dual solution to $f(\delta')$ for any $\delta' < \delta$.
A suitable subroutine for this task is the so called {\sc Grapevine} algorithm (Algorithm~\ref{alg:grapevine}), developed by Aspvall and Shiloach \cite{journals/siamcomp/AspvallS80}.

\begin{algorithm}[H]
	\caption{\textsc{Grapevine}}
	\label{alg:grapevine}

	\Input{A directed multigraph $G=(V,E)$ with arc costs $c\in \R^m$ and gain factors $\gamma\in \R^m_{++}$, node labels $y\in \bar{\R}^n$, and a node $u\in V$.}
	\Output{Node labels $y\in \bar{\R}^n$ and a walk $P$ of length at most $n$ starting from $u$.}
	\BlankLine
	\For{$i=1$ \KwTo $n$}{
		\ForEach{$v\in V$}{
			$y'_v \leftarrow \min(y_v, \min_{vw\in \delta^+(v)}c_{vw} + \gamma_{vw} y_w)$\;
			\uIf{$y'_v < y_v$}{
				$\pred(v, i) \leftarrow \argmin_{vw\in \delta^+(v)}c_{vw} + \gamma_{vw} y_w$ \tcc*[r]{Break ties}
			}
			\Else{
				$\pred(v,i) \leftarrow \emptyset$\;
			}
		}
		$y\gets y'$\;
	}
	Let $P$ be the walk obtained by tracing from $\pred(u,n)$ \label{line:trace}\;
	\Return $(y,P)$\;
\end{algorithm}

Given initial node labels $y\in \bar{\R}^n$ and a specified node $u$, {\sc Grapevine} runs for $n$ iterations. 
We say that an arc $e=(v,w)$ is \emph{violated with respect to $y$} if $y_v> c_e + \gamma_e y_w$.
In an iteration $i\in [n]$, the algorithm records the most violated arc with respect to $y$ in $\delta^+(v)$ as $\pred(v,i)$, for each node $v\in V$ (ties are broken arbitrarily).
Note that $\pred(v,i) = \emptyset$ if there are no violated arcs in $\delta^+(v)$.
Then, each $y_v$ is decreased by the amount of violation in the corresponding recorded arc. 
After $n$ iterations, the algorithm traces a walk $P$ from $u$ by following the recorded arcs in reverse chronological order.
During the trace, if $\pred(v,i) = \emptyset$ for some $v\in V$ and $i>1$, then $\pred(v,i-1)$ is read.
Finally, the updated node labels $y$ and the walk $P$ are returned.
Clearly, the running time of {\sc Grapevine} is $O(mn)$.

Given an optimal dual solution $y\in \R^n$ to $f(\delta)$ and $\delta'< \delta$, the dual LP for $f(\delta')$ can be solved using {\sc Grapevine} as follows.
Define the directed graph $G_u := (V\cup\set{u'}, E_u)$ where $E_u := (E\setminus \delta^-(u)) \cup \set{vu':vu\in \delta^-(u)}$.
The graph $G_u$ is obtained from $G$ by splitting $u$ into two nodes $u,u'$ and reassigning the incoming arcs of $u$ to $u'$.
These arcs inherit the same costs and gain factors from their counterparts in $G$.
Let $\bar{y}\in \R^{n+1}$ be  node labels in $G_u$ defined by $\bar{y}_{u'} := \delta'$ and $\bar{y}_v := y_v$ for all $v\neq u'$.
Then, we run {\sc Grapevine} on $G_u$ with input node labels $\bar{y}$ and node $u$.
Note that $\bar{y}_{u'}$ remains unchanged throughout the algorithm.
The next lemma verifies the correctness of this method.

\begin{lemma}\label{lem:grapevine}
Given an optimal dual solution $y\in \R^n$ to $f(\delta)$ and $\delta'< \delta$, define $\bar{y}\in \R^{n+1}$ as $\bar{y}_{u'} := \delta'$  and $\bar{y}_v := y_v$ for all $v\in V$.
Let $(\bar{z},P)$ be the node labels and walk returned by {\sc Grapevine}$(G_u,\bar{y},u)$.
If $\bar{z}_V$ is not feasible to the dual LP for $f(\delta')$, then $f(\delta') = -\infty$.
Otherwise, $\bar{z}_V$ is a dual optimal solution to $f(\delta')$ and $P$ is a shortest path from $u$ with respect to $\bar{y}$ in $G_u$.
\end{lemma}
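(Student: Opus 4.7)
My plan combines a direct analysis of the Grapevine trace $P$ in the feasible case with a cycle-removal argument in the infeasible case. Throughout, let $\bar y^{(0)}=\bar y, \bar y^{(1)}, \ldots, \bar y^{(n)} = \bar z$ denote the iterates; Grapevine only decreases labels, so $\bar z \le \bar y$ coordinatewise, and the traced walk $P$ has $\bar y$-cost exactly $\bar z_u$ by construction of $\pred$. Also, every feasible $y'$ to the dual LP for $f(\delta')$ is automatically feasible to the dual LP for $f(\delta)$, because the $\delta^-(u)$-constraints strictly tighten when $\delta$ drops to $\delta'$ (using $\gamma_e > 0$); in particular $y'_u \le y_u$.

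Feasible case. Suppose $\bar z_V$ satisfies every arc constraint. I first claim that $P$ is either trivial or ends at $u'$. If $P$ were non-trivial and ended at some $w \in V \setminus \{u\}$, then non-triviality forces $\bar z_u < y_u$, while $P$ (avoiding $u'$) uses only arcs of $G$ outside $\delta^-(u)$. Telescoping feasibility of $y$ along $P$ would then yield $y_u \le c(P) + \gamma(P)y_w = \bar z_u < y_u$, a contradiction. Now telescoping feasibility of an arbitrary $y'$ along any walk $Q$ from $u$ to $u'$ in $G_u$ gives $y'_u \le c(Q) + \gamma(Q)\delta'$, so: if $P$ is trivial then $y'_u \le y_u = \bar z_u$, while if $P$ ends at $u'$ then $y'_u \le c(P) + \gamma(P)\delta' = \bar z_u$. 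Either way $\bar z_V$ is dual optimal. Feasibility of $\bar z_V$ combined with $\bar z \le \bar y$ further yields $\bar z_u \le c(Q) + \gamma(Q)\bar y_{t(Q)}$ for every walk $Q$ from $u$, so $P$ is a shortest path with respect to $\bar y$ in $G_u$.

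Infeasible case. I argue the contrapositive: if $f(\delta')$ is finite, then $\bar z_V$ is feasible. By \Cref{lem:2vpi_infty}, finiteness of $f(\delta')$ rules out negative unit-gain cycles, negative bicycles, and $(\delta', u)$-negative flow-generating cycles in $G \setminus \delta^+(u)$, so the shortest $\bar y$-walk from every vertex in $G_u$ has finite cost. A cycle-removal argument then bounds each such shortest walk to length at most $n$: for a witnessing walk $P_1 C P_2$ with internal cycle $C$, the cost difference between $P_1 C P_2$ and $P_1 P_2$ is proportional to $c(C) + (\gamma(C)-1)(c(P_2) + \gamma(P_2)\bar y_t)$; a strictly negative value would let us iterate $C$ to drive cost to $-\infty$ (contradicting finiteness), a strictly positive value would make $P_1 P_2$ cheaper (contradicting shortness), so the quantity must vanish and $C$ excises without changing the cost. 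Hence after $n$ Grapevine iterations $\bar z$ coincides with the unrestricted shortest $\bar y$-walk labels, which form a fixed point of the update and therefore satisfy every arc constraint. Combined with $\mathcal D \ne \emptyset$, the contrapositive yields $f(\delta') = -\infty$. The main obstacle is this cycle-removal step: one must separately handle $\gamma(C) < 1$, $\gamma(C) = 1$, and $\gamma(C) > 1$ and match each to the configurations ruled out in \Cref{lem:2vpi_infty}. The remaining steps are routine telescoping and LP-duality invocations.
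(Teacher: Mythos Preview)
Your feasible-case argument is essentially correct and arguably cleaner than the paper's: you directly establish that $P$ is trivial or ends at $u'$ (the paper embeds this inside a contradiction), and your telescoping argument for dual optimality is tidy. One small omission: you show $P$ has minimum $\bar y$-cost among all \emph{walks} from $u$, but the lemma asserts $P$ is a shortest \emph{path}, so you still owe simplicity. This is easy to add: any internal cycle in the traced $P$ comes from a $\pred$ pointer at two different levels, forcing a \emph{strict} label drop across the cycle; combined with your min-cost-walk property this yields a strictly cheaper walk, a contradiction.

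The genuine gap is in your infeasible-case plan. Your trichotomy on the sign of $c(C)+(\gamma(C)-1)\alpha$ claims that a strictly negative value ``lets us iterate $C$ to drive cost to $-\infty$'', but this is false when $\gamma(C)<1$: iterating a flow-absorbing cycle gives a geometrically convergent sequence with finite limit $c(C)/(1-\gamma(C))$, not $-\infty$. Correspondingly, your stated plan to ``match each case to the configurations ruled out in \Cref{lem:2vpi_infty}'' cannot succeed for $\gamma(C)<1$, because that lemma offers no configuration involving a lone flow-absorbing cycle. The missing ingredient---which the paper uses and you never invoke in this branch---is the dual feasibility of the \emph{original} $y$ for $f(\delta)$: from it one gets $y_v\le c(C)/(1-\gamma(C))$, and since the trace gives $c(P')+\gamma(P')\delta'\le \bar y_v=y_v$ while the strict cycle drop gives $c(C)/(1-\gamma(C))<c(P')+\gamma(P')\delta'$, one obtains the contradiction $y_v<y_v$. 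In your contrapositive framing, the analogous fix is: when the shortest walk contains a flow-absorbing cycle $C$ at $w$ with negative contribution, compare against the walk $P_1$ that stops at $w$ (cost $c(P_1)+\gamma(P_1)y_w$) and use $y_w\le c(C)/(1-\gamma(C))$ to show $P_1$ is strictly cheaper, contradicting shortness. Either way, the $\gamma(C)<1$ case hinges on the hypothesis that $y$ is optimal for $f(\delta)$, not on \Cref{lem:2vpi_infty} alone.
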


\begin{proof}
Since $f(\delta) = y_u - \delta$ is finite, we have $\mathcal{D}\neq \emptyset$.
First, assume that $\bar{z}_V$ is not feasible to the dual LP for $f(\delta')$. 
Then, there exists a violated arc in $G_u$ with respect to $\bar{z}$.
Let $w$ be the head of this arc and let $R$ be the walk obtained by tracing $\pred(w,n)$ in reverse chronological order. 
Then, $R$ ends at $u'$ because $y$ is dual feasible to $f(\delta)$. 
Since $R$ has $n$ edges, decompose it into $R=QCP'$ where $Q$ is a $w$-$v$ walk, $C$ is a nontrivial cycle at $v$, and $P$ is a $v$-$u'$ path for some node $v$. 
Then, we have $c(CP') + \gamma(CP')\delta'< c(P') + \gamma(P')\delta' \leq \bar{y}_v$. 
Due to Lemma \ref{lem:2vpi_infty}, it suffices to show that $\gamma(C)>1$, as this would imply that $C$ is a $(\delta',u)$-negative flow-generating cycle in $G$.
Suppose otherwise for a contradiction.
Since $y$ is dual feasible to $f(\delta)$ and $u'\notin V(C)$, we have $\bar{y}_v \leq c(C) + \gamma(C)\bar{y}_v$.
If $\gamma(C)=1$, then we obtain $0 \leq c(C) < 0$ from the previous two inequalities. 
Otherwise, we get the following contradiction
\[\bar{y}_v \leq \frac{c(C)}{1-\gamma(C)} < c(P') + \gamma(P')\delta' \leq \bar{y}_v.\]

Next, assume that $\bar{z}_V$ is a dual feasible solution to $f(\delta')$.
Then, $P$ is a $u$-$t$ path for some node $t$. 
This is because if $P$ is not simple, repeating the argument from the previous paragraph proves that the dual LP for $f(\delta')$ is infeasible.
Note that $\bar{y}_t = \bar{z}_t$.
Moreover, $\bar{z}_v \leq c_{vw} + \gamma_{vw}\bar{z}_w$ for all $vw\in E_u$, with equality on $E(P)$.
Let $c^{\bar{z}}\in \R^m_+$ be the reduced cost defined by $c^{\bar{z}}_{vw} := c_{vw} + \gamma_{vw}\bar{z}_w -\bar{z}_v$ for all $vw\in E_u$.
Since for every $u$-$t$ walk $P'$ we have
\[c(P) + \gamma(P)\bar{z}_t - \bar{z}_u = c^{\bar{z}}(P) = 0\leq c^{\bar{z}}(P') = c(P') + \gamma(P')\bar{z}_t - \bar{z}_u,\]
it follows that $P$ is a shortest $u$-$t$ path with respect to $\bar{y}$.

It is left to show that $\bar{z}_V$ is a dual optimal solution to $f(\delta')$.
Let $z^*$ be an optimal dual solution to $f(\delta')$.
Note that $z^*_u\leq y_u$ because $\delta'< \delta$.
For the purpose of contradiction, suppose that $\bar{z}_u<z^*_u$.
Since $\bar{z}_u < \bar{y}_u$, the path $P$ ends at $u'$ because $y$ is dual feasible to $f(\delta)$.
Thus, $\bar{z}_u = c(P) + \gamma(P)\delta'$.
However, $P$ also implies the valid inequality $z^*_u\leq c(P) + \gamma(P)\delta'$, which is a contradiction.
\end{proof}

If $\bar{z}_V$ is an optimal dual solution to $f(\delta')$, a supergradient in $\partial f(\delta')$ can be inferred from the returned path $P$.
We say that an arc $e=(v,w)$ is \emph{tight with respect to $\bar{z}$} if $\bar{z}_v= c_e + \gamma_e\bar{z}_w$.
By complementary slackness, every optimal primal solution to $f(\delta')$ is supported on the subgraph of $G_u$ induced by tight arcs with respect to $\bar{z}$.
In particular, any $u$-$u'$ path or any path from $u$ to a flow-absorbing cycle in this subgraph constitutes a basic optimal primal solution to $f(\delta')$.
As $P$ is also a path in this subgraph, we have $\gamma(P)-1\in \partial f(\delta')$ if $P$ ends at $u'$.
Otherwise, $u$ can reach a flow-absorbing cycle in this subgraph because $\delta'<\delta$.
In this case, $-1\in \partial f(\delta')$.

\subsection{A Strongly Polynomial Label-Correcting Algorithm}
\label{sec:main}

Using Algorithm \ref{alg:newton_fast}, we develop a strongly polynomial label-correcting algorithm for solving an M2VPI system $(G,c,\gamma)$.
The main idea is to start with a subsystem for which \eqref{sys:2vpi_dual} is trivial, and progressively solve \eqref{sys:2vpi_dual} for larger and larger subsystems.
Throughout the algorithm, we maintain node labels $y\in \bar{\R}^n$ which form valid upper bounds on each variable. 
They are initialized to $\infty$ at every node. 
We also maintain a subgraph of $G$, which initially is $G^{(0)} := (V,\emptyset)$. 

\begin{algorithm}[htb!]
	\caption{Label-correcting algorithm for M2VPI systems}
	\label{alg:main}

	\Input{An M2VPI system $(G,c,\gamma)$.}
	\Output{The pointwise maximal solution $y^{\max}$ or the string {\tt INFEASIBLE}.}
	\BlankLine

	Initialize graph $G^{(0)} \gets (V,\emptyset)$ and counter $k\gets 0$\;
	Initialize node labels $y\in \bar{\R}^n$ as $y_v \gets \infty \;\forall v\in V$\;

  	\ForEach{$u\in V$}{
  		$k\gets k + 1$\;
		$G^{(k)} \leftarrow G^{(k-1)} \cup \delta^+(u)$\;
		$y_u\leftarrow \min_{uv\in \delta^+(u)}c_{uv} + \gamma_{uv}y_v$\;
		\If{$y_u = \infty$ \textbf{and} $\mathcal{C}^{abs}_u(G^{(k)}) \neq \emptyset$}{
			$y_u \gets c(C)/(1-\gamma(C))$ for any $C\in \mathcal{C}^{abs}_u(G^{(k)})$
		}
		\If{$y_u < \infty$ \label{line:label_correcting}}{
			Define node labels $\bar{y}\in \bar{\R}^{n+1}$ as $\bar{y}_{u'} \gets y_u$ and $\bar{y}_v \gets y_v \;\forall v\in V$\;
			$(\bar{y},P)\leftarrow$ \textsc{Grapevine($G^{(k)}_u$,$\bar{y}$,$u$)}\;
			\If{$\exists$ a violated arc w.r.t.~$\bar{y}$ in $G^{(k)}_u$  \textbf{or} ($\size{E(P)}>0$ \textbf{and} $\gamma(P) \geq 1 $)}{
				\Return {\tt INFEASIBLE}
			}
			$\bar{y}_{u'}\gets${\sc Look-aheadNewton}({\sc Grapevine}$(G^{(k)}_u,\cdot,u), \bar{y}_{u'}, \gamma(P)-1)$\; \label{line:newton}
			\If{$\bar{y}_{u'} = $ {\tt NO ROOT}}{
				\Return {\tt INFEASIBLE}
			
			}
		$y\gets \bar{y}_V$\;
		}

  	}
	\Return $y$\;

\end{algorithm}

The algorithm (Algorithm~\ref{alg:main}) is divided into $n$ \emph{phases}. 
At the start of phase $k\in [n]$, a new node $u\in V$ is selected and all of its outgoing arcs in $G$ are added to $G^{(k-1)}$, resulting in a larger subgraph $G^{(k)}$. 
Since $y_u = \infty$ at this point, we update it to the smallest upper bound implied by its outgoing arcs and the labels of its outneighbours. 
If $y_u$ is still infinity, then we know that $\delta^+(u) = \emptyset$ or $y_v = \infty$ for all $v\in N^+(u)$.
In this case, we find a flow-absorbing cycle at $u$ in $G^{(k)}$ using the multiplicative Bellman--Ford algorithm, by treating the gain factors as arc costs.
If there is none, then we proceed to the next phase immediately as $y_u$ is unbounded from above in the subsystem $(G^{(k)},c,\gamma)$.
This is because $u$ cannot reach a flow-absorbing cycle in $G^{(k)}$ by induction.
We would like to point out that this does not necessarily imply that the full system $(G,c,\gamma)$ is feasible (see Appendix \ref{sec:2vpi_finite} for details).
On the other hand, if Bellman--Ford returns a flow-absorbing cycle, then $y_u$ is set to the upper bound implied by the cycle.
Then, we apply Algorithm \ref{alg:newton_fast} to solve \eqref{sys:2vpi_dual} for the subsystem $(G^{(k)},c,\gamma)$.

The value and supergradient oracle for the parametric function $f(\delta)$ is {\sc Grapevine}. 
Let $G^{(k)}_u$ be the modified graph and $\bar{y}\in \bar{\R}^{n+1}$ be the node labels as defined in the previous subsection.
In order to provide Algorithm \ref{alg:newton_fast} with a suitable initial point and supergradient, we run {\sc Grapevine} on $G^{(k)}_u$ with input node labels $\bar{y}$.
It updates $\bar{y}$ and returns a walk $P$ from $u$.
If $\bar{y}_V$ is not feasible to the dual LP for $f(\bar{y}_{u'})$ or $P$ is a non-trivial walk with $\gamma(P)\geq 1$, then 
we declare infeasibility.
Otherwise, we run Algorithm \ref{alg:newton_fast} with the initial point $\bar{y}_{u'}$ and supergradient $\gamma(P) - 1$. 
We remark that {\sc Grapevine} continues to update $\bar{y}$ throughout the execution of Algorithm \ref{alg:newton_fast}.

\begin{theorem}\label{thm:correct}
If Algorithm \ref{alg:main} returns $y\in \bar{\R}^n$, then $y=y^{\max}$ if the M2VPI system is feasible.
Otherwise, the system is infeasible.
\end{theorem}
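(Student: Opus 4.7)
The plan is to prove Theorem \ref{thm:correct} by induction on the phase counter $k \in \{0, 1, \ldots, n\}$, maintaining the invariant that at the end of phase $k$, the vector $y \in \bar{\R}^n$ equals the pointwise maximal solution $y^{(k),\max}$ of the subsystem $(G^{(k)}, c, \gamma)$ (with the convention $y_v = \infty$ when the subsystem does not bound $y_v$ from above, and with the invariant vacuously extending to infeasible subsystems, in which case the algorithm has already halted). Since $G^{(n)} = G$ and $G^{(k)} \subseteq G^{(k+1)}$ implies $y^{(k),\max} \geq y^{(k+1),\max} \geq y^{\max}$ componentwise, this invariant will yield the theorem.

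First I would dispatch the base case $k=0$: the empty subgraph imposes no constraint, so $y \equiv \infty$ is the pointwise maximal solution. For the inductive step at phase $k$ in which node $u$ is processed, adding $\delta^+(u)$ only introduces constraints on $y_u$, so the task is to determine $y_u^{(k),\max}$ and then propagate its effect to the other labels. I split into three branches of the code. \emph{Branch 1}: after the direct arc update and the flow-absorbing cycle search, $y_u$ is still $\infty$. Then $u$ reaches no flow-absorbing cycle in $G^{(k)}$, so the primal \eqref{sys:2vpi_primal} is infeasible and thus \eqref{sys:2vpi_dual} is unbounded in $y_u$; by the inductive hypothesis no other label changes, preserving the invariant. \emph{Branch 2}: $y_u$ becomes finite, and neither the {\sc Grapevine} feasibility test nor {\sc Look-aheadNewton} reports {\tt NO ROOT}. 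Then by Lemma \ref{lem:grapevine} the initial pair $(\bar{y}_{u'}, \gamma(P)-1)$ is a valid starting point for Algorithm \ref{alg:newton_fast} applied to the parametric function $f$ associated with \eqref{sys:2vpi_frac} on $(G^{(k)}, c, \gamma)$. By Lemma \ref{lem:linfrac}, {\sc Look-aheadNewton} returns the largest root $\delta^*$ of $f$, which is the optimal value of \eqref{sys:2vpi_frac}, equal to $y_u^{(k),\max}$. Because {\sc Grapevine} continues to update $\bar{y}$ throughout Newton's execution, the final $\bar{y}_V$ is, by Lemma \ref{lem:grapevine}, a dual optimal solution to $f(\delta^*)$, which is precisely the pointwise maximal solution of $(G^{(k)}, c, \gamma)$ (uniqueness of $y^{(k),\max}$ makes this independent of which objective was optimized).

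\emph{Branch 3}: infeasibility is declared, which happens in one of three situations — the initial {\sc Grapevine} leaves a violated arc, it returns a walk $P$ with $|E(P)|>0$ and $\gamma(P) \geq 1$, or {\sc Look-aheadNewton} outputs {\tt NO ROOT}. I would argue that each yields an infeasibility certificate of the form given by Theorem \ref{thm:feas_char}, contained in $G^{(k)} \subseteq G$. The first case is exactly the first conclusion of Lemma \ref{lem:grapevine}: $f(\bar{y}_{u'}) = -\infty$, which by Lemma \ref{lem:2vpi_infty} produces a negative unit-gain cycle, a negative bicycle, or a $(\bar{y}_{u'},u)$-negative flow-generating cycle; the latter together with the path to $u$ still decomposes into a negative unit-gain cycle or negative bicycle in $G$. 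The second case gives a cycle of gain $\geq 1$ on a path into $u$, which combined with the flow-absorbing cycle certifying $y_u < \infty$ is a negative bicycle. For the third case, the termination conditions of Algorithm \ref{alg:newton_fast} combined with Lemma \ref{lem:2vpi_init} imply that \eqref{sys:2vpi_frac} is unbounded below, so \eqref{sys:2vpi_primal} is unbounded, and generalized flow decomposition of the unbounded primal ray produces the required negative unit-gain cycle or negative bicycle.

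The main obstacle is the third branch, specifically reducing the abstract termination condition $f(\delta^*) < 0$ from Algorithm \ref{alg:newton_fast} to a concrete combinatorial certificate. This requires interpreting the supergradient behavior that triggers {\tt NO ROOT} through the primal-dual correspondence (via Lemma \ref{lem:linfrac} and Lemma \ref{lem:2vpi_init}) rather than via direct graph-theoretic reasoning on the Newton iterates. Everything else is either structural bookkeeping or a direct application of the lemmas already established.
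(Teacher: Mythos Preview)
Your inductive scheme matches the paper's, and Branches~1 and~2 are broadly correct in outline. However, there is a genuine gap in how you invoke Lemma~\ref{lem:grapevine} for the initial {\sc Grapevine} call of each phase. That lemma requires as input a \emph{finite} optimal dual solution $y\in\R^n$ to $f(\delta)$ for some $\delta>\delta'$, where $\delta'$ is the target value. At the start of phase $k$, however, some labels $y_v$ may still be $\infty$ (for nodes not yet processed, or not reaching a flow-absorbing cycle in $G^{(k-1)}$), and you have not exhibited any $\delta$ for which the current $y$ is dual-optimal. The paper fills this gap by (i) isolating $W=\{v:y_v=\infty\}$ and noting that $W$ has no outgoing arcs in $G^{(k)}$; (ii) observing that a negative unit-gain cycle in $G^{(k)}[W]$ would be detected by {\sc Grapevine}, and that in its absence $f(\delta')>-\infty$ for sufficiently large $\delta'$; (iii) constructing a finite $y'$ that is the pointwise-maximal dual optimal solution to $f(\delta')$ and agrees with $y$ on $V\setminus W$; and (iv) arguing that {\sc Grapevine} behaves identically on the inputs $\bar y$ (with infinities) and $\bar y'$ (finite), so that Lemma~\ref{lem:grapevine} applies by proxy. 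Without this construction the first appeal to Lemma~\ref{lem:grapevine} is unjustified, and the subsequent claims in Branches~2 and~3 do not follow. A related omission is that applying Lemma~\ref{lem:2vpi_init} (which you need in Branch~3) requires exhibiting a feasible $x$ for \eqref{sys:2vpi_frac} with objective value equal to the initial $y_u$; the paper constructs this $x$ explicitly from either the flow-absorbing cycle found in phase $k$ or from a tight path-plus-cycle structure in $G^{(k-1)}$ certified by the inductive hypothesis.

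A smaller point on Branch~3: your plan of extracting explicit combinatorial certificates via Theorem~\ref{thm:feas_char} is heavier than what the paper does and, as stated, not quite right. The paper argues infeasibility purely via LP duality: in each of the three sub-cases it shows that \eqref{sys:2vpi_frac} is unbounded (using Lemmas~\ref{lem:2vpi_init} and~\ref{lem:linfrac}), hence \eqref{sys:2vpi_primal} is feasible and unbounded, hence \eqref{sys:2vpi_dual} and therefore $(G^{(k)},c,\gamma)$ is infeasible. In particular, a $(\bar y_{u'},u)$-negative flow-generating cycle does not by itself ``decompose into'' a negative unit-gain cycle or negative bicycle of the full system as you assert---it depends on the specific value $\bar y_{u'}$---and your description of the second sub-case has the path direction reversed (the returned $P$ is a $u$-$u'$ path in $G^{(k)}_u$, i.e., a cycle \emph{at} $u$, not a path \emph{into} $u$).
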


\begin{proof}
It suffices to prove the theorem for the subsystem $(G^{(k)},c,\gamma)$ encountered in each phase $k$.
We proceed by induction on $k$.
For the base case $k=0$, the system $(G^{(0)},c,\gamma)$ is trivially feasible as it does not have any constraints.
Hence, $ y^{\max} = (\infty,\infty,\dots,\infty) = y$, where the second equality is due to our initialization.
For the inductive step, assume that the theorem is true for some $0\leq k < n$ and consider the system $(G^{(k+1)}, c,\gamma)$.
If Algorithm \ref{alg:main} terminated in phase $k$, then $(G^{(k+1)}, c,\gamma)$ is infeasible by the inductive hypothesis.
So, let $y\in \bar{\R}^n$ be the node labels maintained by the algorithm during Line \ref{line:label_correcting} of phase $k+1$.
We have $y_u = \infty$ if and only if $\mathcal{C}^{abs}_u(G^{(k+1)}) = \emptyset$ and $y_v = \infty$ for all $v\in N^+(u)$.
For each $v\neq u$, we also have $y_v = \infty$ if and only if $v$ cannot reach a flow-absorbing cycle in $G^{(k)}$.
So, if $y_u = \infty$, then $u$ cannot reach a flow-absorbing cycle in $G^{(k+1)}$.
By the inductive hypothesis, $y = y^{\max}$ if the system $(G^{(k+1)}, c,\gamma)$ is feasible.

Next, assume that $y_u<\infty$.
Without loss of generality, we may assume that every node $v$ with $y_v = \infty$ can reach $u$ in $G^{(k+1)}$.
Let $W := \set{v\in V:y_v = \infty}$.
Note that the cut $W$ does not have any outgoing edges in $G^{(k+1)}$.
If there exists a negative unit-gain cycle in $G^{(k+1)}[W]$, then it contains a violated arc with respect to any finite labels.
In this case, the algorithm correctly detects infeasibility.
Otherwise, by Lemma \ref{lem:2vpi_infty}, $f(\delta')>-\infty$ for a sufficiently high $\delta'\in \R$ because there are no flow-absorbing cycles in $G^{(k+1)}[W]$.
Pick $\delta'> y_u$ big enough such that an optimal dual solution $y'\in \R^n$ to $f(\delta')$ satisfies $y'_v = y_v$ for all $v\in V\setminus W$.
Among all such optimal dual solutions, choose $y'$ as the pointwise maximal one.
Then, every vertex $v\in W$ has a tight path to $u$ in $G^{(k+1)}$.
Now, let $\bar{y}'\in \R^{n+1}$ be node labels defined by $\bar{y}'_{u'} := y_u$ and $\bar{y}'_v := y'_v$ for all $v\in V$.
It is easy to see that running {\sc Grapevine} on $G^{(k+1)}_u$ with input node labels $\bar{y}$ and $\bar{y}'$ yield the same behaviour.
Let $(\bar{z},P)$ be the node labels and walk returned by {\sc Grapevine}.

Let $x\in \R_+^{E(G^{(k+1)})}$ be a feasible solution to \eqref{sys:2vpi_frac} such that $y_u = c^{\top}x/(1-\sum_{e\in \delta^-(u)}\gamma_e x_e)$.
Clearly, such an $x$ exists if $y_u = c(C)/1-\gamma(C)$ for some flow-absorbing cycle $C\in \mathcal{C}^{abs}_u(G^{(k+1)})$.
Otherwise, if $y_u = c_{uv} + \gamma_{uv}y_v$ for some $uv\in \delta^+(u)$, then $y_v = c(Q) + \gamma(Q)(c(C)/1-\gamma(C))$ where $Q$ is a path leading to a flow-absorbing cycle $C$ in $G^{(k)}[V\setminus W]$. 
This is because $y_{V\setminus W}$ is the pointwise maximal solution to the feasible subsystem $(G^{(k)}[V\setminus W],c,\gamma)$ by the inductive hypothesis.
Hence, $x$ can be chosen as the fundamental flow from $u$ to the cycle $C$ via the path $Q+{uv}$.

Now, according to Lemma \ref{lem:grapevine}, if $\bar{z}_V$ is not feasible to the dual LP for $f(y_u)$, then $f(y_u) = -\infty$.
By Lemma \ref{lem:2vpi_init}, the optimal value of \eqref{sys:2vpi_frac} is $-\infty$.
On the other hand, if $\bar{z}_V$ is a feasible solution to the dual LP for $f(y_u)$, then it is also optimal. 
Moreover, $P$ is a shortest path from $u$ with respect to $\bar{y}'$ in $G^{(k+1)}_u$.
If $E(P)>0$ and $\gamma(P)\geq 1$, then the path ends at $u'$ because $\bar{y}'$ is dual feasible to $f(\delta')$. 
Let $\bar{x}$ be the fundamental $u$-$u'$ flow on $P$.
By complementary slackness, $\bar{x}$ is an optimal primal solution to $f(y_u)<0$ and $1-\sum_{e\in \delta^-(u)}\gamma_e\bar{x}_e = 1-\gamma(P) \leq 0$.
Applying Lemma \ref{lem:2vpi_init} again yields unboundedness of $\eqref{sys:2vpi_frac}$.
In both cases, as \eqref{sys:2vpi_primal} is feasible, $(G^{(k+1)},c,\gamma)$ is infeasible.

If the above cases do not apply, then $\bar{z}_u$ and $\gamma(P)-1$ constitute a suitable initial point and supergradient for Algorithm \ref{alg:newton_fast} respectively.
Note that the node labels $\bar{y}$ are updated to $\bar{z}\in \R^{n+1}$.
Throughout the execution of Algorithm \ref{alg:newton_fast}, it is easy to see that $\bar{y}_V$ remain an upper bound on every feasible solution to the system $(G^{(k+1)},c,\gamma)$.
If phase $k+1$ terminates with node labels $y:=\bar{y}_V$, then $y_u$ is the largest root of $f$.
By Lemma \ref{lem:linfrac}, $y_u$ is the optimal value of \eqref{sys:2vpi_frac}.
Since $y$ is an optimal solution to \eqref{sys:2vpi_dual}, we obtain $y = y^{\max}$ as desired.
On the other hand, if phase $k+1$ terminates with {\tt INFEASIBLE}, then $f$ does not have a root.
By Lemma \ref{lem:linfrac}, the optimal value of \eqref{sys:2vpi_frac} is $-\infty$.
As \eqref{sys:2vpi_primal} is feasible, this implies that $(G^{(k+1)},c,\gamma)$ is infeasible.
\end{proof}

To bound the running time of Algorithm \ref{alg:main}, it suffices to bound the running time of Algorithm \ref{alg:newton_fast} in every phase.
Our strategy is to analyze the sequence of paths whose gain factors determine the right derivative of $f$ at each iterate of Algorithm \ref{alg:newton_fast}.
The next property is crucial our arc elimination argument.

\begin{definition}
Let $\mathcal{P} = (P^{(1)}, P^{(2)}, \dots, P^{(\ell)})$ be a sequence of paths from $u$. 
We say that $\mathcal{P}$ satisfies \emph{subpath monotonicity at $u$} if for every pair $P^{(i)},P^{(j)}$ where $i<j$ and for every shared node $v\neq u$, we have $\gamma(P^{(i)}_{uv}) \leq \gamma(P^{(j)}_{uv})$.
\end{definition}

\begin{lemma}\label{lem:subpath}
Let $\delta^{(1)}>\delta^{(2)}>\dots>\delta^{(\ell)}$ be a decreasing sequence of iterates.
For each $\delta^{(i)}\in \R$, let $P^{(i)}$ be a $u$-$u'$ path in $G_u$ on which a unit flow is an optimal primal solution to $f(\delta^{(i)})$.
Then, the sequence $(P^{(1)}, P^{(2)}, \dots, P^{(\ell)})$ satisfies subpath monotonicity at $u$.
\end{lemma}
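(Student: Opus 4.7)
The plan is a two-stage exchange argument based on primal optimality of unit flows on $P^{(i)}$ and $P^{(j)}$. Fix indices $i<j$ and a shared node $v \neq u$, and define $\alpha_k := c(P^{(k)}_{vu'}) + \gamma(P^{(k)}_{vu'})\delta^{(k)}$ for $k\in\{i,j\}$; this is the ``effective label'' at $v$ induced by the optimal suffix of $P^{(k)}$. I would first prove the strict comparison $\alpha_i > \alpha_j$, and then leverage it through a four-term cancellation to deduce the subpath gain inequality.

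For the strict comparison, I would invoke the optimal substructure of the parametric shortest-path LP: if a unit flow on $P^{(k)}$ is optimal for $f(\delta^{(k)})$, then the suffix $P^{(k)}_{vu'}$ must minimize $c(Q) + \gamma(Q)\delta^{(k)}$ over all $v$-$u'$ walks $Q$ in $G_u$, since otherwise concatenating $P^{(k)}_{uv}$ with a cheaper suffix would give a feasible primal walk of strictly smaller objective than $f(\delta^{(k)})$. Applied at $k = j$ to the alternative suffix $P^{(i)}_{vu'}$, this yields $\alpha_j \leq c(P^{(i)}_{vu'}) + \gamma(P^{(i)}_{vu'})\delta^{(j)}$, and rearranging gives
\[
\alpha_i - \alpha_j \;\geq\; \gamma(P^{(i)}_{vu'})\bigl(\delta^{(i)} - \delta^{(j)}\bigr) \;>\; 0,
\]
where the strict positivity uses $\gamma > 0$ together with $\delta^{(i)} > \delta^{(j)}$.

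For the second step, the walk $P^{(j)}_{uv} P^{(i)}_{vu'}$ is a valid $u$-$u'$ walk in $G_u$, so primal optimality of $P^{(i)}$ at $\delta^{(i)}$, after substituting definitions and factoring out the common suffix value $\alpha_i$, yields $c(P^{(i)}_{uv}) + \gamma(P^{(i)}_{uv})\alpha_i \leq c(P^{(j)}_{uv}) + \gamma(P^{(j)}_{uv})\alpha_i$. The symmetric comparison of $P^{(j)}$ to $P^{(i)}_{uv} P^{(j)}_{vu'}$ at $\delta^{(j)}$ yields $c(P^{(j)}_{uv}) + \gamma(P^{(j)}_{uv})\alpha_j \leq c(P^{(i)}_{uv}) + \gamma(P^{(i)}_{uv})\alpha_j$. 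Adding the two inequalities cancels the cost terms and leaves $(\gamma(P^{(i)}_{uv}) - \gamma(P^{(j)}_{uv}))(\alpha_i - \alpha_j) \leq 0$; combined with $\alpha_i > \alpha_j$ from step one, this forces $\gamma(P^{(i)}_{uv}) \leq \gamma(P^{(j)}_{uv})$, as required. The main obstacle is really the first step: isolating the right intermediate quantity $\alpha_k$ and nailing down the strict comparison $\alpha_i > \alpha_j$, where both strict positivity of the gain factors and the strict ordering of the Newton iterates are essential; once the strict comparison is in hand, the second step is a routine bilinear rearrangement.
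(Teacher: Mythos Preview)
Your proof is correct and rests on the same exchange/concavity idea as the paper, but the organization is different enough to merit a short comparison. The paper introduces optimal dual solutions $y^{(i)}$ to $f(\delta^{(i)})$, invokes complementary slackness so that $P^{(i)}$ is tight with respect to $\bar{y}^{(i)}$, and then packages the final step as concavity of the function $\psi(\alpha)=\inf\{c(P)+\gamma(P)\alpha: P\text{ a }u\text{-}v\text{ walk}\}$: since $\gamma(P^{(k)}_{uv})\in\partial\psi(\bar{y}^{(k)}_v)$ and $\bar{y}^{(i)}_v>\bar{y}^{(j)}_v$, the supergradient ordering gives the conclusion. Your argument is purely primal: you recover the same intermediate quantity (your $\alpha_k$ is exactly the paper's $\bar{y}^{(k)}_v$, by tightness of the suffix of $P^{(k)}$), prove $\alpha_i>\alpha_j$ by a suffix exchange, and then unpack the concavity step as a two-point prefix exchange with four-term cancellation. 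The primal-only route is slightly more elementary and makes the strict inequality $\alpha_i>\alpha_j$ fully explicit, whereas the paper's one-line justification ``because $\bar{y}^{(i)}_{u'}>\bar{y}^{(j)}_{u'}$'' leaves that deduction to the reader; conversely, the paper's $\psi$ formulation is more conceptual and makes clear why the argument is really a supergradient monotonicity statement. Both proofs implicitly use that in $G_u$ the node $u$ has no incoming arcs and $u'$ no outgoing arcs, which is what guarantees that the concatenated walks you form carry feasible primal flows in $\mathcal{D}$.
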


\begin{proof}
For each $i\in [\ell]$, let $y^{(i)}\in \R^{n}$ be an optimal dual solution to $f(\delta^{(i)})$. 
Let $\bar{y}^{(i)}\in \R^{n+1}$ be the node labels in $G_u$ defined by $\bar{y}^{(i)}_{u'} := \delta^{(i)}$ and $\bar{y}_v := y_v$ for all $v\neq u'$. 
By complementary slackness, every edge in $P^{(i)}$ is tight with respect to $\bar{y}^{(i)}$.
Hence, $P^{(i)}$ is a shortest $u$-$u'$ path in $G_u$ with respect to $\bar{y}^{(i)}$.
Now, pick a pair of paths $P^{(i)}$ and $P^{(j)}$ such that $i<j$ and they share a node $v\neq u$.
Then, the subpaths $P^{(i)}_{uv}$ and $P^{(j)}_{uv}$ are also shortest $u$-$v$ paths in $G_u$ with respect to $\bar{y}^{(i)}$ and $\bar{y}^{(j)}$ respectively.
Observe that $\bar{y}^{(i)}_v > \bar{y}^{(j)}_v$ because $\bar{y}^{(i)}_{u'} = \delta^{(i)} > \delta^{(j)} = \bar{y}^{(j)}_{u'}$.
Define the function $\psi:[\bar{y}^{(j)}_v, \bar{y}^{(i)}_v]\rightarrow\bar{\R}$ as
\[\psi(\alpha) := \inf\set{c(P) + \gamma(P)\alpha: P \text{ is a $u$-$v$ walk in $G_u$}}.\] 
Clearly, it is increasing and concave.
It is also finite because $\psi(\bar{y}^{(i)}_v) = c(P^{(i)}_{uv}) + \gamma(P^{(i)}_{uv})\bar{y}^{(i)}_v$ and $\psi(\bar{y}^{(j)}_v) = c(P^{(j)}_{uv}) + \gamma(P^{(j)}_{uv})\bar{y}^{(j)}_v$.
Subpath monotonicity then follows from concavity of $\psi$.
\end{proof}

\begin{theorem}\label{thm:runtime}
In each phase $k$ of Algorithm \ref{alg:main}, Algorithm \ref{alg:newton_fast} terminates in $O(|E(G^{(k)})|)$ iterations. 
\end{theorem}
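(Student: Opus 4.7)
The plan is to obtain the bound by a variable-fixing argument: I will show that there is a universal constant $c_0$ such that after every $c_0$ consecutive iterations of Algorithm~\ref{alg:newton_fast} inside phase $k$, at least one new arc of $G^{(k)}_u$ becomes permanently absent from the certifying walks of all subsequent iterations. Since $|E(G^{(k)}_u)|=O(|E(G^{(k)})|)$, this immediately gives $O(|E(G^{(k)})|)$ iterations.

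\emph{Setup.} For each iteration $i$ let $P^{(i)}$ denote the walk from $u$ in $G^{(k)}_u$ produced by \textsc{Grapevine} that certifies the current iterate, so that either $P^{(i)}$ is a $u$-$u'$ path with $g^{(i)} = \gamma(P^{(i)}) - 1$, or $P^{(i)}$ leads from $u$ into a flow-absorbing cycle and $g^{(i)} = -1$. By Lemma~\ref{lem:grapevine} and complementary slackness, $P^{(i)}$ is tight with respect to the current dual labels and is a shortest path, so Lemma~\ref{lem:subpath} applies and gives subpath monotonicity at $u$: whenever $i<j$ and $v\neq u$ lies on both $P^{(i)}$ and $P^{(j)}$, we have $\gamma(P^{(i)}_{uv}) \leq \gamma(P^{(j)}_{uv})$.

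\emph{Potential.} The first step in the analysis is to turn the Bregman divergence into a combinatorial quantity. In the $u$-$u'$ path case, $f(\delta^{(i)}) = c(P^{(i)}) + (\gamma(P^{(i)})-1)\delta^{(i)}$, so plugging into the definition yields
\[
D_f(\delta^*,\delta^{(i)}) \;=\; c(P^{(i)}) + (\gamma(P^{(i)})-1)\,\delta^* - f(\delta^*),
\]
with an analogous formula in the flow-absorbing cycle case. By Lemma~\ref{lem:bregman_decay}, this quantity is halved every two iterations.

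\emph{Arc elimination.} Now fix an iteration $i$ and suppose, for contradiction, that for every $j \in \{i+1,\ldots,i+c_0\}$ we have $E(P^{(i)}) \subseteq \bigcup_{\ell > j} E(P^{(\ell)})$, i.e., each arc of $P^{(i)}$ reappears on some later walk. The plan is to choose from $P^{(i)}$ the arc $e=(v,w)$ which contributes the largest "share" to the potential above (for instance, the arc for which the product $\gamma(P^{(i)}_{uv})\cdot(\gamma(P^{(i)}_{vw})-1)$ is maximal, or the analogous term involving $c_e$). For any later iteration $j$ at which $e\in E(P^{(j)})$, subpath monotonicity at both $v$ and $w$ forces $\gamma(P^{(j)}_{uv}) \geq \gamma(P^{(i)}_{uv})$ and $\gamma(P^{(j)}_{uw}) \geq \gamma(P^{(i)}_{uw})$, which lower-bounds the "tail" contribution of the $w$-$u'$ subpath. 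Combining this combinatorial lower bound with the factor-two geometric decay of $D_f(\delta^*,\delta^{(\cdot)})$ from Lemma~\ref{lem:bregman_decay} over the window $[i,i+c_0]$ contradicts the previous identity, provided $c_0$ is large enough.

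\emph{Main obstacle.} The technical heart of the argument is selecting the right bottleneck arc to retire and executing the above contradiction cleanly. The candidate is the arc whose removal would most dramatically reduce the potential expression for $D_f(\delta^*,\delta^{(i)})$, but making subpath monotonicity rule out its reappearance requires tracking both how $\gamma(P^{(j)}_{uv})$ and $\gamma(P^{(j)}_{uw})$ can grow and how $c(P^{(j)})$ can change; the quantities $(\gamma(P^{(j)})-1)$ and $c(P^{(j)})$ must be compared against their $i$-th counterparts using both Lemma~\ref{lem:monotone} (monotonicity of $g^{(i)}$) and Lemma~\ref{lem:bregman_decay} (halving). Once this contradiction is established for the distinguished arc, summing over arcs of $G^{(k)}_u$ yields the $O(|E(G^{(k)})|)$ bound.
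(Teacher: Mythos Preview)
Your proposal assembles the right raw materials---the Bregman-divergence halving of Lemma~\ref{lem:bregman_decay} and the subpath monotonicity of Lemma~\ref{lem:subpath}---but the arc-elimination step is not carried out, and the plan as stated has a genuine gap. Two points:

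\textbf{(1) The potential is not yet combinatorial.} Your identity $D_f(\delta^*,\delta^{(i)}) = c(P^{(i)}) + (\gamma(P^{(i)})-1)\delta^* - f(\delta^*)$ is correct, but it mixes terms of both signs and does not decompose along the arcs of $P^{(i)}$. The paper's key trick is to make the subsystem feasible by perturbing the costs on $\delta^+(u)$ (adding the smallest $\varepsilon\ge 0$ that kills all negative structures through $u$), so that a finite pointwise maximal solution $y^*$ exists. With the nonnegative reduced cost $c^*_{vw}=\hat c_{vw}+\gamma_{vw}y^*_w-y^*_v$, one gets the clean arc-decomposable identity $c^*(P^{(i)}) = D_f(y^*_u,\bar y^{(i)}_{u'})$. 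The nonnegativity of $c^*$ is what makes everything downstream work; without it your ``share of an arc'' is not even well-defined.

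\textbf{(2) The elimination argument is the wrong shape.} Your plan is to retire one arc every $c_0$ iterations by picking a ``bottleneck'' arc of $P^{(i)}$ and arguing it cannot reappear. But subpath monotonicity only tells you that $\gamma(P^{(j)}_{uv})$ grows; it does not prevent the contribution of a single arc $e=(v,w)$ from staying large (indeed $\gamma(P^{(j)}_{uv})\,c^*_e$ can only increase over time), so the proposed contradiction does not fire. The paper does \emph{not} eliminate arcs one at a time. Instead it defines $x_{vw}=\max_i\{\gamma(P^{(i)}_{uv}):vw\in E(P^{(i)})\}$, which by subpath monotonicity equals the gain factor at the \emph{last} appearance of $vw$; sorts the products $c^*_e x_e$; forms the partial sums $d_j=\sum_{\ell\le j} c^*_\ell x_\ell$; and then shows that each interval $(d_j,d_{j+1}]$ can contain the value $c^*(P^{(i)})$ for at most two indices $i$. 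The punchline is that once $c^*(P^{(i+2)})\le \tfrac12 c^*(P^{(i)})$ drops below $c^*_{j+1}x_{j+1}$, subpath monotonicity forces all later paths to avoid the arcs $e_{j+1},\ldots,e_{m_k}$, capping their cost by $d_j$. This is a global bucketing argument rather than a local per-arc elimination, and it is what your sketch is missing.
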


\begin{proof}
Fix a phase $k\in [n]$ and denote $m_k := |E(G^{(k)})|$.
Let $\bar{\mathcal{Y}} = (\bar{y}^{(1)}, \bar{y}^{(2)}, \dots, \bar{y}^{(\ell)})$ be the sequence of node labels at the start of every iteration of Algorithm \ref{alg:newton_fast} in phase $k$.
Note that $\bar{y}^{(i)}\geq \bar{y}^{(i+1)}$ and $\bar{y}^{(i)}_{u'} > \bar{y}^{(i+1)}_{u'}$ for all $i<\ell$.
Let $f:\R\rightarrow \bar{\R}$ be the parametric function associated with the linear fractional program \eqref{sys:2vpi_frac} for the subsystem $(G^{(k)},c,\gamma)$.
We may assume that $\ell \geq 1$, which in turn implies that $f(y^{(1)}_{u'})$ is finite by Lemma \ref{lem:2vpi_init}.
By Lemma \ref{lem:2vpi_infty}, there are no negative unit-gain cycles or bicycles in $G^{(k)}\setminus \delta^+(u)$.
It follows that all negative unit-gain cycles and negative bicycles in $G^{(k)}$ contain $u$. 
Hence, there exists a smallest $\varepsilon\geq 0$ such that the subsystem $(G^{(k)},\hat{c},\gamma)$ is feasible, where $\hat{c}\in \R^{m_k}$ are modified arc costs defined by $\hat{c}_e := c_e + \varepsilon$ if $e\in \delta^+(u)$ and $\hat{c}_e := c_e$ otherwise.

For each $i>1$, every basic optimal primal solution to $f(\bar{y}^{(i)}_{u'})$ is a path flow from $u$ to $u'$ in $G^{(k)}_u$.
This is because $u$ cannot reach a flow-absorbing cycle in the subgraph of $G^{(k)}_u$ induced by tight arcs with respect to $\bar{y}^{(i)}_u$.
Indeed, such a cycle would impose an upper bound of $\bar{y}^{(i)}_u$ on the variable $y_u$.
As $\bar{y}^{(i-1)}_u>\bar{y}^{(i)}_u$, this contradicts the feasibility of $\bar{y}^{(i-1)}_V$ to the dual LP for $f(\bar{y}^{(i-1)}_{u'})$.
For each $i>1$, let $P^{(i)}$ be a $u$-$u'$ path with the smallest gain factor in the subgraph of $G^{(k)}_u$ induced by tight arcs with respect to $\bar{y}^{(i)}$.
Note that $P^{(i)}$ is well-defined due to the same reason above.
Then, $\gamma(P^{(i)})-1 = \min \partial f(\bar{y}^{(i)}_{u'})$.
Denote this sequence of $u$-$u'$ paths as $\mathcal{P}:=(P^{(2)},P^{(3)},\dots, P^{(\ell)})$.

Without loss of generality, we may assume that $\bar{y}^{(i)}$ is finite for all $i\geq 1$.
Since every vertex can reach a flow-absorbing cycle in $G^{(k)}$, there exists a pointwise maximal solution $y^*\in \R^n$ to the modified system $(G^{(k)},\hat{c},\gamma)$.
Define the reduced cost $c^*\in \R^{m_k}_+$ as $c^*_{vw} := \hat{c}_{vw} + \gamma_{vw}y^*_w - y^*_v$ for all $vw\in E(G^{(k)})$. 
Since $f(y^*_u) = -\varepsilon$, we obtain
\begin{align*}
	c^*(P^{(i)}) &=  c(P^{(i)}) - (1-\gamma(P^{(i)}))y^*_u + \varepsilon \\
	&= f(\bar{y}^{(i)}_{u'}) - (1-\gamma(P^{(i)}))(y^*_u - \bar{y}^{(i)}_u) - f(y^*_u) \\
	&= D_f(y^*_u,\bar{y}^{(i)}_{u'}) \leq \frac12 D_f(y^*_u,\bar{y}^{(i-2)}_{u'}) = \frac12 c^*(P^{(i-2)})
\end{align*}
for all $i>3$, where the inequality is due to Lemma \ref{lem:bregman_decay}.

Consider the vector $x\in \R^{m_k}_+$ defined by 
\[x_{vw} := \begin{cases}
	\max_{i\in [\ell]} \set{\gamma(P^{(i)}_{uv}):vw\in E(P^{(i)})} &\text{ if }vw\in \cup_{i=1}^\ell E(P^{(i)}),\\
	0 &\text{ otherwise.}
\end{cases}\]
By Lemma \ref{lem:subpath}, the sequence $\mathcal{P}$ satisfies subpath monotonicity at $u$. 
Hence, $x_{vw}$ is equal to the gain factor of the $u$-$v$ subpath of the last path in $\mathcal{P}$ that contains $vw$. 
Let $0\leq c^*_1x_1\leq c^*_2x_2 \leq \dots \leq c^*_{m_k}x_{m_k}$ be the elements of $c^* \circ x$ in nondecreasing order.
Let $e_1, e_2, \dots, e_{m_k}$ denote the arcs in $G^{(k)}$ according to this order, and define $d_i := \sum_{j=1}^i c^*_jx_j$ for every $i\in [m_k]$. 
Then, $c^*(P^{(i)})\in [d_1, d_{m_k}]$ for all $i\in [\ell]$ because $c^*(P^{(\ell)}) \geq d_1$ and $c^*(P^{(1)})\leq d_{m_k}$. 
To prove that $\ell= O(m_k)$, it suffices to show that every interval $(d_i, d_{i+1}]$ contains the cost of at most two paths from $\mathcal{P}$. 

Pick $j<m_k$. 
Among all the paths in $\mathcal{P}$ whose costs lie in $(d_j, d_{j+1}]$, let $P^{(i)}$ be the most expensive one. 
If $d_j\geq d_{j+1}/2$, then 
\[c^*(P^{(i+2)}) \leq \frac{1}{2}c^*(P^{(i)}) \leq \frac{1}{2} d_{j+1} \leq d_j.\]
On the other hand, if $d_j<d_{j+1}/2$, then 
\[c^*(P^{(i+2)}) \leq \frac{1}{2}c^*(P^{(i)}) \leq \frac{1}{2}d_{j+1} = d_{j+1} - \frac{1}{2}d_{j+1} = c^*_{j+1}x_{j+1} + d_j - \frac{1}{2}d_{j+1} < c^*_{j+1}x_{j+1}.\]
By subpath monotonicity, the paths from $P^{(i+2)}$ onwards do not contain an arc from the set $\set{e_{j+1}, e_{j+2}, \dots, e_{m_k}}$. 
Therefore, their costs are at most $d_j$ each.
\end{proof}

The runtime of every iteration of Algorithm \ref{alg:newton_fast} is dominated by {\sc Grapevine}.
Thus, following the discussion in Appendix \ref{sec:2vpi_finite}, we obtain the following result.

\begin{corollary}
Algorithm \ref{alg:main} solves the feasibility of M2VPI linear systems in $O(m^2n^2)$ time.
\end{corollary}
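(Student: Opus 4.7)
The plan is to aggregate three previously established facts and verify that no auxiliary step dominates. First, Algorithm~\ref{alg:main} is structured as $n$ phases, one per vertex of $V$. Second, by Theorem~\ref{thm:runtime}, the call to \textsc{Look-aheadNewton} in phase $k$ executes only $O(|E(G^{(k)})|) = O(m)$ iterations. Third, each such iteration is dominated by a single call to \textsc{Grapevine} on $G^{(k)}_u$, which by inspection of Algorithm~\ref{alg:grapevine} runs in $O(mn)$ time. Multiplying these together gives $O(n \cdot m \cdot mn) = O(m^2 n^2)$ total time spent inside the Newton loops.

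Next I would verify that every other operation in a phase fits inside the $O(m^2 n)$ budget of that phase's Newton loop. The label update $y_u \gets \min_{uv \in \delta^+(u)} c_{uv} + \gamma_{uv} y_v$ takes $O(|\delta^+(u)|)$ time. The search for a flow-absorbing cycle at $u$ in $G^{(k)}$ via the multiplicative Bellman--Ford algorithm (used only when $y_u$ remains $\infty$ after the preceding update) costs $O(mn)$. The initial \textsc{Grapevine} call supplying the starting point $\bar{y}_{u'}$ and supergradient $\gamma(P)-1$ to Algorithm~\ref{alg:newton_fast}, as well as the infeasibility checks on violated arcs and on $\gamma(P)\geq 1$, also cost $O(mn)$. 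All of these are absorbed into the $O(m^2 n)$ cost of the Newton loop, and summing over $n$ phases preserves the $O(m^2 n^2)$ bound. Correctness of the output (either $y^{\max}$ or \texttt{INFEASIBLE}) is provided by Theorem~\ref{thm:correct}.

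Finally, I would invoke the post-processing discussed in Appendix~\ref{sec:2vpi_finite} to convert $y^{\max}\in \bar{\R}^n$ into a genuine finite feasible solution to the original M2VPI system whenever the system is feasible, and confirm that this step runs within the stated $O(m^2 n^2)$ time (this is what the text means by ``following the discussion in Appendix~\ref{sec:2vpi_finite}'').

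There is no genuine analytic obstacle here; the corollary is a purely compositional statement built on top of Theorem~\ref{thm:runtime} and the complexities of \textsc{Grapevine} and Bellman--Ford. The only care needed is bookkeeping: checking that $|E(G^{(k)})| \le m$ and $|V(G^{(k)}_u)|=n+1$ uniformly across phases so that the per-iteration and per-phase bounds hold in every phase, and that the auxiliary Bellman--Ford invocations and the feasible-solution extraction from Appendix~\ref{sec:2vpi_finite} do not exceed $O(m^2 n^2)$ in aggregate.
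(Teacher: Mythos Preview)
Your proposal is correct and follows essentially the same approach as the paper: combine the $n$ phases, the $O(|E(G^{(k)})|)=O(m)$ iteration bound from Theorem~\ref{thm:runtime}, the $O(mn)$ cost of \textsc{Grapevine} per iteration, and the post-processing of Appendix~\ref{sec:2vpi_finite}. The only minor slip is that each iteration of \textsc{Look-aheadNewton} makes \emph{two} oracle calls (the standard step and the look-ahead), not one, but this does not affect the $O(mn)$ per-iteration bound.
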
 

One might wonder if Algorithm \ref{alg:main} is still strongly polynomial if we replace the look-ahead Newton--Dinkelbach method on Line \ref{line:newton} with the standard version.
In Appendix \ref{sec:newton_analysis}, we show that this is indeed the case, though with a slower convergence.

\subsection{Deterministic Markov Decision Processes}
\label{sec:dmdp}

In this subsection, we replace \textsc{Grapevine} with a variant of Dijkstra's algorithm (Algorithm \ref{alg:dijkstra}) in order to speed up Algorithm \ref{alg:main} for solving a special class of 2VPI linear programs, known as deterministic Markov decision processes (DMDPs).
This idea was briefly mentioned by Madani in \cite{conf/aaai/Madani02}; we will supply the details.  Recall that an instance of DMDP is described by a directed multigraph $G=(V,E)$ with arc costs $c\in \R^m$ and \emph{discount} factors $\gamma\in (0,1]^m$. 
The goal is to select an outgoing arc from every node so as to minimize the total discounted cost over an infinite time horizon.
It can be formulated as the following pair of primal and dual LPs.

\begin{minipage}[t]{.48\textwidth} \vspace{-3mm}
\begin{align*}\tag{P}\label{eq:dmdp_primal}
   &\min\; c^{\top}x \\
   &\subto\;\, \nabla x_v = 1 \qquad \forall v\in V \\
   &\qquad\quad\;\; x \geq 0 
\end{align*}
\end{minipage}
\begin{minipage}[t]{.48\textwidth} \vspace{-3mm}
\begin{align*}\tag{D}\label{eq:dmdp_dual}
	 &\max\; \1^{\top}y \\
	 &\subto\;\; y_v - \gamma_e y_w \leq c_e \qquad \forall e=(v,w)\in E 
\end{align*}
\end{minipage} \vspace{3mm} 

Since the discount factor of every cycle is at most 1, there are no bicycles in $G$. 
Consequently, by Theorem \ref{thm:feas_char}, the linear program \eqref{eq:dmdp_dual} is infeasible if and only if there is a negative unit-gain cycle in $G$.
This condition can be easily checked by running a negative cycle detection algorithm on the subgraph induced by arcs with discount factor 1.

Algorithm \ref{alg:dijkstra} is slightly modified from the standard Dijkstra's algorithm \cite{journals/nm/Dijkstra59} to handle our notion of shortest paths that depends on node labels.
As part of the input, it requires a target node $t$ with out-degree zero, node labels $y\in \R^n$ which induce nonnegative reduced costs, and a parameter $\alpha<y_t$.
As output, it returns a shortest path tree $T$ to $t$ when $y_t$ is decreased to $\alpha$.
It also returns node labels $z\in \R^n$ which certify the optimality of $T$, i.e.~$z$ induces nonnegative reduced costs with zero reduced costs on $T$, and $z_t = \alpha$.

\begin{algorithm}[htb!]
	\caption{Recompute shortest paths to $t$}
	\label{alg:dijkstra}

	\Input{A directed multigraph $G=(V,E)$ with arc costs $c\in \R^E$ and discount factors $\gamma\in (0,1]^E$, a target node $t\in V$ where $\delta^+(t) = \emptyset$, node labels $y\in \R^V$ such that $c_{vw} + \gamma_{vw}y_w - y_v \geq 0$ for every $vw\in E$, and a parameter $\alpha<y_t$}
	\Output{An in-tree $T$ rooted at $t$ and node labels $z\in \R^V$ such that $z\leq y$, $z_u = \alpha$ and $c_{vw} + \gamma_{vw}z_w - z_v \geq 0$ for every $vw\in E$, with equality on every arc of $T$.}
	\BlankLine
	$y_u\leftarrow \alpha$\;
	Define reduced cost $\bar{c}\in \R^E$ by $\bar{c}_{vw}\leftarrow c_{vw} + \gamma_{vw}y_w - y_v$ for all $vw\in E$\;
	Initialize node labels $z\in \R^V$ by $z_v\leftarrow 0$ for all $v\in V$\;
	Initialize sets $R\leftarrow \set{t}$ and $S\leftarrow \emptyset$\;
	\While{$R\neq \emptyset$}{
		$w \leftarrow \argmin_{v\in R}\set{z_v}$\;
		$R \leftarrow R \setminus \set{w}$\;
		$S \leftarrow S \cup \set{w}$\;
		\ForEach{$vw\in E$ where $v\notin S$}{
			\If{$z_v > \bar{c}_{vw} + \gamma_{vw}z_w$}{
				$z_v\leftarrow \bar{c}_{vw} + \gamma_{vw}z_w$\;
				$\pred(v) \leftarrow vw$\;
				$R\leftarrow R \cup \set{v}$\;
			}
		}
	}
	Let $T$ be the in-tree defined by $\pred()$\;
	$z \leftarrow y+z$\;
	\Return $(z,T)$\;
\end{algorithm}

An \emph{iteration} of Algorithm \ref{alg:dijkstra} refers to a repetition of the while loop.
In the pseudocode, observe that $\bar{c}_e\geq 0$ for all $e\in E\setminus \delta^-(u)$. 

\begin{lemma}
Algorithm \ref{alg:dijkstra} is correct.
\end{lemma}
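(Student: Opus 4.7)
The plan is to adapt the classical correctness argument for Dijkstra's algorithm on the reverse graph rooted at $t$. The main twist is that after setting $y_t \gets \alpha$, the reduced costs $\bar c_e = c_e + \gamma_e y_w - y_v$ remain nonnegative on every arc \emph{except possibly those entering $t$}, whose tails could have $\bar c_{vt} = c_{vt} + \gamma_{vt}\alpha - y_v < 0$. First I would record three loop invariants that follow immediately from the update rule: (a) $z_t = 0$ throughout, since $\delta^+(t) = \emptyset$ means no relaxation ever touches $z_t$ and $t$ is the very first node extracted from $R$; (b) $z_v \le 0$ for every $v$, since $z$ is initialized to $0$ and the relaxation only decreases entries; and (c) once $w$ enters $S$, its label $z_w$ is frozen, because the relaxation step skips $v \in S$. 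Invariants (a) and (c) already give $z_t = \alpha$ after the final $z \gets y + z$ step, and invariant (b) gives $z \le y$.

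The technical heart is the Dijkstra-style invariant: at the moment $w$ is extracted from $R$, its label $z_w$ equals $d_w := \inf\{\bar c(P) : P \text{ is a } w\text{-}t \text{ walk in } G\}$. I would prove this by induction on the extraction order, with the base case $w = t$ immediate. For the inductive step, fix any $w$-$t$ walk $P = (v_0 = w, v_1, \ldots, v_k = t)$ and let $j$ be the least index with $v_j \in S$ at the time of $w$'s extraction; this exists because $v_k = t \in S$. By the inductive hypothesis $z_{v_j} = d_{v_j}$, and the relaxation performed when $v_j$ was extracted ensures
\[
z_{v_{j-1}} \le \bar c_{v_{j-1} v_j} + \gamma_{v_{j-1} v_j}\, d_{v_j}
\]
(if the condition was already satisfied and $v_{j-1}$ was never enrolled in $R$, then $z_{v_{j-1}} = 0$ still satisfies this bound). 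Since $w$ attains the current minimum of $R$ and $v_{j-1}$ is either in $R$ or still at $0$, we also have $z_w \le z_{v_{j-1}}$. The hard step, and the main obstacle, is combining these ingredients with the multiplicative decomposition
\[
\bar c(P) = \bar c(P') + \gamma(P')\bigl(\bar c_{v_{j-1} v_j} + \gamma_{v_{j-1} v_j}\, d_{v_j}\bigr),
\]
where $P'$ is the prefix of $P$ from $v_0$ to $v_{j-1}$, to conclude $z_w \le \bar c(P)$. Two facts make this go through: $P'$ avoids $t$ (because $t \in S$ but $v_0, \ldots, v_{j-1} \notin S$), so each reduced cost along $P'$ is nonnegative and $\bar c(P') \ge 0$; and the DMDP contraction $\gamma(P') \le 1$ combined with $z_{v_{j-1}} \le 0$ yields $\gamma(P')\, z_{v_{j-1}} \ge z_{v_{j-1}} \ge z_w$. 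Without the assumption $\gamma \le 1$ this sign-reversal step would fail.

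Equipped with this extraction invariant, the output conditions follow by a short case analysis on each $(v, w) \in E$ at termination. If $v, w \in S$: when $w$ is extracted after $v$, the required inequality $z_v \le \bar c_{vw} + \gamma_{vw} z_w$ reduces to $z_v = d_v \le \bar c_{vw} + \gamma_{vw}\, d_w$, and when $v$ is extracted after $w$ it follows directly from the relaxation run at that moment. If $w \notin S$ then $z_w = 0$, $w \ne t$, hence $\bar c_{vw} \ge 0$, and the bound reduces to $z_v \le \bar c_{vw}$, which holds by invariant (b). The remaining case $v \notin S$, $w \in S$ is handled by observing that since $R$ is empty at termination, $v \notin S$ forces $v$ to have never been enrolled in $R$; thus when $w$ was extracted the relaxation saw $z_v = 0 \le \bar c_{vw} + \gamma_{vw} z_w$, for otherwise it would have fired. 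Equality on tree arcs is immediate from the update rule, which sets $\pred(v) \gets vw$ exactly when $z_v$ is tightened to $\bar c_{vw} + \gamma_{vw} z_w$. Adding $y$ back at the end converts each inequality into its counterpart for the original costs $c$, delivering all four output guarantees; termination is trivial since each node enters $S$ at most once.
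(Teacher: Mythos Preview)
Your argument is correct and takes a genuinely different route from the paper. The paper's proof is shorter and more direct: it establishes by induction on the extraction order that $z_{v_2} \le z_{v_3} \le \cdots \le z_{v_k} < z_{v_1}=0$ (where $v_1=t$), using $\gamma_e \le 1$ and $\bar c_e \ge 0$ for every arc $e$ not entering $t$; it then verifies the reduced-cost inequality on $E(G[S])$ by combining this monotonicity (for arcs $vw$ whose tail was extracted first, so $w \neq t$, giving $z_v \le z_w \le \bar c_{vw} + \gamma_{vw}z_w$) with the relaxation step (for arcs whose head was extracted first). The paper never needs the full Dijkstra invariant $z_w = d_w$. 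Your approach instead proves that stronger invariant, which makes the shortest-path-tree semantics of the output explicit; it also carries out the case analysis for arcs with an endpoint outside $S$, which the paper's written proof leaves implicit.

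Two small points to tighten. First, the displayed ``decomposition'' should be the inequality
\[
\bar c(P) \ \ge\ \bar c(P') + \gamma(P')\bigl(\bar c_{v_{j-1}v_j} + \gamma_{v_{j-1}v_j}\, d_{v_j}\bigr),
\]
not an equality, since the suffix of $P$ from $v_j$ need not realize $d_{v_j}$; the chain $z_w \le z_{v_{j-1}} \le \gamma(P')\,z_{v_{j-1}} \le \gamma(P')\bigl(\bar c_{v_{j-1}v_j}+\gamma_{v_{j-1}v_j}d_{v_j}\bigr) \le \bar c(P)$ then goes through exactly as you describe. Second, the sub-case $v,w\in S$ with $w$ extracted after $v$ uses $z_w = d_w$, not merely $z_w \le d_w$; you should record the easy direction $z_w \ge d_w$, which holds because the tree path from $w$ to $t$ is a $w$--$t$ walk whose reduced cost is exactly $z_w$.
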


\begin{proof}
We proceed by induction on the number of elapsed iterations $k$. Let $z$ be the node labels at the end of iteration $k$. 
For each $i\leq k$, let $v_i$ be the node added to $S$ in iteration $i$. Note that $z_S$ remains unchanged in future iterations. We first show that $z_{v_2}\leq z_{v_3}\leq \dots \leq z_{v_k} <z_{v_1} = 0$. The base case $k=1$ is true due to our initialization, while the base case $k=2$ is true because $v_2\in R$. For the inductive step, suppose that the claim is true for some $k\geq 2$. Let $v_{k+1}=\argmin_{v\in R}\set{z_v}$ and $v_j = \pred(v_{k+1})$ for some $j\leq k$. We know that $z_{v_{k+1}} < 0$ because $v_{k+1}\in R$. If $j<k$, then $z_{v_{k+1}}\geq z_{v_k}$, as otherwise $v_k$ would not have been chosen to enter $S$ in iteration $k$. If $j=k$, using the fact that $\gamma_{v_{k+1}v_k}\leq 1$ and $\bar{c}_{v_{k+1}v_k}\geq 0$, we obtain
\[z_{v_{k+1}} = \bar{c}_{v_{k+1}v_k} + \gamma_{v_{k+1}v_k}z_{v_k} \geq z_{v_k}.\]
It is left to show that $\bar{c}_{vw} + \gamma_{vw}z_w - z_v\geq 0$ for all $vw\in E(G[S])$. The base case $k=1$ is trivially true. For the inductive step, suppose that the statement is true for some $k\geq 1$. We know that $z_{v_{k+1}}\leq \bar{c}_{v_{k+1}v} + \gamma_{v_{k+1}v}z_v$ for every outgoing arc $v_{k+1}v\in E(G[S])$. 
For every incoming arc $vv_{k+1}\in E(G[S])$, using the fact that $\gamma_{vv_{k+1}}\leq 1$ and $\bar{c}_{vv_{k+1}}\geq 0$, we get
\[z_v \leq \bar{c}_{vv_{k+1}} + \gamma_{vv_{k+1}}z_v \leq \bar{c}_{vv_{k+1}} + \gamma_{vv_{k+1}}z_{v_{k+1}},\]
where the second inequality follows from $z_v\leq z_{v_{k+1}}$.
\end{proof}

In every phase $k$ of Algorithm \ref{alg:main}, Algorithm \ref{alg:dijkstra} now replaces {\sc Grapevine} as the new value and supergradient oracle of $f$.
Given an optimal dual solution $y$ to $f(\alpha)$ for some $\alpha\in \R$, Algorithm \ref{alg:dijkstra} is used to compute an optimal dual solution to $f(\alpha')$ for any $\alpha'<\alpha$.
In particular, we run it on the modified graph $G^{(k)}_u$ with input node labels $\bar{y}$ defined by $\bar{y}_{u'} := \alpha$ and $\bar{y}_v := y_v$ for all $v\neq u'$, target node $t=u'$, and parameter $\alpha'<\alpha$.
Note that $u'$ has out-degree zero in $G^{(k)}_u$ by construction.
Let $(\bar{z},T)$ be the node labels and tree returned by Algorithm \ref{alg:dijkstra}, where $\bar{z}_V$ is an optimal dual solution to $f(\alpha')$.
A supergradient at $f(\alpha')$ can be inferred from the output via complementary slackness.
Specifically, if $u\in V(T)$, then $\gamma(P)-1\in \partial f(\alpha')$ where $P$ is the unique $u$-$u'$ path in $T$.
Otherwise, $u$ can reach a flow-absorbing cycle in the tight subgraph with respect to $\bar{z}$, so $-1\in \partial f(\alpha')$.

An efficient implementation of Dijkstra's algorithm using Fibonacci heaps was given by Fredman and Tarjan \cite{journals/jacm/FredmanT87}. 
It can also be applied to our setting, with the same running time of $O(m+n\log n)$.
Consequently, we obtain a faster running time of Algorithm \ref{alg:main} for DMDPs. 

\begin{corollary}
Algorithm \ref{alg:main} solves deterministic MDPs in $O(mn(m+n\log n))$ time.
\end{corollary}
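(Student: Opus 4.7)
The plan is to combine three ingredients already developed in the paper: the correctness guarantee of Theorem~\ref{thm:correct}, the per-phase iteration bound of Theorem~\ref{thm:runtime}, and the classical Fibonacci-heap implementation of Dijkstra's algorithm, which transfers to our variant Algorithm~\ref{alg:dijkstra}. Concretely, the total running time factors as (number of phases) $\times$ (Newton iterations per phase) $\times$ (cost of one oracle call), and I want to show each factor is $O(n)$, $O(m)$, and $O(m + n\log n)$ respectively.

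First I would observe that correctness is not an issue: Theorem~\ref{thm:correct} was proved treating the shortest-path routine abstractly as a value/supergradient oracle for $f$, and in the DMDP setting $\gamma\le\mathbf{1}$ rules out bicycles, so Theorem~\ref{thm:feas_char} says infeasibility can only come from a negative unit-gain cycle. Such a cycle must lie in the subgraph of arcs with $\gamma_e=1$ and can be detected once at the start in $O(mn)$ time, which is absorbed by the final bound. The iteration count per phase is then precisely Theorem~\ref{thm:runtime}, which gives $O(|E(G^{(k)})|)=O(m)$ iterations in phase $k$, hence $O(mn)$ total iterations across all $n$ phases. The per-phase bookkeeping (one Bellman--Ford call to detect a flow-absorbing cycle at $u$, the initial oracle call, and setting up $\bar y$) adds only $O(mn)$ per phase and is dominated by the iteration cost.

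The step requiring the most care is justifying $O(m+n\log n)$ per oracle call. On entry to an iteration of Algorithm~\ref{alg:newton_fast} in phase $k$, the labels $\bar y$ passed to the oracle are, by the invariant maintained in the proof of Theorem~\ref{thm:correct}, dual feasible for the current parameter, so the reduced costs $\bar c_{vw}=c_{vw}+\gamma_{vw}\bar y_w-\bar y_v$ are nonnegative on $G^{(k)}_u$. This matches precisely the input precondition of Algorithm~\ref{alg:dijkstra}. The key lemma preceding the corollary shows that under $\gamma\le\mathbf{1}$ the extracted keys $z_{v_i}$ are monotone, so the algorithm is a bona fide Dijkstra scan on a nonnegatively weighted graph and admits the Fredman--Tarjan Fibonacci-heap implementation in $O(m+n\log n)$~\cite{journals/jacm/FredmanT87}. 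The supergradient read-off from the returned in-tree $T$ (namely $\gamma(P)-1$ if the unique $u$-$u'$ path $P$ exists in $T$, and $-1$ otherwise) is well-defined by complementary slackness, and after the call $\bar z$ again satisfies the nonnegative reduced-cost invariant for the next iteration.

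The main obstacle in a rigorous write-up is to check that both the basic and the look-ahead step of Algorithm~\ref{alg:newton_fast} preserve this reduced-cost invariant, so that Algorithm~\ref{alg:dijkstra} can be invoked repeatedly without a recalibration pass; but the returned labels of each successful oracle call are themselves a dual feasible solution to the updated $f(\delta)$, and a rejected look-ahead does not modify $\bar y$, so the invariant propagates. Multiplying $n\cdot O(m)\cdot O(m+n\log n)$ yields the claimed $O(mn(m+n\log n))$ bound.
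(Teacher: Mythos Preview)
The proposal is correct and follows essentially the same approach as the paper, which states the corollary as a direct consequence of the preceding discussion (replace the $O(mn)$-time \textsc{Grapevine} oracle by the $O(m+n\log n)$-time Dijkstra variant, then invoke Theorem~\ref{thm:runtime} and multiply through). Your write-up is in fact more careful than the paper's own treatment: you explicitly track the reduced-cost invariant required by Algorithm~\ref{alg:dijkstra} across both the standard and look-ahead steps of Algorithm~\ref{alg:newton_fast}, and you correctly budget $O(mn)$ for the per-phase initialization (Bellman--Ford plus an initial \textsc{Grapevine}-style call before the invariant is established), noting that this cost is absorbed into $O(mn(m+n\log n))$.
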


\section{Parametric Submodular Function Minimization}
\label{sec:sfm}

Let $V$ be a set with $n$-elements and define $2^V := \{S: S \subseteq V\}$
to be the set of all subsets of $V$. A function $h: 2^{V} \rightarrow \R$ is
submodular if \[
h(S) + h(T) \geq h(S \cap T)
+ h(S \cup T)\quad \forall S,T \subseteq V\, .
\]
 Given non-negative submodular function $h: 2^{V} \rightarrow
\R_+$ and a vector $a \in \R^V$ satisfying $\max_{i \in V} a_i > 0$, we
examine the problem of computing
\begin{equation}
\delta^* := \max \{\delta: \min_{S \subseteq V} h(S) - \delta a(S) \geq 0\}, \label{def:p-sfm}
\end{equation}
where $a(S) := \sum_{i \in S} a_i$. As the input model, we assume access to an 
evaluation oracle for $h$, which allows us to query $h(S)$ for any set $S
\subseteq V$. The above problem models the line-search problem inside a
submodular polyhedron and has been studied
in~\cite{conf/ipco/GoemansGJ17,journals/do/Nagano07,journals/or/Topkis78}.

To connect to the root finding problem studied in previous sections, for $\delta \in \R$, we define
\[
f(\delta) :=  \min_{S \subseteq V} h_\delta(S) := \min_{S \subseteq V} h(S) - \delta a(S).
\] 
Since $f$ is the minimum of $2^n$ affine functions, $f$ is a piecewise linear
concave function. Noting that $f$ is continuous, problem~\eqref{def:p-sfm}
can be equivalently restated as that of computing the largest root of $f$,
i.e., the largest $\delta^* \in \R$ such that $f(\delta^*) = 0$. The
assumption that $h$ is non-negative ensures that $f(0) \geq 0$, and the
assumption that $\max_{i \in V} a_i > 0$ ensures that $\delta^*$ exists and
$\delta^* \geq 0$ (see the initialization section below).  Given the root
finding representation, we may apply the Newton--Dinkelbach method on $f$ to
compute $\delta^*$. This approach was taken by Goemans, Gupta and
Jaillet~\cite{conf/ipco/GoemansGJ17}, who were motivated to give a more efficient
alternative to the parametric search based algorithm of
Nagano~\cite{journals/do/Nagano07}. Their main result is as follows:

\begin{theorem}
\label{thm:p-sfm}
The Newton-Dinkelbach method requires at most $n^2 + O(n \log^2 n)$ iterations to solve~\eqref{def:p-sfm}.
\end{theorem}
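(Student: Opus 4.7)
The plan is to run the (accelerated) Newton--Dinkelbach method on $f$, using the SFM oracle at each step. At iteration $i$ the oracle returns a minimiser $S^{(i)} \subseteq V$ of $h_{\delta^{(i)}}$, giving the value $f(\delta^{(i)}) = h_{\delta^{(i)}}(S^{(i)})$ and the natural supergradient $g^{(i)} = -a(S^{(i)}) \in \partial f(\delta^{(i)})$. Because $f(\delta^*) = 0$ by the definition of $\delta^*$ and the non-negativity assumption on $h$, the Bregman divergence collapses to the very clean quantity
$$D_f(\delta^*, \delta^{(i)}) = h(S^{(i)}) - \delta^{(i)} a(S^{(i)}) - a(S^{(i)})(\delta^* - \delta^{(i)}) - 0 = h_{\delta^*}(S^{(i)}) \geq 0.$$
This already removes the analytic/continuous flavour of the potential: the progress measure is the evaluation of the shifted submodular function $h_{\delta^*}$ on the produced sets.

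The next step is to combine the geometric decay of $D_f$ (Lemmas \ref{lem:bregman_monotone} and \ref{lem:bregman_decay}) with submodularity. I would break the run into \emph{phases} of length $O(\log n)$ during which $h_{\delta^*}(S^{(i)})$ drops by a factor of $\Omega(n)$; this is paid for by the look-ahead halving every two iterations. Inside a phase, I would use the uncrossing inequality
$$h_{\delta^*}(S^{(i)} \cap S^{(j)}) + h_{\delta^*}(S^{(i)} \cup S^{(j)}) \leq h_{\delta^*}(S^{(i)}) + h_{\delta^*}(S^{(j)})$$
together with $h_{\delta^*} \geq 0$ (which holds on all of $2^V$ because $\delta^*$ is a root) to show that whenever a new iterate has sufficiently smaller potential than an earlier one, the approximate minimisers must nest. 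Concretely, this forces $S^{(i)}$ into a strictly smaller ring family of ``still-feasible'' sets, which is the structural commitment used in Goemans--Gupta--Jaillet.

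The main obstacle, and the heart of the argument, is turning this nesting into a monotone combinatorial quantity that decreases at least once per phase. I would track the maximal minimiser $S^*$ of $h_{\delta^*}$ (well defined by submodularity) and, for each phase, pin down one new element of $V$ whose membership in $S^*$ (or complement) is decided. With at most $n$ such elements to decide and $O(\log n)$ Newton iterations per phase, phases contribute $O(n \log n)$ iterations; an extra $\log n$ factor comes from the multi-scale nature of the $h_{\delta^*}$ potential, since one typically has to choose the proximity threshold carefully, giving the $O(n \log^2 n)$ term.

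The residual $n^2$ term would be absorbed by a direct counting argument for the ``initial'' regime before enough geometric decay has accumulated to trigger the reduction: one shows that distinct iterates produce distinct sets $S^{(i)}$ with strictly increasing supergradients (Lemma \ref{lem:monotone}), and that at most $n$ such sets can be consistent with each of the $n$ possible values of the reduced combinatorial problem's dimension, yielding an $n \cdot n$ bookkeeping bound. Summing the two contributions gives the claimed $n^2 + O(n \log^2 n)$.
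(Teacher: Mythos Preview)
Your opening observation is exactly what the paper uses: since $f(\delta^*)=0$, the Bregman divergence collapses to $D_f(\delta^*,\delta^{(i)}) = h_{\delta^*}(S^{(i)})$, and the look-ahead decay (Lemma~\ref{lem:bregman_decay}) gives $h_{\delta^*}(S^{(i)}) < \tfrac12 h_{\delta^*}(S^{(i-2)})$. From this point on, however, your proposal diverges from the paper and does not close.

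The paper does \emph{not} argue via nesting, element-pinning, or phases of length $O(\log n)$. Instead it feeds the geometric decrease of $h_{\delta^*}(S^{(i)})$ into two lemmas adapted from Goemans--Gupta--Jaillet: (i) a ring-family growth lemma showing that if $h_{\delta^*}(T_{i+1})>4h_{\delta^*}(T_i)$ then $T_{i+1}\notin \mathcal{R}(\{T_1,\dots,T_i\})$, proved by bounding $\max_{S\in\mathcal{R}_i} h_{\delta^*}(S)\le 2h_{\delta^*}(T_i)$ via submodularity and non-negativity; and (ii) the chain bound that any strictly increasing sequence of ring families in $2^V$ has length at most $\binom{n+1}{2}+1$, coming from the Birkhoff digraph representation. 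Subsampling every fourth iterate then gives $\lfloor(\ell+3)/4\rfloor\le\binom{n+1}{2}+1$, i.e.\ $\ell\le 2n^2+2n+4$. (The exact $n^2+O(n\log^2 n)$ bound is attributed to~\cite{conf/ipco/GoemansGJ17}; the paper only reproves the weaker quadratic bound.)

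Your alternative has genuine gaps. First, the uncrossing inequality by itself does not force nesting: from $h_{\delta^*}(S\cap T)+h_{\delta^*}(S\cup T)\le h_{\delta^*}(S)+h_{\delta^*}(T)$ and $h_{\delta^*}\ge 0$ you cannot conclude that $S\subseteq T$ or $T\subseteq S$, only that one of the two derived sets has value bounded by the average. The step ``forces $S^{(i)}$ into a strictly smaller ring family'' is precisely the content of the paper's ring-family growth lemma and requires the full inductive argument over the generated family, which you have not supplied. Second, the mechanism ``pin down one new element of $V$ whose membership in $S^*$ is decided'' is not substantiated; nothing in the setup gives you a monotone coordinate-revealing process, and the minimisers $S^{(i)}$ need not be nested along the trajectory. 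Third, your accounting for the $n^2$ term --- ``at most $n$ sets consistent with each of the $n$ possible values of the reduced problem's dimension'' --- is not an argument: there is no dimension parameter in sight, and the number of distinct supergradient values $a(S)$ can be exponential. The actual $n^2$ comes from the $\binom{n+1}{2}$ bound on ring-family chains, which is the structural fact your sketch is missing.
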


The goal of this section is to give a simplified potential function based
proof of the above theorem using the \emph{accelerated} Newton--Dinkelbach
method (Algorithm~\ref{alg:newton_fast}), where we will give a slightly
weaker $2n^2+2n+4$ bound on the iteration count.  Our analysis uses the same combinatorial ring family analysis as in \cite{conf/ipco/GoemansGJ17}, but the Bregman divergence enables considerable simplifications.

\subsection{Implementing the Accelerated Newton--Dinkelbach} 

We explain how to implement and initialize the accelerated Newton--Dinkelbach
method in the present context. To begin, Algorithm~\ref{alg:newton_fast}
requires access to the supergradients of $f$. For $\delta \in \R$, it is easy
to verify that
\[
S \in {\rm argmin} \{h_\delta(T): T \subseteq V\} \Rightarrow -a(S) \in \partial f(\delta).
\]
Therefore, computing supergradients of $f$ can be reduced 
to computing minimizers of the submodular functions $h_\delta(S) :=
h(S) - \delta a(S)$, $\delta \in \R$. Submodular function minimization (SFM) is
a classic problem in combinatorial optimization and has been extensively
studied from the viewpoint of strongly polynomial
algorithms~\cite{conf/soda/dvz,journals/MP/Iwata08,conf/soda/SO09,conf/focs/LSW15,conf/soda/jiang21}.
The fastest strongly polynomial running time is due to Jiang~\cite{conf/soda/jiang21} 
who gave an algorithm for SFM using 
$O(n^3)$ calls to the evaluation oracle.

In what follows, we assume access to an SFM oracle, that we will call on the
submodular functions $h_\delta$, for $\delta \in \R$. Each iteration of
Algorithm~\ref{alg:newton_fast} requires two calls to a supergradient oracle,
one for the standard step and one for the look-ahead step, and hence can be
implemented using two calls to the SFM oracle. Gupta, Goemans and
Jaillet~\cite{conf/ipco/GoemansGJ17} were directly concerned with the number of
calls to an SFM oracle, which is exactly equal to the number of iterations of
standard Newton--Dinkelbach (it requires only one SFM call per iteration instead of two). As mentioned above, we will prove a $2n^2+2n+4$
bound on the iteration count for accelerated Newton--Dinkelbach, which will
recover the bound on the number of SFM calls of~\cite{conf/ipco/GoemansGJ17} up to
a factor $4$. Since accelerated Newton--Dinkelbach is always as fast as the
standard method (it goes at least as far in each iteration), the iteration bound in Theorem~\ref{thm:p-sfm} in fact applies to the accelerated method as well.  

We now explain how to initialize the method. For this purpose,
Algorithm~\ref{alg:newton_fast} requires $\delta^{(1)} \in \R$ and $g^{(1)}
\in \partial f(\delta^{(1)})$ such that $f(\delta^{(1)})\le 0$ and $g^{(1)} < 0$. We proceed as in~\cite{conf/ipco/GoemansGJ17} and let $\delta^{(1)} :=
\argmin \{h(\{i\})/a_i: i \in V, a_i > 0\} \geq 0$, which is well-defined by
assumption on $a$. We compute $f(\delta^{(1)})$ by the SFM oracle. Note that
\[
f(\delta^{(1)}) = \min_{S \subseteq V} h_\delta(S) \leq \min_{i \in V, a_i > 0} h(\{i\})-\delta^{(1)} a_i = 0. 
\]
If $f(\delta^{(1)}) = 0$, we return $\delta^{(1)}$, as we are already done.
Otherwise if $f(\delta^{(1)}) < 0$, 
set $g^{(1)} =- a(S^{(1)})$, where $S^{(1)} \in {\rm argmin}_{S \subseteq V} h_{\delta^{(1)}}(S)$ as returned by the oracle.
From here, note that
\[
0 > f(\delta^{(1)}) = h_{\delta^{(1)}}(S^{(1)}) = h(S^{(1)}) - \delta^{(1)} a(S^{(1)}) = h(S^{(1)}) + g^{(1)} \delta^{(1)} \geq g^{(1)} \delta^{(1)},
\]
where the last inequality follows by non-negativity of $h$. Since $\delta^{(1)} \geq 0$, the above implies that $\delta^{(1)} > 0$ and $g^{(1)} < 0$. 
We may
 therefore
 initialize Algorithm~\ref{alg:newton_fast} with
$\delta^{(1)}$ and $g^{(1)}$. %

Assuming $f(\delta^{(1)}) < 0$, the largest root $\delta^*$ of
$f$ is guaranteed to exists in the interval $[0,\delta^{(1)})$. This follows
since $f$ is continuous, $f(0) = \min_{S \subseteq V} h(S) \geq 0$ (by
non-negativity of $h$) and $f(\delta^{(1)}) < 0$. In particular,
Algorithm~\ref{alg:newton_fast} on input $f,\delta^{(1)},g^{(1)}$ is
guaranteed to output the desired largest root $\delta^*$ in a finite number
of iterations (recalling that $f$ is piecewise affine with $2^n$ pieces). In
the next subsection, we prove a $2n^2 + 2n + 4$ bound on the number of
iterations.   

\subsection{Proof of the $2n^2+2n+4$ Iteration Bound}

Let $\delta^{(1)} > \cdots > \delta^{(\ell)} = \delta^{*}$ denote iterates of
Algorithm~\ref{alg:newton_fast} on input $f$ and $\delta^{(1)},g^{(1)} < 0$
as above. For each $i \in [\ell]$, let $S^{(i)}$ be an any set satisfying 
\[
S^{(i)} \in {\rm argmax} \{ a(S): S \in {\rm argmin}_{T \subseteq V} h_{\delta^{(i)}}(T) \}.
\]
It is not hard to verify that $S^{(i)}$, $i \in [\ell]$, is a minimizer of
$h_{\delta^{(i)}}$ inducing the right derivative of $f$ at $\delta^{(i)}$.
Precisely, $-a(S^{(i)}) = \inf_{g \in \partial f(\delta^{(i)})} g$, $\forall i \in
[\ell]$. We note that the sets $S^{(i)}$, $i \in [\ell]$, need not be the sets
outputted by the SFM oracle, and are only required for the analysis of the
algorithm.

Our goal is to prove that $\ell \leq 2n^2 + 2n + 4$. For this
purpose, we rely on the key idea of~\cite{conf/ipco/GoemansGJ17}, which is to
extract an increasing sequence of \emph{ring-families} from the sets $S^{(i)}$, $i \in [\ell]$.

A \emph{ring family} $\mathcal{R} \subseteq 2^V$ is a subsystem of sets that is
closed under unions and intersections, precisely $A,B \in \mathcal{R}
\Rightarrow$ $A \cap B, A \cup B \in \mathcal{R}$. Given $\mathcal{T}
\subseteq 2^V$, we let $\mathcal{R}(\mathcal{T})$ denote the smallest
ring-family containing $\mathcal{T}$. We will use the following lemma
of~\cite{conf/ipco/GoemansGJ17} which bounds the length of an increasing sequence of
ring-families:

\begin{lemma}[{\cite[Theorem 2]{conf/ipco/GoemansGJ17}}]
\label{lem:inc-seq-bnd}
Let $\emptyset \neq \mathcal{R}_1 \subsetneq \mathcal{R}_2 \subsetneq \cdots \subsetneq \mathcal{R}_k \subseteq 2^V$, where $|V| = n$. Then $k \leq \binom{n+1}{2} + 1$.
\end{lemma}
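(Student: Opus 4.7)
The plan is to associate to each nonempty ring family $\mathcal{R}\subseteq 2^V$ a canonical combinatorial invariant and to bound the number of values that invariant can take along any strictly increasing chain.

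First, I would establish the structural description of a single ring family. For any nonempty ring family $\mathcal{R}$, closure under $\cap$ and $\cup$ forces both $A_\mathcal{R}:=\bigcap\mathcal{R}$ and $B_\mathcal{R}:=\bigcup\mathcal{R}$ to belong to $\mathcal{R}$. On the ``active'' elements $W_\mathcal{R}:=B_\mathcal{R}\setminus A_\mathcal{R}$, the shifted family $\{S\setminus A_\mathcal{R}:S\in\mathcal{R}\}$ is a sublattice of $2^{W_\mathcal{R}}$ containing both $\emptyset$ and $W_\mathcal{R}$. By the standard Birkhoff-type representation for finite distributive lattices, this sublattice is exactly the set of order ideals of a unique preorder $\preceq_\mathcal{R}$ on $W_\mathcal{R}$, defined by $x\preceq_\mathcal{R} y$ iff every $S\in\mathcal{R}$ containing $y$ also contains $x$. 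Consequently $\mathcal{R}$ is uniquely recovered from the triple $(A_\mathcal{R},B_\mathcal{R},\preceq_\mathcal{R})$.

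Next I would prove a strict monotonicity statement: if $\mathcal{R}\subsetneq\mathcal{R}'$, then at least one of $A_{\mathcal{R}'}\subsetneq A_\mathcal{R}$, $B_{\mathcal{R}'}\supsetneq B_\mathcal{R}$, or a strict weakening of the preorder (an implication ``$y\in T\Rightarrow x\in T$'' that held for every $T\in\mathcal{R}$ is broken by some $T\in\mathcal{R}'\setminus\mathcal{R}$) must occur; otherwise the triples would coincide and, by Birkhoff in reverse, so would the ring families. To merge these three kinds of changes into a single integer potential, I would encode the triple as a preorder on the augmented ground set $V\cup\{\hat 0,\hat 1\}$: $\hat 0$ sits strictly below exactly the elements of $A_\mathcal{R}$, $\hat 1$ sits strictly above exactly the elements of $V\setminus B_\mathcal{R}$, and $\preceq_\mathcal{R}$ is recorded on $W_\mathcal{R}$, with the fixed relation $\hat 0\preceq\hat 1$. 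Each proper inclusion strictly removes at least one ordered pair from this extended preorder, so the number of recorded non-trivial pairs is a strictly decreasing integer potential along the chain. Counting the available pairs shows it lies in an integer range of size $\binom{n+1}{2}$, which yields $k\le\binom{n+1}{2}+1$.

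I expect the main technical hurdle to be the strict monotonicity: while it is essentially immediate that each of the three pieces $A_\mathcal{R}$, $B_\mathcal{R}$, $\preceq_\mathcal{R}$ can only weaken under inclusion, it requires a careful case analysis to argue that adding a single generating set always forces a visible change in the combined augmented preorder, and that the bookkeeping for elements migrating between the mandatory, forbidden, and active parts is consistent. Getting the exact constant $\binom{n+1}{2}+1$ rather than a loose $O(n^2)$ bound depends on choosing the augmented preorder so that the pair count is sharp.
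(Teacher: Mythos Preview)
Your overall plan---associate to each ring family a Birkhoff-type invariant and show it strictly decreases along the chain---matches the approach the paper attributes to \cite{conf/ipco/GoemansGJ17}. However, the specific potential you propose is not monotone, and this is a genuine gap rather than a bookkeeping nuisance.

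Take $V=\{1,2,3\}$ and the ring families $\mathcal{R}=\{\{1\},\{1,2\},\{1,2,3\}\}\subsetneq\mathcal{R}'=\{\emptyset,\{1\},\{1,2\},\{1,2,3\}\}$. For $\mathcal{R}$ one has $A_\mathcal{R}=\{1\}$, $B_\mathcal{R}=V$, $W_\mathcal{R}=\{2,3\}$ with the single relation $2\preceq 3$; your encoded pairs are $(\hat 0,1),(2,3),(\hat 0,\hat 1)$, so the count is $3$. For $\mathcal{R}'$ one has $A_{\mathcal{R}'}=\emptyset$, $W_{\mathcal{R}'}=V$, and the preorder on $W_{\mathcal{R}'}$ is $1\preceq 2\preceq 3$ together with $1\preceq 3$; your encoded pairs are $(1,2),(1,3),(2,3),(\hat 0,\hat 1)$, so the count is $4$. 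The potential went \emph{up}. The mechanism is exactly the migration you flagged: when element $1$ leaves $A$ and enters $W$, you drop the single pair $(\hat 0,1)$ but simultaneously create a new pair $(1,w)$ for \emph{every} $w$ already in $W$ (since $1$ was previously in every set, hence sits below all of $W$). Symmetrically, an element leaving $V\setminus B$ enters $W$ as a maximum and creates $|W|$ new pairs while losing only one.

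The obvious repair---record the full preorder $x\preceq y\Leftrightarrow(\forall S\in\mathcal{R}:y\in S\Rightarrow x\in S)$ on all of $V$, augmented by $\hat 0,\hat 1$---does give a genuinely decreasing potential, but its range is of order $n^2$, yielding only $k\le n^2+O(n)$ rather than $\binom{n+1}{2}+1$. Obtaining the sharp constant requires an additional idea (in \cite{conf/ipco/GoemansGJ17} this is handled by tracking how the minimal digraph representation evolves more carefully than a raw edge count); your encoding as stated does not reach it.
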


The proof of the above lemma is based on the Birkhoff representation of a
ring family. Precisely, for any ring-family $\mathcal{R} \subseteq 2^V$, with
$\emptyset,V \in \mathcal{R}$, there exists a directed graph $G$ on $V$, such
that the sets $S \in \mathcal{R}$ are exactly the subsets of vertices of $G$
having no out-neighbors. The main idea for the bound is that the digraph
representation of $\mathcal{R}_i, i \in [k],$ must lose edges as $i$
increases.  The next statement is a slightly adapted version of \cite[Theorem 5]{conf/ipco/GoemansGJ17} that it sufficient for our purposes. It shows that a sequence of sets with geometrically increasing $h$ values forms an increasing sequence of ring families. We include a proof for completeness.

\begin{lemma}\label{lem:ring-family-increase}
Let $h : 2^V \to \mathbb{R}_+$ be a non-negative submodular function.
Consider a sequence of distinct sets $T_1,T_2,\ldots,T_q \subseteq V$ such
that $h(T_{i+1}) > 4h(T_i)$ for $i \in [q-1]$. Then $T_{i+1} \notin
\mathcal{R}(\{T_1,\ldots ,T_i\})$ for all $i \in [q-1]$. 
\end{lemma}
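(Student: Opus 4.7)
My plan is to argue by contradiction via a recursive bound on the potential $\phi_i := \max_{S \in R_i} h(S)$, where $R_i := \mathcal{R}(\{T_1,\dots,T_i\})$. The goal is to show that $T_{i+1} \in R_i$ would force $h(T_{i+1}) < 2 h(T_i)$, contradicting the hypothesis $h(T_{i+1}) > 4 h(T_i)$.

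The key step is to establish the recursion $\phi_i \le h(T_i) + 2\phi_{i-1}$ (with the convention $\phi_0 := 0$). To this end, I would take an arbitrary $R \in R_i$, put it in disjunctive normal form $R = \bigcup_j \bigcap_{k \in K_j} T_{\sigma(j,k)}$ (which exists since $R_i$ is generated from $T_1,\dots,T_i$ by $\cup,\cap$, and distributivity $(X \cup Y) \cap Z = (X \cap Z) \cup (Y \cap Z)$ pushes intersections inward), and split the clauses according to whether they contain $T_i$. Regrouping yields a decomposition of the form $R = (T_i \cap A) \cup B$ with $A, B \in R_{i-1}$, together with the degenerate variants $R = T_i$, $R = T_i \cap A$, $R = T_i \cup B$, or $R \in R_{i-1}$ that arise when a DNF clause equals $T_i$ alone or when one side of the partition of clauses is empty. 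Submodularity plus non-negativity of $h$ yield the subadditive bounds $h(X \cup Y) \le h(X) + h(Y)$ and $h(X \cap Y) \le h(X) + h(Y)$, so in every case $h(R) \le h(T_i) + h(A) + h(B) \le h(T_i) + 2\phi_{i-1}$.

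Unrolling gives $\phi_i \le \sum_{j=1}^i 2^{i-j} h(T_j)$. The iterated hypothesis yields $h(T_j) < h(T_i)/4^{i-j}$ for every $j \le i$, so
\[
\phi_i < h(T_i)\sum_{j=1}^i \frac{2^{i-j}}{4^{i-j}} = h(T_i)\sum_{k=0}^{i-1} 2^{-k} < 2\,h(T_i).
\]
Therefore $T_{i+1} \in R_i$ would give $h(T_{i+1}) \le \phi_i < 2 h(T_i) < 4 h(T_i)$, the desired contradiction. In fact, this shows the stronger statement that the hypothesis could be weakened to $h(T_{i+1}) > 2 h(T_i)$.

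The hardest step I anticipate is the clean bookkeeping of the DNF decomposition, particularly the degenerate cases where a clause is $T_i$ alone (which forces an ``$A = V$'' slot even though $V$ need not lie in $R_{i-1}$), where all clauses contain $T_i$ (so $B = \emptyset$), or where none do (so $R \in R_{i-1}$). Checking each degenerate case separately to confirm that the uniform bound $h(R) \le h(T_i) + 2\phi_{i-1}$ still holds is the real content; the remaining recursion unrolling and geometric sum are routine.
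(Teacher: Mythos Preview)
Your main argument is correct and is essentially the paper's proof. Both proofs establish that every element of $\mathcal{R}_i$ has the form $(T_i\cap A)\cup B$ with $A,B\in\mathcal{R}_{i-1}$ (together with the degenerate variants you list), deduce the recursion $\phi_i\le h(T_i)+2\phi_{i-1}$ via submodularity and non-negativity, and conclude $\phi_i<2h(T_i)$. The paper asserts the structure result directly and folds the recursion into a one-step induction; you derive the structure via a DNF argument and unroll the recursion into a geometric sum. These are cosmetic differences, not substantive ones.

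Your closing remark, however, is incorrect: the argument does \emph{not} show that the hypothesis can be weakened to $h(T_{i+1})>2h(T_i)$. The bound $\phi_i<2h(T_i)$ itself relies on the factor-$4$ hypothesis at every earlier index. Under only $h(T_{j+1})>2h(T_j)$ you would get $h(T_j)<h(T_i)/2^{\,i-j}$, and your unrolling then gives
\[
\phi_i \;\le\; \sum_{j=1}^{i} 2^{\,i-j}\,h(T_j) \;<\; \sum_{j=1}^{i} h(T_i) \;=\; i\,h(T_i),
\]
which is not enough to contradict $h(T_{i+1})>2h(T_i)$. Equivalently, in the inductive form: from $\phi_{i-1}\le C\,h(T_{i-1})$ and $h(T_{i-1})<h(T_i)/2$ one only obtains $\phi_i\le (1+C)\,h(T_i)$, and no finite $C$ closes the induction. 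So drop that sentence; the rest stands.
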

\begin{proof}
Let $\mathcal{R}_i := \mathcal{R}(\{T_1,\dots,T_i\})$, $\forall i \in [q]$. 
We claim that $\max_{S \in \mathcal{R}_i} h(S) \leq 2h(T_i)$, $\forall i
\in [q]$. This proves $h(T_{i+1}) \notin \mathcal{R}_{i}$, for $i \in [q-1]$, 
since $h(T_{i+1}) > 4h(T_i) \geq 2h(T_i) \geq \max_{S \in \mathcal{R}_{i}}
h(S)$, noting that the second inequality uses that $h$ is non-negative. 

We now prove the claim by induction on $i \in [q]$. The base case $i=1$ is
trivial since $\mathcal{R}_1 = \{T_1\}$. We now assume that $\max_{S \in
\mathcal{R}_i} h(S) \leq 2 h(T_i)$, for $1 \leq i \leq q-1$, and prove the
corresponding bound for $\mathcal{R}_{i+1}$. Recalling that
$\mathcal{R}_{i+1}$ is the ring-family generated by $\mathcal{R}_i$ and
$T_{i+1}$, it is easy to verify that the set system 
\[
\mathcal{R}_i \cup \{T_{i+1}\} \cup \{S \cup T_{i+1}: S \in \mathcal{R}_i\} \cup \{S \cap T_{i+1}: S \in \mathcal{R}_i\} \cup \{S_1 \cup (S_2 \cap T_{i+1}): S_1,S_2 \in \mathcal{R}_i\}
\]
is a ring-family and hence is equal to $\mathcal{R}_{i+1}$. It therefore
suffices to upper bound $h(X)$ for a set $X$ of the above type. For $X \in
\mathcal{R}_i$ or $X = T_{i+1}$, the bound is by assumption. For $X = S_1
\cup (S_2 \cap T_{i+1})$, $S_1,S_2 \in R_{i+1}$, we prove the bound as
follows:
\begin{align*}
h(S_1 \cup (S_2 \cap T_{i+1})) &\leq h(S_1) + h(S_2 \cap T_{i+1}) - h(S_1 \cap S_2 \cap T_{i+1}) \quad \left(\text{ by submodularity of $h$ }\right) \\
&\leq h(S_1) + h(S_2) + h(T_{i+1}) - h(S_1 \cup T_{i+1}) - h(S_1 \cap S_2 \cap T_{i+1}) \\
&\leq h(S_1) + h(S_2) + h(T_{i+1}) \quad \left(\text{ by non-negativity of $h_{\delta^*}$ }\right) \\
&\leq 4 h(T_i) + h(T_{i+1}) \quad \left(\text{ by the induction hypothesis }\right) \\
&\leq 2 h(T_{i+1}). \quad \left(\text{ since } 4h(T_{i}) < h(T_{i+1})~\right) 
\end{align*}
For $X = S \cup T_{i+1}$ or $X = S \cap T_{i+1}$, $S \in \mathcal{R}_i$, similarly to the above, one has 
\[
h(X) \leq h(S) + h(T_{i+1}) \leq 2 h(T_i) + h(T_{i+1}) \leq \frac{3}{2} h(T_{i+1}), \text{ as needed.}
\]
\end{proof}

We now use the Bregman-divergence analysis to show that for the function $h_{\delta^*}$, the sequence of sets $T_i=S^{(\ell-4(j-1))}$, $1 \leq i \leq \lfloor \frac{\ell + 3}{4} \rfloor$ satisifes the conditions of this lemma. %
Combined with~\Cref{lem:inc-seq-bnd}, we get that the number of iterations satisfies
\[
\lfloor (\ell + 3)/4\rfloor \leq \binom{n+1}{2}+1 \Rightarrow \ell \leq 2n^2 + 2n +
4, \text{ as needed.}
\]
\begin{lemma} Let us define 
\[
T_i := S^{(\ell-4(i-1))}\, ,\quad  i\in [q]\quad\mbox{for }q:=\left\lfloor \frac{\ell + 3}{4}\right\rfloor\, .
\]
Then, the function $h_{\delta^*}$ and the sequence of sets $T_1,T_2,\ldots,T_q$ satisfy the conditions in \Cref{lem:ring-family-increase}.
\end{lemma}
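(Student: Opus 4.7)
The plan is to reduce the three required conditions (non-negativity/submodularity of $h_{\delta^*}$, distinctness of the $T_i$, and the $4\times$ geometric increase $h_{\delta^*}(T_{i+1}) > 4h_{\delta^*}(T_i)$) to two ingredients already in hand: the identity $f(\delta^*) = 0$, and the Bregman-divergence decay from Lemma~\ref{lem:bregman_decay}. The key step, which is both the engine and the main technical hurdle of the argument, is the identification
\[
D_f(\delta^*, \delta^{(j)}) = h_{\delta^*}(S^{(j)})\, ,\quad j\in[\ell]\, ,
\]
completely analogous to the linear fractional combinatorial optimization setting.

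First I would verify that $h_{\delta^*}$ is non-negative submodular: submodularity is clear since $\delta^* a$ is linear, and non-negativity follows from $\min_S h_{\delta^*}(S) = f(\delta^*) = 0$, which in turn holds because $\delta^*$ is defined as the largest root of $f$ and the initialization argument from the previous subsection guarantees such a root exists in $[0,\delta^{(1)})$.

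Next I would establish the Bregman identity. Since $\delta^* \le \delta^{(j)}$ (Lemma~\ref{lem:monotone}) and $\delta^* - \delta^{(j)} \le 0$, the supremum $\sup_{g\in\partial f(\delta^{(j)})} g(\delta^* - \delta^{(j)})$ is attained at the \emph{minimum} supergradient, which by the choice of $S^{(j)}$ as the $a$-maximal minimizer of $h_{\delta^{(j)}}$ equals $-a(S^{(j)})$. Expanding,
\begin{align*}
D_f(\delta^*, \delta^{(j)}) &= f(\delta^{(j)}) - a(S^{(j)})(\delta^* - \delta^{(j)}) - f(\delta^*) \\
&= \bigl(h(S^{(j)}) - \delta^{(j)} a(S^{(j)})\bigr) + a(S^{(j)})(\delta^{(j)} - \delta^*) - 0 \\
&= h(S^{(j)}) - \delta^* a(S^{(j)}) = h_{\delta^*}(S^{(j)})\, .
\end{align*}

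With the identity in hand, the geometric increase follows by iterating Lemma~\ref{lem:bregman_decay} twice: for any index $j \ge 5$,
\[
h_{\delta^*}(S^{(j)}) = D_f(\delta^*, \delta^{(j)}) < \tfrac{1}{2} D_f(\delta^*, \delta^{(j-2)}) < \tfrac{1}{4} D_f(\delta^*, \delta^{(j-4)}) = \tfrac{1}{4} h_{\delta^*}(S^{(j-4)})\, .
\]
Setting $j = \ell - 4(i-1)$, so that $T_i = S^{(j)}$ and $T_{i+1} = S^{(j-4)}$, this yields $h_{\delta^*}(T_{i+1}) > 4 h_{\delta^*}(T_i)$ provided $\ell - 4(i-1) \ge 5$, i.e.\ $i \le (\ell-1)/4$. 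The definition $q = \lfloor(\ell+3)/4\rfloor$ gives $q - 1 \le (\ell-1)/4$, so the bound is available for all $i \in [q-1]$. Note that the case $i=1$ is automatic from this chain: since $T_1 = S^{(\ell)}$ satisfies $h_{\delta^*}(T_1) = D_f(\delta^*,\delta^*) = 0$, the inequality $h_{\delta^*}(T_2) > 4\cdot 0 = 0$ is exactly the nontriviality produced by the twofold Bregman decay. Finally, distinctness of $T_1,\ldots,T_q$ is immediate: the values $h_{\delta^*}(T_i)$ are strictly increasing (with $h_{\delta^*}(T_1)=0$ and all later values positive), so the sets themselves cannot coincide. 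The main subtlety to watch in writing this out is the index bookkeeping around $q$ and the fact that the Bregman bound of Lemma~\ref{lem:bregman_decay} requires the middle iterate $\delta^{(j-2)}$ to satisfy $j-2 \ge 3$, which is the constraint that forces the $4$-step spacing rather than a $2$-step spacing.
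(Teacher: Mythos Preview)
Your proposal is correct and follows essentially the same route as the paper: establish non-negativity of $h_{\delta^*}$ via $f(\delta^*)=0$, prove the identity $D_f(\delta^*,\delta^{(j)}) = h_{\delta^*}(S^{(j)})$ using that $-a(S^{(j)})$ is the right derivative, and then invoke Lemma~\ref{lem:bregman_decay} to obtain the geometric decay. Your treatment is in fact slightly more careful than the paper's --- you explicitly verify distinctness of the $T_i$ and check the index constraint $j\ge 5$ against the definition of $q$ --- whereas the paper simply states that the factor-$4$ inequality ``follows by the definition of the $T_i$ sets.'' One small quibble: your closing remark that the constraint $j-2\ge 3$ is what ``forces the $4$-step spacing'' is misleading; the $4$-step spacing is dictated by the factor-$4$ hypothesis in Lemma~\ref{lem:ring-family-increase}, and the index constraint merely confirms that two applications of Lemma~\ref{lem:bregman_decay} are legitimate at every required $j$.
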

\begin{proof}
The function $h_{\delta^*}$ is clearly submodular, and its minimum is 0 since
$0 = f(\delta^*) = \min_{S \subseteq V} h_{\delta^*}(S) =
h_{\delta^*}(S^{(\ell)})=h_{\delta^*}(T_1)$. In particular, $h_{\delta^*}$ is
non-negative. It is left to show $h_{\delta^*}(T_{i+1}) > 4h_{\delta^*}(T_{i})$ for $i \in [q-1]$. For each $\delta^{(i)}$, $i \in [\ell]$, we see that 
\begin{align*}
D_f(\delta^*, \delta^{(i)}) &= f(\delta^{(i)}) + \sup_{g \in \partial f(\delta^{(i)})} g(\delta^*-\delta^{(i)}) - f(\delta^*) \\  
                            &= h_{\delta^{(i)}}(S^{(i)}) - a(S^{(i)})(\delta^*-\delta^{(i)}) \quad \left(\text{ by our choice of $S^{(i)}$ and $f(\delta^*) = 0$ }\right) \\
      &= h(S^{(i)}) - \delta^{(i)} a(S^{(i)}) - a(S^{(i)})(\delta^*-\delta^{(i)}) = h_{\delta^*}(S^{(i)}).
\end{align*}
By Lemma~\ref{lem:bregman_decay} and the above, we get for $3 \leq i \leq l$ that
\begin{equation}
D_f(\delta^*, \delta^{(i)}) < \frac 12 D_f(\delta^*, \delta^{(i-2)})
\Leftrightarrow h_{\delta^*}(S^{(i)}) < \frac 12 h_{\delta^*}(S^{(i-2)}). \label{eq:sfm-exp-dec}
\end{equation}
Then, $h_{\delta^*}(T_{i+1}) > 4h_{\delta^*}(T_i)$ for $i \in [q-1]$ follows
by the definition of the $T_i$ sets.
\end{proof}

\bibliography{references}

\appendix

\begin{toappendix}
\section{Further Explanations}

\subsection{Reducing 2VPI to M2VPI }
\label{subsec: 2VPI_reduction}

Following \cite{journals/tcs/EdelsbrunnerRW89,journals/mp/HochbaumMNT93}, the idea is to replace each variable $y_u$ with $(y^+_u - y^-_u)/2$, where $y^+_u$ and $y^-_u$ are newly introduced variables.
Then, an inequality $ay_u + by_v \leq c$ becomes 
\[a\pr{\frac{y^+_u - y^-_u}{2}} + b\pr{\frac{y^+_v - y^-_v}{2}} \leq c,\]
which contains four variable, but will be adjusted based on the signs of $a$ and $b$: 
If $a$ or $b$ is zero, then the resulting inequality is already monotone and contains two variables. 
Next, if $\sgn(a) = \sgn(b)$, then we replace the inequality with $ay^+_u - by^-_v \leq c$ and $-ay^-_u + by^+_v \leq c$.
Otherwise, we replace it with $ay^+_u + by^+_v \leq c$ and $-ay^-_u - by^-_v \leq c$.
Observe that every inequality in the new system is monotone and supported on exactly two variables.
If $\hat{y}$ is a feasible solution to the original system, then setting $y^+ = \hat{y}$ and $y^- = -\hat{y}$ yields a feasible solution to the new system.
Conversely, if $(\hat{y}^+, \hat{y}^-)$ is a feasible solution to the new system, then setting $y = (\hat{y}^+ - \hat{y}^-)/2$ yields a feasible solution to the original system.
It follows that the two systems are equivalent.

\subsection{Non-existence of shortest paths}
\label{subsec:existence_shortest_cycles_or_paths}

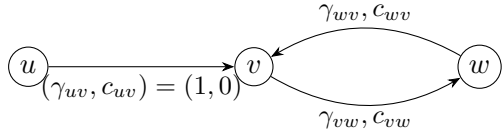
\begin{figure}[ht]
\centering
\begin{tikzpicture}[node distance=3 cm, inner sep=2.5pt, minimum size=2.5pt, auto]
	\node [vertex] (1) {$u$};
	\node [vertex] (2) [right of =1] {$v$};
	\node [vertex] (3) [right of=2] {$w$};
	
	\begin{scope}[>={Stealth[black]}]
	  	\path[->] (1) edge node[weight,below] {$(\gamma_{uv},c_{uv}) = (1, 0)$} (2);
	  	\path[->] (2) edge[bend right] node[weight,below] {$\gamma_{vw}, c_{vw}$} (3);
	  	\path[->] (3) edge[bend right] node[weight,above] {$\gamma_{wv}, c_{wv}$} (2);
  	\end{scope}
\end{tikzpicture}
\caption{A shortest path from $u$ with respect to node labels $y$ may not exist.}
\label{fig: no_shortest_path_instance}
\end{figure}

Consider \Cref{fig: no_shortest_path_instance}. We will sketch three different scenarios in which a shortest path from $u$ with respect to node labels $y\in \R^3$
does not exist. Throughout, let $C$ be the unique directed cycle %
and $C^k$ be the $v$-$v$ walk that traverses $C$ exactly $k \in \N$ times.

\paragraph{Negative unit gain cycle}
Let $\gamma_{wv} = \gamma_{vw} = 1$ and $c_{wv} = c_{vw} = -1$. Then the cycle $C$ fulfils $\gamma(C) = 1$ and $c(C) = -2 < 0$. The concatenation of $(u,v)$ and $C^k$ leads to arbitrarily short walks from $u$. In particular, there exists no shortest path from $u$. 
This observation is independent of the node labels $y$. Recall as well, that the existence of such a cycle renders the M2VPI instance infeasible (\Cref{thm:feas_char}).

\paragraph{Flow-absorbing cycle for large node labels}
Let $\gamma_{vw} = 1$ and $\gamma_{wv} = 1/2$. Then $\gamma(C) = \gamma_{vw}\gamma_{wv} = 1/2$, so $C$ is flow-absorbing. Let further $c_{wv} = c_{vw} = 0$ and $y_w = y_v = 1$. Label-correcting for the cycle $C$ then updates $y_v$ and $y_w$ in two strictly decreasing sequences, which both converge towards 0. Again, the concatenation of $(u,v)$ and $C^k$ leads to a sequence of $u$-$v$ walks that have no smallest element.

\paragraph{Flow-generating cycle for small node labels}
Let $\gamma_{vw} = 1$ and $\gamma_{wv} = 2$. Then $\gamma(C) = \gamma_{vw}\gamma_{wv} = 2$, so $C$ is flow-generating. Let further $c_{wv} = -1, c_{vw} = 0$ and $y_w = y_v = 0$. Label-correcting for the cycle $C$ then updates $y_v$ and $y_w$ in two strictly decreasing  and unbounded sequences. Again, the concatenation of $(u,v)$ and $C^k$ leads to a sequence of $u$-$v$ walks that have no smallest element.

\subsection{From $y^{\max}$ to a finite feasible solution}
\label{sec:2vpi_finite}

In this section, we show how to convert the node labels $y\in \bar{\R}^n$ obtained from Algorithm \ref{alg:main} into a finite feasible solution or an infeasibility certificate of the M2VPI system $(G,c,\gamma)$ in question.
We summarize the classical arguments already used by Aspvall and Shiloach \cite{journals/siamcomp/AspvallS80}.
If $y$ is finite, then we are done because there are no violated arcs in $G$ with respect to $y$.
In fact, $y$ is the pointwise maximal solution by Theorem \ref{thm:correct}.
So, we may assume that $y_u = \infty$ for some $u\in V$.

Define $y^{\min}\in \bar{\R}^n$ as the pointwise minimal solution to $(G,c,\gamma)$ if the system is feasible, where $y^{\min}_v := -\infty$ if and only if the variable $y_v$ is unbounded from below.
Consider the reversed graph $\ole{G}=(V,\ole{E})$, where $\ole{E}:= \set{vu:uv\in E}$ denotes the set of reversed arcs.
The cost and gain factor of each arc $e\in \ole{E}$ are given by $\ole{c}_e:= -c_{\ole{e}}/\gamma_{\ole{e}}$ and $\ole{\gamma}_e:=1/\gamma_{\ole{e}}$ respectively.
The M2VPI system defined by $(\ole{G},\ole{c},\ole{\gamma})$ is equivalent to the original system $(G,c,\gamma)$, which can be verified by performing the change of variables $z = -y$.
Let us run Algorithm \ref{alg:main} on $(\ole{G},\ole{c},\ole{\gamma})$.
By Theorem \ref{thm:correct}, if it returns node labels $z\in \bar{\R}^n$, then $z = -y^{\min}$ if the system is feasible.
Otherwise, the system is infeasible.
If $z$ is finite, then we are again done because there are no violated arcs in $\ole{G}$ with respect to $z$.
So, we may assume that $z_v = \infty$ for some $v\in V$.

If $y_w = z_w = \infty$ for some $w\in V$, then we know that $w$ cannot reach a flow-absorbing cycle in $G$ and $\ole{G}$. 
The inability to reach a flow-absorbing cycle in $\ole{G}$ is equivalent to the inability to be reached by a flow-generating cycle in $G$. 
Denote $W:=\set{w\in V:y_w = z_w = \infty}$.
Observe that every node $w\in W$ is not strongly connected to any $v\notin W$ in $G$.
Thus, checking the feasibility of the system amounts to checking whether there exists a negative unit-gain cycle in $G[W]$.
This can be done by running \textsc{Grapevine} on $G[W]$.
Let $C_1, C_2,\dots, C_k$ be the sink components in the strongly connected component decomposition of $G[W]$, and pick any $v_i\in V(C_i)$ for all $i\in [k]$.
Then, the input node labels $y'\in \bar{\R}^W$ to {\sc Grapevine} are set as $y'_{v_i}\in \R$ for all $i\in [k]$ and $y'_v := \infty$ for all other nodes. 
Let $z'\in \R^W$ be the returned node labels.
It is easy to see that there exists a negative unit-gain cycle in $G[W]$ if and only if there exists a violated arc in $G[W]$ with respect to $z'$. 

If the check above reveals that the system is feasible, then we have $y = y^{\max}$ and $-z = y^{\min}$ by Theorem \ref{thm:correct}.
Then, we can apply a result of Aspvall and Shiloach which states that the interval $[y^{\min}_u, y^{\max}_u]$ is the projection of the feasible region onto the coordinate $y_u$ for every $u\in V$.
To obtain a feasible solution, we simply fix a coordinate $y_u\in [y^{\min}_u, y^{\max}_u]$, update $y^{\min}$ and $y^{\max}$ using a generic label-correcting algorithm like {\sc Grapevine}, and repeat.

\section{2VPI Analysis without Acceleration}
\label{sec:newton_analysis}

In this section, we analyze the convergence of Algorithm \ref{alg:main} when the look-ahead Newton--Dinkelbach method is replaced with the standard version.
Interestingly, we also obtain a strongly polynomial runtime in this case, albeit slower than the accelerated version by a factor of $O(\log n)$. 
To achieve the desired runtime, we slightly strengthen Lemma \ref{lem:goemans_m}, whose proof remains largely the same.

\begin{lemma}\label{lem:goemans_n}
Let $c\in \R^m_+$ and $x^{(1)},x^{(2)},\dots,x^{(k)}\in \Z^m$ such that $\norm{x^{(i)}}_1 \leq n$ for all $i\in [k]$. If 
\[0 < c^{\top}x^{(i+1)} \leq \frac{1}{2}c^{\top}x^{(i)}\]
for all $i<k$, then $k = O(m\log n)$.
\end{lemma}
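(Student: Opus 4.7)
The plan is to mimic the proof of Lemma~\ref{lem:goemans_m}, adjusting the witness argument to account for integer entries of magnitude up to $n$. After permuting coordinates I may assume $c_1 \geq c_2 \geq \ldots \geq c_m \geq 0$. For each iteration $i \in [k]$ I would introduce the witness index
\[
L_i \;:=\; \min\{j \in [m] : x^{(i)}_j \geq 1\},
\]
which is well-defined because $\alpha_i := c^{\top} x^{(i)} > 0$ forces some coordinate to carry a positive integer value. Using the hypothesis $\norm{x^{(i)}}_1 \leq n$ together with monotonicity of $c$, I would derive the upper bound
\[
\alpha_i \;\leq\; \sum_{j:\, x^{(i)}_j > 0} c_j\, x^{(i)}_j \;\leq\; c_{L_i}\, \sum_{j:\, x^{(i)}_j > 0} x^{(i)}_j \;\leq\; n\, c_{L_i},
\]
which is the natural analog, with $n$ replacing $m$, of the bound used in the proof of Lemma~\ref{lem:goemans_m}.

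Next I would partition the iterations into maximal intervals (``phases'') on which $L_i$ is constant; since $L_i \in \{1,\ldots,m\}$, there are at most $m$ phases. Within a phase with $L_i = L$, the bound $\alpha_i \leq n c_L$ combined with the halving $\alpha_{i+1} \leq \alpha_i/2$ shows that $\alpha_i$ shrinks below $c_L$ within $O(\log n)$ iterations. Provided a matching lower bound of the form $\alpha_i \gtrsim c_L$ can be maintained through the end of the phase---or equivalently, provided the phase must terminate once $\alpha_i$ has dropped below $c_L$---each phase has length $O(\log n)$, and the total bound $k \leq m \cdot O(\log n) = O(m \log n)$ follows.

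The genuine departure from the proof of Lemma~\ref{lem:goemans_m}, and the main obstacle, is the lower-bound/termination step. In the $\{-1,0,1\}$-case each negative contribution carries only a single unit of $\ell_1$-mass, which keeps the negative total manageable relative to $c_L$; in the integer setting a single coordinate $j < L$ with $x^{(i)}_j = -t$ contributes $-t\, c_j$, potentially swamping the $c_L$-contribution from $L$. I would argue a dichotomy: if $\alpha_i$ dips far below $c_L$ while $L_i = L$, then the required negative balance, combined with the $\ell_1$-budget $\norm{x^{(i)}}_1 \leq n$ and the further halving of $\alpha$ in subsequent iterations, forces either some index $j < L$ to satisfy $x^{(i')}_j \geq 1$ at some later iteration $i'$ (decreasing $L$) or $x^{(i')}_L \leq 0$ (increasing $L$). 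Making this precise---tracking how the $\ell_1$-budget of negative mass is consumed under repeated halvings---is the technical heart of the extension, and the resulting $O(\log n)$ iterations per phase (rather than the $O(\log m)$ of Lemma~\ref{lem:goemans_m}) is exactly what yields the $O(m \log n)$ bound.
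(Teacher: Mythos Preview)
Your proposal has two genuine gaps. First, the claim ``since $L_i \in \{1,\ldots,m\}$, there are at most $m$ phases'' is not justified: you have not shown that the witness index $L_i$ is monotone, so the number of maximal constant runs of $L_i$ is not bounded by $m$ a priori (nothing you wrote rules out $L_i$ alternating between two values). Whatever structural property makes the $\{-1,0,1\}$ argument go through, you would need to re-establish it in the integer setting, and you have not. Second, and more seriously, you explicitly flag the lower-bound/termination step as ``the main obstacle'' and ``the technical heart of the extension,'' but then offer only a vague dichotomy. Saying that the argument requires ``tracking how the $\ell_1$-budget of negative mass is consumed under repeated halvings'' is a description of what a proof would have to accomplish, not a proof; and your dichotomy (either $L$ decreases or $L$ increases) does not by itself preclude $L$ from returning to its earlier value later, which feeds back into the first gap.

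The paper takes a completely different route that avoids these combinatorial difficulties entirely. It considers the polyhedron
\[
P \;=\; \bigl\{\, z \in \R^m_+ : (x^{(i)} - 2x^{(i+1)})^\top z \geq 0 \text{ for all } i<k,\; (x^{(k)})^\top z = 1 \,\bigr\},
\]
which is nonempty (it contains $c / c^\top x^{(k)}$) and line-free, hence has an extreme point $c'$. That extreme point is determined by a nonsingular $m \times m$ integer submatrix whose rows have $\ell_1$-norm at most $3n$; Cramer's rule together with Hadamard's inequality gives $c'_j \leq (3n)^m$ for each $j$. Chaining the defining inequalities yields
\[
1 \;=\; (x^{(k)})^\top c' \;\leq\; 2^{-(k-1)}\, (x^{(1)})^\top c' \;\leq\; 2^{-(k-1)}\, n\, (3n)^m,
\]
whence $k = O(m \log n)$. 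This polyhedral argument passes from the $\{-1,0,1\}$ case to the integer case simply by replacing the row-norm bound $3$ with $3n$, with no combinatorial case analysis at all.
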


\begin{proof}[Proof of Lemma \ref{lem:goemans_n}]
Consider the polyhedron $P\subseteq \R^m$ defined by the following constraints:
\begin{align*}
	(x^{(i)} - 2x^{(i+1)})^{\top}z &\geq 0 \qquad \forall i<k \\
	(x^{(k)})^{\top}z &= 1 \\
	z &\geq 0.
\end{align*}
Let $A\in \R^{(k+m)\times m}$ and $b\in \R^{k+m}$ denote the coefficient matrix and right-hand side vector of this system. The polyhedron $P$ is nonempty because it contains the vector $c/(x^{(k)})^{\top}c$. Moreover, since $P$ does not contain a line, it has an extreme point. So there exists a vector $c'\in P$ such that $A'c'=b'$ for some nonsingular submatrix $A'\in \R^{m\times m}$ of the matrix $A$ and a subvector $b'\in \R^m$ of the vector $b$. Cramer's rule says that for each $i\in [m]$,
\[c'_i = \frac{\det A'_i}{\det A'}\]
where the matrix $A'_i$ is obtained from matrix $A'$ by replacing the $i$-th column with vector $b'$. The 1-norm of the rows of $A'_i$ is bounded by $3n$ and so by Hadamard's inequality $|\det(A_i')| \le (3n)^m$.

As the matrix $A'$ is nonsingular, we also have $\abs{\det A'}\geq 1$, which implies that $c'_i \leq (3n)^m$ for all $i\in [m]$. Finally, using the constraints which define the polyhedron $P$, we obtain
\[1 = (x^{(k)})^{\top}c'\leq \frac{(x^{(1)})^{\top}c'}{2^{k-1}}	 \leq \frac{n(3n)^m}{2^{k-1}}.\]
So, $k \leq \log(3^m n^{m+1}) + 1 = O(m \log n)$ as desired.
\end{proof}

Fix a phase $k\in [n]$ and denote $m_k = |E(G^{(k)})|$.
It is helpful to classify the iterations of the Newton--Dinkelbach method based on the magnitude by which the supergradient changes.
Recall that the supergradient at the start of iteration $i>1$ is given by $\gamma(P^{(i)})-1$, where $P^{(i)}$ is the $u$-$u'$ path returned by {\sc Grapevine} in the previous iteration.

\begin{definition}
For every $i> 1$, we say that iteration $i$ is \emph{good} if $1-\gamma(P^{(i)})\leq \frac12(1-\gamma(P^{(i-1)}))$.
Otherwise, we say that it is \emph{bad}.
\end{definition}

The next lemma gives a strongly polynomial bound on the number of good iterations.

\begin{lemma}\label{lem:good}
In each phase $k\in [n]$, the number of good iterations is $O(m_k\log k)$.
\end{lemma}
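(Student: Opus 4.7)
The plan is to derive the bound by applying \Cref{lem:goemans_n} to the subsequence of paths arising from good iterations.

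First, enumerate the good iterations of phase $k$ as $i_1 < i_2 < \cdots < i_q$. By \Cref{lem:monotone} the supergradient sequence $g^{(i)} = \gamma(P^{(i)}) - 1 < 0$ is monotonically nondecreasing, so $1 - \gamma(P^{(i)})$ is monotonically nonincreasing in $i$. Combined with the good-iteration definition, this yields the geometric decay
\[
1 - \gamma(P^{(i_{j+1})}) \leq \tfrac12 \bigl(1 - \gamma(P^{(i_{j+1}-1)})\bigr) \leq \tfrac12 \bigl(1 - \gamma(P^{(i_j)})\bigr)
\]
for all $1 \leq j < q$, i.e., $1 - \gamma(P^{(i_j)})$ halves along the good-iteration subsequence.

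Next, I would encode this multiplicative decay as a nonnegative linear functional on $\{0,1\}$-vectors with small $1$-norm. Let $x^{(j)} \in \{0,1\}^{m_k}$ be the characteristic vector of $E(P^{(i_j)})$. Any simple $u$-to-$u'$ path $P^{(i_j)}$ in $G^{(k)}_u$ has every internal vertex drawn from the $k - 1$ nodes processed in phases $1, \dots, k-1$, so $\|x^{(j)}\|_1 \leq k$. Setting $c_e := \log_2(1/\gamma_e)$, the telescoping identity $\log \gamma(P) = \sum_{e \in P} \log \gamma_e$ gives $c^\top x^{(j)} = \log_2(1/\gamma(P^{(i_j)})) > 0$. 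An elementary AM--GM estimate---for $\gamma_1, \gamma_2 \in (0,1)$, the hypothesis $1 - \gamma_2 \leq \tfrac12(1-\gamma_1)$ forces $\gamma_2 \geq (1+\gamma_1)/2 \geq \sqrt{\gamma_1}$ and hence $\log_2(1/\gamma_2) \leq \tfrac12 \log_2(1/\gamma_1)$---transfers the halving from $1 - \gamma(P^{(i_j)})$ to $c^\top x^{(j)}$.

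The main obstacle is that $c_e = \log_2(1/\gamma_e)$ can be negative for arcs with $\gamma_e > 1$ (flow-generating arcs), so this raw choice violates the hypothesis $c \in \R^{m_k}_+$ of \Cref{lem:goemans_n}. I expect to handle this via subpath monotonicity (\Cref{lem:subpath}) in the spirit of \Cref{thm:runtime}: either by an arc-elimination argument that bounds the persistence of flow-generating arcs in later good-iteration paths, or by replacing $c$ with a nonnegative proxy---such as a reduced cost with respect to the pointwise maximal solution of a feasibility-shifted subsystem---while preserving the geometric decay up to constant factors. Once a valid $c \in \R^{m_k}_+$ is secured, applying \Cref{lem:goemans_n} with dimension $m_k$ and $1$-norm bound $k$ gives $q = O(m_k \log k)$, which is the claim.
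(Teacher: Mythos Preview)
Your overall framework matches the paper's proof exactly: restrict to the good-iteration subsequence, use the AM--GM step $(1-\gamma_2)\le\tfrac12(1-\gamma_1)\Rightarrow\gamma_2\ge\sqrt{\gamma_1}$ to convert halving of $1-\gamma$ into halving of $-\log\gamma$, and then invoke \Cref{lem:goemans_n}. Your observation that monotonicity of the supergradients (\Cref{lem:monotone}) bridges non-consecutive good iterations is correct and is in fact a detail the paper leaves implicit.

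The gap is in how you resolve the sign issue, and here you are overcomplicating things. Neither subpath monotonicity nor a reduced-cost proxy is needed. The key point you are missing is that \Cref{lem:goemans_n} only requires $c\in\R^{m_k}_+$; the vectors $x^{(j)}$ are allowed to lie in $\Z^{m_k}$ with arbitrary signs, subject only to the $\ell_1$ bound. So instead of taking $c_e=\log(1/\gamma_e)$ and $x^{(j)}$ the $\{0,1\}$-incidence vector, simply shift the signs: set
\[
c_e := \abs{\log\gamma_e}\ge 0,\qquad x^{(j)}_e := -\sgn(\log\gamma_e)\cdot\1\bigl[e\in E(P^{(i_j)})\bigr].
\]
Then $c^\top x^{(j)}=\sum_{e\in P^{(i_j)}}(-\log\gamma_e)=-\log\gamma(P^{(i_j)})>0$, the halving $c^\top x^{(j+1)}\le\tfrac12 c^\top x^{(j)}$ follows from your AM--GM step, and $\norm{x^{(j)}}_1\le k$ since each $P^{(i_j)}$ is a simple path in $G^{(k)}_u$. \Cref{lem:goemans_n} now applies directly and gives $q=O(m_k\log k)$. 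This is precisely what the paper does; the fixes you propose would work in principle but are unnecessary here.
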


\begin{proof}
Let $\mathcal{P}$ be a sequence of $u$-$u'$ paths in $G^{(k)}_u$  at the start of every iteration of the Newton--Dinkelbach method. 
Let $\mathcal{P}^* =(P^{(1)}, P^{(2)}, \dots, P^{(t)})$ be the subsequence of $\mathcal{P}$ restricted to good iterations. 
We claim that $\gamma(P^{(i+1)})\geq \sqrt{\gamma(P^{(i)})}$ for all $i<t$. 
We use the simple inequality that $(1-x)/2\leq 1-\sqrt{x}$ for all $x\in \R_+$; one can derive this by rearranging $(\sqrt{x}-1)^2/2\ge 0$.
 This gives
\[1-\gamma(P^{(i+1)}) \leq \frac{1}{2}\pr{1-\gamma(P^{(i)})} \leq 1 -\sqrt{\gamma(P^{(i)})},\]
which proves the claim.
 Next, enumerate the arcs of each path by $P^{(i)} = (e^{(i)}_1, e^{(i)}_2, \dots, e^{(i)}_{\ell_i})$. By taking logarithms, the claim can be equivalently stated as
\[\sum_{j=1}^{\ell_{i+1}} \log\gamma_{e^{(i+1)}_j} \geq \frac{1}{2}\sum_{j=1}^{\ell_i} \log\gamma_{e^{(i)}_j}.\]
Note that both sides of the expression above are negative because $\gamma(P^{(i)})< 1$ for all $i\in [t]$. 
Let $c\in \R^{m_k}_+$ be the vector defined by $c_e = \abs{\log\gamma_e}$ for all $e\in E(G^{(k)})$. In addition, for every $i\in[t]$, define the vector $x^{(i)}\in \Z^m$ as 
\[x^{(i)}_e = -\sgn(\log \gamma_e)
	\size{\set{j\in [\ell_i]: e^{(i)}_j = e}}.\]
Then, we obtain
\[0 < c^{\top}x^{(i+1)} = \sum_{j=1}^{\ell_{i+1}} -\log\gamma_{e^{(i+1)}_j} \leq \frac{1}{2}\sum_{j=1}^{\ell_i} -\log\gamma_{e^{(i)}_j} = \frac{1}{2}c^{\top}x^{(i)}.\]
for all $i<t$. Since $\|x^{(i)}\|_1\leq k$ for all $i\in [t]$, we conclude that $t=O(m_k\log k)$ by Lemma \ref{lem:goemans_n}. 
\end{proof}

It is left to bound the number of bad iterations. 
We approach this by arguing that in a strongly polynomial number of bad iterations, an arc will no longer appear in future paths produced by the Newton--Dinkelbach method. 

\begin{lemma}\label{lem:bad}
In each phase $k\in [n]$, the number of bad iterations is $O(m_k\log k)$.
\end{lemma}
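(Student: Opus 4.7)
My plan follows the template of the proof of Theorem \ref{thm:runtime}, restricted to the subsequence of bad iterations $i_1<i_2<\cdots<i_b$, but using a combinatorial identity that captures the weaker $|f|$-decay that holds in the standard Newton--Dinkelbach method.

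First, a bad iteration satisfies $g^{(i_j)}/g^{(i_j-1)} > 1/2$ by definition, and the moreover inequality of Lemma \ref{lem:monotone} then forces $|f(\delta^{(i_j)})| < \tfrac12\,|f(\delta^{(i_j-1)})|$. Combined with monotonicity of $|f|$ across all iterations, this gives $|f(\delta^{(i_{j+1})})| < \tfrac12\,|f(\delta^{(i_j)})|$ along the bad-iteration subsequence. To turn this into a combinatorial statement, I would adopt the reduced-cost setup of the proof of Theorem \ref{thm:runtime}: let $c^*$ denote reduced cost with respect to the pointwise-maximal dual $y^*$ to $f(\delta^*)$ (with $y^*_{u'}:=\delta^*$), so that $c^*(P^{(i)}) = D_f(\delta^*,\delta^{(i)})$ is a nonnegative sum of arc-indexed contributions weighted by prefix gain factors of $P^{(i)}$. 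Using the Newton-step identity $\delta^{(i)} = c(P^{(i-1)})/(1-\gamma(P^{(i-1)}))$ together with $g^{(i)}=\gamma(P^{(i)})-1$, a short calculation yields the key recursion
\[
|f(\delta^{(i)})| \;=\; \frac{g^{(i)}}{g^{(i-1)}}\,c^*(P^{(i-1)}) \;-\; c^*(P^{(i)})\, .
\]

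Next, subpath monotonicity (Lemma \ref{lem:subpath}) still applies along the subsequence $(P^{(i_j)})_j$, so the arc-elimination framework of Theorem \ref{thm:runtime} carries over: one defines the weight vector $x\in\R_+^{m_k}$ by $x_{vw} := \max\{\gamma(P^{(i_j)}_{uv}) : vw\in E(P^{(i_j)})\}$, sorts arcs by $c^*_e x_e$ into $m_k$ intervals via the cumulative sums $d_i$, and then bounds the number of bad-iteration paths per interval.

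The main obstacle is this last step. Since only $|f|$---and not $c^*(P^{(i)})$ itself---halves at each bad iteration, the recursion above only indirectly constrains consecutive values of $c^*(P^{(i_j)})$. To extract a per-interval bound, one would combine the $|f|$-decay with the identity above and the inequality $g^{(i)}/g^{(i-1)} \leq 1$ to show that within any window where $c^*(P)$ stays in a single interval, only $O(\log k)$ bad iterations can accumulate before the combinatorial quantity is forced into the next interval. Plugging this $O(\log k)$-per-interval count into Lemma \ref{lem:goemans_n} yields the claimed bound $b = O(m_k \log k)$.
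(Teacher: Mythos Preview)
Your setup is sound: the $|f|$-halving at bad iterations follows from Lemma~\ref{lem:monotone} as you say, and the recursion
\[
|f(\delta^{(i)})| \;=\; \frac{g^{(i)}}{g^{(i-1)}}\,c^*(P^{(i-1)}) - c^*(P^{(i)})
\]
is correct (it drops out of $\delta^*-\delta^{(i)} = c^*(P^{(i-1)})/g^{(i-1)}$). The problem is the step you flag yourself. The interval argument of Theorem~\ref{thm:runtime} eliminates arcs because the quantity $c^*(P^{(i)})$ itself decays geometrically; here you only have geometric decay of $|f|$, and the recursion does \emph{not} transfer that to $c^*(P^{(i)})$. Indeed, if $g^{(i)}/g^{(i-1)}$ stays close to $1$ for a long stretch (which is exactly what happens along good iterations interleaved with bad ones), $c^*(P)$ can decrease arbitrarily slowly while $|f|$ halves at every bad step. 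Your sketch offers no mechanism to force $c^*(P)$ out of an interval after $O(\log k)$ bad iterations, and I do not see one using only the listed ingredients. (Also, the closing appeal to Lemma~\ref{lem:goemans_n} is misplaced: if you had an $O(\log k)$-per-interval bound you would be done directly; that lemma is irrelevant here.)

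The paper's proof sidesteps this by \emph{not} using the optimal reduced cost $c^*$ at all. It fixes a reference iteration $j$ and works with reduced costs $c'$ relative to the \emph{current} labels $y^{(j)}$, which are nonnegative except possibly on $\delta^+(u)$. It then shows $|f(\bar y^{(j)}_{u'})| < \|c'\circ x\|_\infty$ once $\log(2k)$ bad iterations have elapsed (using your $|f|$-halving), introduces shifted costs $d \ge 0$ obtained from $c'$ by adding $|f(\bar y^{(j)}_{u'})|$ on $\delta^+(u)$, and proves directly that the arc $e^* = \arg\max_e |d_e x_e|$ cannot appear in any later path (via the identity $c'(P^{(i)})/(1-\gamma(P^{(i)})) = \bar y^{(i+1)}_{u'} - \bar y^{(j)}_{u'} < 0$ for $i>j$). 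This is a moving-frame arc-elimination argument rather than an interval-crossing one, and it is genuinely different from what you propose.
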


\begin{proof}
Let $\bar{\mathcal{Y}} = (\bar{y}^{(1)}, \bar{y}^{(2)}, \dots, \bar{y}^{(\ell)})$ and $\mathcal{P} = (P^{(1)}, P^{(2)}, \dots, P^{(\ell)})$ be a sequence of node labels and $u$-$u'$ paths in $G^{(k)}_u$ respectively at the start of every iteration of the Newton--Dinkelbach method. 
Without loss of generality, we may assume that $\bar{y}^{(i)}$ is finite for all $i\in [\ell]$.
For each $i\in [\ell]$, define $y^{(i)}\in \R^n$ as $y^{(i)}_u := \bar{y}^{(i)}_{u'}$ and $y^{(i)}_v := \bar{y}^{(i)}_v$ for all $v\notin \set{u,u'}$.
Now, pick an iteration $j\in [\ell]$ such that more than $\log(2n)$ bad iterations have elapsed.
Consider the reduced cost $c'\in \R^{m_k}$ given by $c'_{vw} := c_{vw} + \gamma_{vw}y^{(j)}_w - y^{(j)}_v$ for all $vw\in E(G^{(k)})$. 
Note that $c'_{vw}\geq 0$ for all $v\neq u$. 

According to Lemma \ref{lem:grapevine}, each $P^{(i)}$ is a shortest $u$-$u'$ path with respect to $\bar{y}^{(i)}$.
By complementary slackness, the unit flow on $P^{(i)}$ is an optimal primal solution to $f(\bar{y}^{(i)}_{u'})$.
Since $\bar{y}^{(i)}_{u'}>\bar{y}^{(i+1)}_{u'}$ for all $i<\ell$, the sequence $\mathcal{P}$ satisfies subpath monotonicity at $u$ by Lemma \ref{lem:subpath}.
Define the vector $x\in \R^m_{+}$ as
\[x_{vw} := \begin{cases}
	\max_{i\in [\ell]} \set{\gamma(P^{(i)}_{uv}):vw\in E(P^{(i)})} &\text{ if }vw\in \cup_{i=1}^\ell E(P^{(i)}),\\
	0 &\text{ otherwise.}
\end{cases}\]
Observe that $x_{vw}$ is the gain factor of the $u$-$v$ subpath of the last path in $\mathcal{P}$ which contains $vw$, due to subpath monotonicity. 

\begin{claim}\label{cl:h-bound}
We have $-f(\bar{y}^{(j)}_{u'}) < \norm{c'\circ x}_\infty$.
\end{claim}

\begin{proof}
For every $i\in [\ell]$, we have
\[f(\bar{y}^{(i)}_{u'}) = c(P^{(i)}) - \bar{y}^{(i)}_{u'}(1-\gamma(P^{(i)})) = c'(P^{(i)}) - (\bar{y}^{(i)}_{u'} - \bar{y}^{(j)}_{u'})(1-\gamma(P^{(i)})).\]
By applying the definition of $\bar{y}^{(i)}_{u'}$, we can upper bound its negation by
\[-f(\bar{y}^{(i)}_{u'}) = -c'(P^{(i)}) + \frac{1-\gamma(P^{(i)})}{1-\gamma(P^{(i-1)})} c'(P^{(i-1)}) \leq \abs{c'(P^{(i)})} + \abs{c'(P^{(i-1)})} \leq 2k\norm{c'\circ x}_\infty.\]
Lemma \ref{lem:monotone} tells us that $-f(\bar{y}^{(i)}_{u'})$ is nonnegative and monotonically decreasing.
Moreover, it decreases geometrically by a factor of $1/2$ during bad iterations. 
Hence, by our choice of $j$, we obtain
\[-f(\bar{y}^{(j)}_{u'}) < \left(\frac12\right)^{\log(2n)} \cdot 2k\norm{c'\circ x}_\infty = \norm{c'\circ x}_\infty. \qedhere\] 
\end{proof}

Let $d\in \R^{m_k}$ be the arc costs defined by
\[d_{vw} = \begin{cases}
	c'_{vw} &\text{ if }v\neq u,\\
	c'_{vw} - f(\bar{y}^{(j)}_{u'}) &\text{ if }v=u.
\end{cases}\]
Since $f(\bar{y}^{(j)}_{u'}) = \bar{y}^{(j)}_u - \bar{y}^{(j)}_{u'}$, observe that $d\geq 0$ because $\bar{y}^{(j)}_V$ is feasible to the dual LP for $f(\bar{y}^{(j)}_{u'})$.

\begin{claim}\label{clm:norm}
We have $\norm{d\circ x}_\infty \geq \norm{c'\circ x}_\infty$.
\end{claim}

\begin{proof}
Let $e^*=\argmax_{e\in E(G^{(k)})}\abs{c'_ex_e}$. The claim is trivial unless $e^*\in \cup_{i=1}^\ell E(P^{(i)})$ and the tail of $e^*$ is $u$.
Since $f(\bar{y}^{(j)}_{u'})\leq 0$ and $d_{e^*} = c'_{e^*} - f(\bar{y}^{(j)}_{u'})$, it suffices to show that $c'_{e^*}\geq 0$. 
For the purpose of contradiction, suppose that $c'_{e^*}< 0$. 
Since $d_{e^*} \geq 0$, this implies that $|c'_{e^*}| \leq  -f(\bar{y}^{(j)}_{u'}) < \norm{c'\circ x}_\infty$ using \Cref{cl:h-bound}. 
By the definition of $x$, $x_{e^*}=1$ because $e^*$ is the first arc of any path in $\mathcal{P}$ which uses it. 
However, this implies that
\[\abs{c'_{e^*}} = \abs{c'_{e^*}x_{e^*}} = \norm{c'\circ x}_\infty,\]
which is a contradiction.
\end{proof}

Consider the arc $e^* := \argmax_{e\in E} \abs{d_ex_e}$. 
We claim that $e^*$ does not appear in subsequent paths in $\mathcal{P}$ after iteration $j$. 
For the purpose of contradiction, suppose that there exists an iteration $i>j$ such that $e^*\in E(P^{(i)})$. 
Pick the iteration $i$ such that $P^{(i)}$ is the last path in $\mathcal{P}$ which contains $e^*$.
Since the iterates $\bar{y}_{u'}^{(\cdot)}$ are monotonically decreasing, we have
\[0 > \bar{y}^{(i+1)}_{u'} - \bar{y}^{(j)}_{u'} = \frac{c(P^{(i)})}{1-\gamma(P^{(i)})} - \bar{y}^{(j)}_{u'} = \frac{c'(P^{(i)})}{1-\gamma(P^{(i)})} = \frac{d(P^{(i)}) - f(\bar{y}^{(j)}_{u'})}{1-\gamma(P^{(i)})}\, \]
This implies that $d(P^{(i)}) < f(\bar{y}^{(j)}_{u'}) < \norm{c'\circ x}_\infty$. 
However, it contradicts
\[d(P^{(i)}) \geq  d_{e^*}x_{e^*} = \norm{d\circ x}_{\infty} \geq \norm{c'\circ x}_\infty,\]
where the first inequality is due to our choice of $i$ and the nonnegativity of $d$, while the second inequality is due to Claim \ref{clm:norm}.
Repeating the argument above for $m$ times yields the desired bound on the number of bad iterations.
\end{proof}

The runtime of every iteration of the Newton--Dinkelbach method is dominated by {\sc Grapevine}.
Thus, following the discussion in Appendix \ref{sec:2vpi_finite}, we obtain the following result. 

\begin{corollary}
If we replace Algorithm \ref{alg:newton_fast} with the Newton--Dinkelbach method in Algorithm \ref{alg:main}, then it solves the feasibility of M2VPI linear systems in $O(m^2n^2\log n)$ time.
\end{corollary}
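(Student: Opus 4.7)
The plan is a short bookkeeping aggregation of the lemmas already proved in Appendix \ref{sec:newton_analysis}. I would first fix a phase $k \in [n]$ of Algorithm \ref{alg:main} and set $m_k := |E(G^{(k)})| \leq m$. Lemmas \ref{lem:good} and \ref{lem:bad} together bound the total number of standard Newton--Dinkelbach iterations in that phase by $O(m_k \log k) = O(m \log n)$. Each iteration invokes \textsc{Grapevine} once on the modified graph $G^{(k)}_u$ at a cost of $O(mn)$, so each phase takes $O(m^2 n \log n)$ time and the $n$ phases together take $O(m^2 n^2 \log n)$ time.

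The remaining work fits within the same bound. The multiplicative Bellman--Ford call made at the start of each phase to search for a flow-absorbing cycle at the newly admitted node costs $O(mn)$, contributing only $O(mn^2)$ overall. Converting the labels $y$ returned by Algorithm \ref{alg:main} into a finite feasible solution, or an infeasibility certificate, follows the recipe in Appendix \ref{sec:2vpi_finite}: (i) one rerun of Algorithm \ref{alg:main} on the reversed graph $\overline{G}$, costing another $O(m^2 n^2 \log n)$; (ii) a single \textsc{Grapevine} call on $G[W]$ to detect any negative unit-gain cycle, at $O(mn)$; and (iii) at most $n$ rounds of fixing one coordinate and propagating via \textsc{Grapevine}, totalling $O(mn^2)$. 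All of these are dominated by the main Newton--Dinkelbach cost.

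I do not foresee any genuine obstacle: the corollary is essentially a routine summation of Lemmas \ref{lem:good} and \ref{lem:bad} together with the per-iteration and post-processing costs. The only care needed is to bound the per-phase count $O(m_k \log k)$ by $O(m \log n)$ before multiplying by the per-iteration time and summing over $n$ phases, rather than, say, trying to exploit the fact that $\sum_k (m_k - m_{k-1}) = m$ (which would require a more delicate argument and is unnecessary for the target $O(m^2 n^2 \log n)$ bound).
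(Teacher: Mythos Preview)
Your proposal is correct and follows essentially the same approach as the paper: the paper derives the corollary in one sentence by noting that each Newton--Dinkelbach iteration is dominated by a \textsc{Grapevine} call and then invoking the discussion in Appendix~\ref{sec:2vpi_finite}, which is exactly the per-phase iteration bound from Lemmas~\ref{lem:good} and~\ref{lem:bad} times the $O(mn)$ per-iteration cost, summed over $n$ phases. Your explicit accounting of the initialization and post-processing costs is more detailed than what the paper writes out, but the argument is the same.
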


\end{toappendix}

\end{document}